	\newcounter{theorem_c} 
	\numberwithin{theorem_c}{section} 
	\numberwithin{equation}{section} 
	\theoremstyle{plain} 
	\newtheorem{theorem}[theorem_c]{Theorem}
	\newtheorem{lemma}[theorem_c]{Lemma}
	\newtheorem{corollary}[theorem_c]{Corollary}
	\theoremstyle{definition} 
	\newtheorem{definition}[theorem_c]{Definition}
	\newtheorem{example}[theorem_c]{Example}
	\newcommand\numberthis{\addtocounter{equation}{1}\tag{\theequation}} 
	\newcommand{\integers}{\mathbb{Z}} 
	\newcommand{\reals}{\mathbb{R}} 
	\newcommand{\integersMod}[1]{\mathbb{Z}_{#1}} 
	\newcommand{\modclass}[2]{#1 \; (\text{mod } #2)} 
	\newcommand{\suchthat}[2]{\left\{#1 \: \text{ s.t. } \: #2\right\}} 
		\newcommand{\ket}[1]{\vert #1 \rangle} 
		\newcommand{\bra}[1]{\langle #1 \vert} 
		\newcommand{\braket}[2]{\langle #1 \vert #2 \rangle} 
		\newcommand{\SpaceH}{\mathcal{H}}
		\newcommand{\isom}{\cong} 
		\newcommand{\tensor}{\otimes} 
		\newcommand{\fdHilbCategory}{\operatorname{fdHilb}} 
		\newcommand{\fRelCategory}{\operatorname{fRel}} 
		\newcommand{\CategoryC}{\mathcal{C}}
	\newcommand{\Xcolour}{Red}
	\newcommand{\groupStructColour}{\Xcolour}
	\newcommand{\XmultSym}{\hbox{\begin{tikzpicture} [scale=1,transform shape] 

\def\deltax{0.3} 
\def\deltay{0.5} 


\node (mult_label_inl) at (-\deltax,-\deltay) {};
\node (mult_label_inr) at (+\deltax,-\deltay) {};
\node [dot, fill=\groupStructColour] (mult) at (0,0) {};
\node (mult_label_out) at (0,+\deltay) {};

\draw[-] [out=90,in=225](mult_label_inl) to (mult);
\draw[-] [out=90,in=315](mult_label_inr) to (mult);
\draw[-] (mult) to (mult_label_out);

\end{tikzpicture}}\!}
	\newcommand{\XcounitSym}{\hbox{\begin{tikzpicture} [scale=1,transform shape] 

\def\deltax{0.3} 
\def\deltay{0.5} 


\node [dot, fill=\groupStructColour] (mult) at (0,0) {};
\node (mult_label_in) at (0,-\deltay) {};
\draw[-] (mult_label_in) to (mult);

\end{tikzpicture}}\!}
	\newcommand{\Zcolour}{YellowGreen}
	\newcommand{\classicalStructColour}{\Zcolour}
	\newcommand{\ZmultSym}{\hbox{\begin{tikzpicture} [scale=1,transform shape] 

\def\deltax{0.3} 
\def\deltay{0.5} 


\node (mult_label_inl) at (-\deltax,-\deltay) {};
\node (mult_label_inr) at (+\deltax,-\deltay) {};
\node [dot, fill=\classicalStructColour] (mult) at (0,0) {};
\node (mult_label_out) at (0,+\deltay) {};
\draw[-] [out=90,in=225](mult_label_inl) to (mult);
\draw[-] [out=90,in=315](mult_label_inr) to (mult);
\draw[-] (mult) to (mult_label_out);

\end{tikzpicture}}\!}
	\newcommand{\ZunitSym}{\hbox{\begin{tikzpicture} [scale=1,transform shape] 

\def\deltax{0.3} 
\def\deltay{0.5} 


\node [dot, fill=\classicalStructColour] (mult) at (0,0) {};
\node (mult_label_out) at (0,+\deltay) {};
\draw[-] (mult) to (mult_label_out);

\end{tikzpicture}}\!}
	\newcommand{\hbox{\input{modules/symbols/internaltimemultSym.tex}}}{\hbox{\input{modules/symbols/internaltimemultSym.tex}}} 
	\newcommand{\hbox{\input{modules/symbols/internaltimecomultSym.tex}}}{\hbox{\input{modules/symbols/internaltimecomultSym.tex}}} 
	\newcommand{\hbox{\input{modules/symbols/internaltimeunitSym.tex}}}{\hbox{\input{modules/symbols/internaltimeunitSym.tex}}} 
	\newcommand{\hbox{\input{modules/symbols/internaltimecounitSym.tex}}}{\hbox{\input{modules/symbols/internaltimecounitSym.tex}}} 
	\newcommand{\hbox{\input{modules/symbols/internaltimematchSym.tex}}}{\hbox{\input{modules/symbols/internaltimematchSym.tex}}} 
	\newcommand{\hbox{\input{modules/symbols/internaltimediagSym.tex}}}{\hbox{\input{modules/symbols/internaltimediagSym.tex}}} 
	\newcommand{\hbox{\input{modules/symbols/internaltimematchunitSym.tex}}}{\hbox{\input{modules/symbols/internaltimematchunitSym.tex}}} 
	\newcommand{\hbox{\input{modules/symbols/internaltrivialcharSym.tex}}}{\hbox{\input{modules/symbols/internaltrivialcharSym.tex}}} 
	\newcommand{\hbox{\input{modules/symbols/algebraSym.tex}}\!\!}{\hbox{\input{modules/symbols/algebraSym.tex}}\!\!}	
	\newcommand{\hbox{\input{modules/symbols/measurementSym.tex}}\!\!}{\hbox{\input{modules/symbols/measurementSym.tex}}\!\!} 
	\newcommand{\hbox{\input{modules/symbols/repSym.tex}}\!\!}{\hbox{\input{modules/symbols/repSym.tex}}\!\!}
	\newcommand{\hbox{\input{modules/symbols/mapSym.tex}}\!\!}{\hbox{\input{modules/symbols/mapSym.tex}}\!\!} 
	\newcommand{\hbox{\input{modules/symbols/mapconjSym.tex}}\!\!}{\hbox{\input{modules/symbols/mapconjSym.tex}}\!\!} 
	\tikzset{->-/.style={decoration={markings,mark=at position #1 with {\arrow{>}}},postaction={decorate}}}
	\tikzset{-<-/.style={decoration={markings,mark=at position #1 with {\arrow{<}}},postaction={decorate}}}
\def\swangle{-145}
\def\seangle{-35}
\def\nwangle{145}
\def\neangle{35}
\tikzstyle{env}=[copoint,regular polygon rotate=0,minimum width=0.2cm, fill=black]
\tikzstyle{probs}=[shape=semicircle,fill=white,draw=black,shape border rotate=180,minimum width=1.2cm]
\tikzstyle{every picture}=[baseline=-0.25em,scale=0.5]
\tikzstyle{dotpic}=[] 
\tikzstyle{diredges}=[every to/.style={diredge}]
\tikzstyle{math matrix}=[matrix of math nodes,left delimiter=(,right delimiter=),inner sep=2pt,column sep=1em,row sep=0.5em,nodes={inner sep=0pt},text height=1.5ex, text depth=0.25ex]
\tikzstyle{inline text}=[text height=1.5ex, text depth=0.25ex,yshift=0.5mm]
\tikzstyle{label}=[font=\footnotesize,text height=1.5ex, text depth=0.25ex,yshift=0.5mm]
\tikzstyle{left label}=[label,anchor=east,xshift=1.5mm]
\tikzstyle{right label}=[label,anchor=west,xshift=-1.5mm]
\tikzstyle{braceedge}=[decorate,decoration={brace,amplitude=2mm,raise=-1mm}]
\tikzstyle{small braceedge}=[decorate,decoration={brace,amplitude=1mm,raise=-1mm}]
\tikzstyle{doubled}=[line width=1.6pt] 
\tikzstyle{boldedge}=[doubled,shorten <=-0.17mm,shorten >=-0.17mm]
\tikzstyle{boldedgegray}=[doubled,gray,shorten <=-0.17mm,shorten >=-0.17mm]
\tikzstyle{semidoubled}=[line width=1.4pt] 
\tikzstyle{semiboldedgegray}=[semidoubled,gray,shorten <=-0.17mm,shorten >=-0.17mm]
\tikzstyle{boldedgedashed}=[very thick,dashed,shorten <=-0.17mm,shorten >=-0.17mm]
\tikzstyle{vboldedgedashed}=[doubled,dashed,shorten <=-0.17mm,shorten >=-0.17mm]
\tikzstyle{left hook arrow}=[left hook-latex]
\tikzstyle{right hook arrow}=[right hook-latex]
\tikzstyle{sembracket}=[line width=0.5pt,shorten <=-0.07mm,shorten >=-0.07mm]
\tikzstyle{causal edge}=[->,thick,gray]
\tikzstyle{causal nondir}=[thick,gray]
\tikzstyle{timeline}=[thick,gray, dashed]
\tikzstyle{cedge}=[<->,thick,gray!70!white]
\tikzstyle{empty diagram}=[draw=gray!40!white,dashed,shape=rectangle,minimum width=1cm,minimum height=1cm]
\tikzstyle{empty diagram small}=[draw=gray!50!white,dashed,shape=rectangle,minimum width=0.6cm,minimum height=0.5cm]
\tikzstyle{dot}=[inner sep=0mm,minimum width=3mm,minimum height=3mm,draw,shape=circle,text depth=-0.1mm]
\tikzstyle{smalldot}=[inner sep=0mm,minimum width=2mm,minimum height=2mm,draw,shape=circle,text depth=-0.1mm]
\tikzstyle{ddot}=[inner sep=0mm, doubled, minimum width=3.5mm,minimum height=3.5mm,draw,shape=circle]
\tikzstyle{black dot}=[dot,fill=black]
\tikzstyle{white dot}=[dot,fill=white,,text depth=-0.2mm]
\tikzstyle{green dot}=[white dot] 
\tikzstyle{gray dot}=[dot,fill=gray!40!white,,text depth=-0.2mm]
\tikzstyle{red dot}=[gray dot] 
\tikzstyle{black ddot}=[ddot,fill=black]
\tikzstyle{white ddot}=[ddot,fill=white]
\tikzstyle{gray ddot}=[ddot,fill=gray!40!white]
\tikzstyle{gray edge}=[gray!40!white]
\tikzstyle{small dot}=[inner sep=0.5mm,minimum width=0pt,minimum height=0pt,draw,shape=circle]
\tikzstyle{small black dot}=[small dot,fill=black]
\tikzstyle{small white dot}=[small dot,fill=white]
\tikzstyle{small gray dot}=[small dot,fill=gray!40!white]
\tikzstyle{causal dot}=[inner sep=0.4mm,minimum width=0pt,minimum height=0pt,draw=white,shape=circle,fill=gray!40!white]
\tikzstyle{phase dimensions}=[minimum size=5mm,font=\footnotesize,rectangle,rounded corners=2.5mm,inner sep=0.2mm,outer sep=-2mm,text height=1ex, text depth=0.25ex, yshift=0.5mm]
\tikzstyle{dphase dimensions}=[phase dimensions]
\tikzstyle{phase dot}=[dot,phase dimensions]
\tikzstyle{white phase dot}=[dot,fill=white,phase dimensions]
\tikzstyle{white phase ddot}=[ddot,fill=white,dphase dimensions]
\tikzstyle{white rect ddot}=[draw=black,fill=white,doubled,minimum size=5mm,font=\footnotesize,rectangle,rounded corners=2.5mm,inner sep=0.2mm]
\tikzstyle{gray rect ddot}=[draw=black,fill=gray!40!white,doubled,minimum size=6mm,font=\footnotesize,rectangle,rounded corners=3mm]
\tikzstyle{gray phase dot}=[dot,fill=gray!40!white,phase dimensions]
\tikzstyle{gray phase ddot}=[ddot,fill=gray!40!white,dphase dimensions]
\tikzstyle{grey phase dot}=[gray phase dot]
\tikzstyle{grey phase ddot}=[gray phase ddot]
\tikzstyle{cnot}=[fill=white,shape=circle,inner sep=-1.4pt]
\tikzstyle{hadamard}=[square box,inner sep=0 pt,font=\footnotesize,minimum height=4mm,minimum width=4mm]
\tikzstyle{dhadamard}=[hadamard,doubled]
\tikzstyle{antipode}=[white dot,inner sep=0.3mm,font=\footnotesize]
\tikzstyle{scalar}=[diamond,draw,inner sep=0.5pt,font=\small]
\tikzstyle{dscalar}=[diamond,doubled, draw,inner sep=0.5pt,font=\small]
\tikzstyle{small box}=[rectangle,inline text,fill=white,draw,minimum height=5mm,yshift=-0.5mm,minimum width=5mm,font=\small]
\tikzstyle{small gray box}=[small box,fill=gray!30]
\tikzstyle{medium box}=[rectangle,inline text,fill=white,draw,minimum height=5mm,yshift=-0.5mm,minimum width=10mm,font=\small]
\tikzstyle{square box}=[small box] 
\tikzstyle{medium gray box}=[small box,fill=gray!30]
\tikzstyle{semilarge box}=[rectangle,inline text,fill=white,draw,minimum height=5mm,yshift=-0.5mm,minimum width=12.5mm,font=\small]
\tikzstyle{large box}=[rectangle,inline text,fill=white,draw,minimum height=5mm,yshift=-0.5mm,minimum width=15mm,font=\small]
\tikzstyle{large gray box}=[small box,fill=gray!30]
\tikzstyle{gray square point}=[small box,fill=gray!50]
\tikzstyle{dphase box white}=[dbox]
\tikzstyle{dphase box gray}=[dbox,fill=gray!50!white]
\tikzstyle{point}=[regular polygon,regular polygon sides=3,draw,scale=0.75,inner sep=-0.5pt,minimum width=9mm,fill=white,regular polygon rotate=180]
\tikzstyle{copoint}=[regular polygon,regular polygon sides=3,draw,scale=0.75,inner sep=-0.5pt,minimum width=9mm,fill=white]
\tikzstyle{dpoint}=[point,doubled]
\tikzstyle{dcopoint}=[copoint,doubled]
\tikzstyle{wide copoint}=[fill=white,draw,shape=isosceles triangle,shape border rotate=90,isosceles triangle stretches=true,inner sep=0pt,minimum width=1.5cm,minimum height=6.12mm]
\tikzstyle{wide point}=[fill=white,draw,shape=isosceles triangle,shape border rotate=-90,isosceles triangle stretches=true,inner sep=0pt,minimum width=1.5cm,minimum height=6.12mm,yshift=-0.0mm]
\tikzstyle{wide point plus}=[fill=white,draw,shape=isosceles triangle,shape border rotate=-90,isosceles triangle stretches=true,inner sep=0pt,minimum width=1.74cm,minimum height=7mm,yshift=-0.0mm]
\tikzstyle{wide dpoint}=[fill=white,doubled,draw,shape=isosceles triangle,shape border rotate=-90,isosceles triangle stretches=true,inner sep=0pt,minimum width=1.5cm,minimum height=6.12mm,yshift=-0.0mm]
\tikzstyle{tinypoint}=[regular polygon,regular polygon sides=3,draw,scale=0.55,inner sep=-0.15pt,minimum width=6mm,fill=white,regular polygon rotate=180] 
\tikzstyle{white point}=[point]
\tikzstyle{white dpoint}=[dpoint]
\tikzstyle{green point}=[white point] 
\tikzstyle{white copoint}=[copoint]
\tikzstyle{gray point}=[point,fill=gray!40!white]
\tikzstyle{gray dpoint}=[gray point,doubled]
\tikzstyle{red point}=[gray point] 
\tikzstyle{gray copoint}=[copoint,fill=gray!40!white]
\tikzstyle{gray dcopoint}=[gray copoint,doubled]
\tikzstyle{black point}=[point,fill=black]
\tikzstyle{black copoint}=[copoint,fill=black]
\tikzstyle{tiny gray point}=[tinypoint,fill=gray!40!white]
\tikzstyle{diredge}=[->]
\tikzstyle{rdiredge}=[<-]
\tikzstyle{thickdiredge}=[->, very thick]
\tikzstyle{pointer edge}=[->,very thick,gray]
\tikzstyle{pointer edge part}=[very thick,gray]
\tikzstyle{dashed edge}=[dashed]
\tikzstyle{thick dashed edge}=[very thick,dashed]
\tikzstyle{thick gray dashed edge}=[thick dashed edge,gray!40]
\tikzstyle{thick map edge}=[very thick,|->]
\newcommand{\boxshape}[3]{%
\pgfdeclareshape{#1}{
\inheritsavedanchors[from=rectangle] 
\inheritanchorborder[from=rectangle]
\inheritanchor[from=rectangle]{center}
\inheritanchor[from=rectangle]{north}
\inheritanchor[from=rectangle]{south}
\inheritanchor[from=rectangle]{west}
\inheritanchor[from=rectangle]{east}
\backgroundpath{
\southwest \pgf@xa=\pgf@x \pgf@ya=\pgf@y
\northeast \pgf@xb=\pgf@x \pgf@yb=\pgf@y

\@tempdima=#2
\@tempdimb=#3

\pgfpathmoveto{\pgfpoint{\pgf@xa - 5pt + \@tempdima}{\pgf@ya}}
\pgfpathlineto{\pgfpoint{\pgf@xa - 5pt - \@tempdima}{\pgf@yb}}
\pgfpathlineto{\pgfpoint{\pgf@xb + 5pt + \@tempdimb}{\pgf@yb}}
\pgfpathlineto{\pgfpoint{\pgf@xb + 5pt - \@tempdimb}{\pgf@ya}}
\pgfpathlineto{\pgfpoint{\pgf@xa - 5pt + \@tempdima}{\pgf@ya}}
\pgfpathclose
}
}}
\tikzstyle{cloud}=[shape=cloud,draw,minimum width=1.5cm,minimum height=1.5cm]
\tikzstyle{map}=[draw,shape=NEbox,inner sep=2pt,minimum height=6mm,fill=white]
\tikzstyle{dashedmap}=[draw,dashed,shape=NEbox,inner sep=2pt,minimum height=6mm,fill=white]
\tikzstyle{mapdag}=[draw,shape=SEbox,inner sep=2pt,minimum height=6mm,fill=white]
\tikzstyle{mapadj}=[draw,shape=SEbox,inner sep=2pt,minimum height=6mm,fill=white]
\tikzstyle{maptrans}=[draw,shape=SWbox,inner sep=2pt,minimum height=6mm,fill=white]
\tikzstyle{mapconj}=[draw,shape=NWbox,inner sep=2pt,minimum height=6mm,fill=white]
\tikzstyle{medium map}=[draw,shape=NEbox,inner sep=2pt,minimum height=6mm,fill=white,minimum width=7mm]
\tikzstyle{medium map dag}=[draw,shape=SEbox,inner sep=2pt,minimum height=6mm,fill=white,minimum width=7mm]
\tikzstyle{medium map adj}=[draw,shape=SEbox,inner sep=2pt,minimum height=6mm,fill=white,minimum width=7mm]
\tikzstyle{medium map trans}=[draw,shape=SWbox,inner sep=2pt,minimum height=6mm,fill=white,minimum width=7mm]
\tikzstyle{medium map conj}=[draw,shape=NWbox,inner sep=2pt,minimum height=6mm,fill=white,minimum width=7mm]
\tikzstyle{semilarge map}=[draw,shape=NEbox,inner sep=2pt,minimum height=6mm,fill=white,minimum width=9.5mm]
\tikzstyle{semilarge map trans}=[draw,shape=SWbox,inner sep=2pt,minimum height=6mm,fill=white,minimum width=9.5mm]
\tikzstyle{semilarge map adj}=[draw,shape=SEbox,inner sep=2pt,minimum height=6mm,fill=white,minimum width=9.5mm]
\tikzstyle{semilarge map dag}=[draw,shape=SEbox,inner sep=2pt,minimum height=6mm,fill=white,minimum width=9.5mm]
\tikzstyle{semilarge map conj}=[draw,shape=NWbox,inner sep=2pt,minimum height=6mm,fill=white,minimum width=9.5mm]
\tikzstyle{large map}=[draw,shape=NEbox,inner sep=2pt,minimum height=6mm,fill=white,minimum width=12mm]
\tikzstyle{very large map}=[draw,shape=NEbox,inner sep=2pt,minimum height=6mm,fill=white,minimum width=17mm]
\tikzstyle{medium dmap}=[draw,doubled,shape=NEbox,inner sep=2pt,minimum height=6mm,fill=white,minimum width=7mm]
\tikzstyle{medium dmap dag}=[draw,doubled,shape=SEbox,inner sep=2pt,minimum height=6mm,fill=white,minimum width=7mm]
\tikzstyle{medium dmap adj}=[draw,doubled,shape=SEbox,inner sep=2pt,minimum height=6mm,fill=white,minimum width=7mm]
\tikzstyle{medium dmap trans}=[draw,doubled,shape=SWbox,inner sep=2pt,minimum height=6mm,fill=white,minimum width=7mm]
\tikzstyle{medium dmap conj}=[draw,doubled,shape=NWbox,inner sep=2pt,minimum height=6mm,fill=white,minimum width=7mm]
\tikzstyle{semilarge dmap}=[draw,doubled,shape=NEbox,inner sep=2pt,minimum height=6mm,fill=white,minimum width=9.5mm]
\tikzstyle{semilarge dmap trans}=[draw,doubled,shape=SWbox,inner sep=2pt,minimum height=6mm,fill=white,minimum width=9.5mm]
\tikzstyle{semilarge dmap adj}=[draw,doubled,shape=SEbox,inner sep=2pt,minimum height=6mm,fill=white,minimum width=9.5mm]
\tikzstyle{semilarge dmap dag}=[draw,doubled,shape=SEbox,inner sep=2pt,minimum height=6mm,fill=white,minimum width=9.5mm]
\tikzstyle{semilarge dmap conj}=[draw,doubled,shape=NWbox,inner sep=2pt,minimum height=6mm,fill=white,minimum width=9.5mm]
\tikzstyle{large dmap}=[draw,doubled,shape=NEbox,inner sep=2pt,minimum height=6mm,fill=white,minimum width=12mm]
\tikzstyle{large dmap conj}=[draw,doubled,shape=NWbox,inner sep=2pt,minimum height=6mm,fill=white,minimum width=12mm]
\tikzstyle{large dmap trans}=[draw,doubled,shape=SWbox,inner sep=2pt,minimum height=6mm,fill=white,minimum width=12mm]
\tikzstyle{very large dmap}=[draw,doubled,shape=NEbox,inner sep=2pt,minimum height=6mm,fill=white,minimum width=19.5mm]
\tikzstyle{muxbox}=[draw,shape=rectangle,minimum height=3mm,minimum width=3mm,fill=white]
\tikzstyle{dmuxbox}=[muxbox,doubled]
\tikzstyle{box}=[draw,shape=rectangle,inner sep=2pt,minimum height=6mm,minimum width=6mm,fill=white]
\tikzstyle{dbox}=[draw,doubled,shape=rectangle,inner sep=2pt,minimum height=6mm,minimum width=6mm,fill=white]
\tikzstyle{dmap}=[draw,doubled,shape=NEbox,inner sep=2pt,minimum height=6mm,fill=white]
\tikzstyle{dmapdag}=[draw,doubled,shape=SEbox,inner sep=2pt,minimum height=6mm,fill=white]
\tikzstyle{dmapadj}=[draw,doubled,shape=SEbox,inner sep=2pt,minimum height=6mm,fill=white]
\tikzstyle{dmaptrans}=[draw,doubled,shape=SWbox,inner sep=2pt,minimum height=6mm,fill=white]
\tikzstyle{dmapconj}=[draw,doubled,shape=NWbox,inner sep=2pt,minimum height=6mm,fill=white]
\tikzstyle{ddmap}=[draw,doubled,dashed,shape=NEbox,inner sep=2pt,minimum height=6mm,fill=white]
\tikzstyle{ddmapdag}=[draw,doubled,dashed,shape=SEbox,inner sep=2pt,minimum height=6mm,fill=white]
\tikzstyle{ddmapadj}=[draw,doubled,dashed,shape=SEbox,inner sep=2pt,minimum height=6mm,fill=white]
\tikzstyle{ddmaptrans}=[draw,doubled,dashed,shape=SWbox,inner sep=2pt,minimum height=6mm,fill=white]
\tikzstyle{ddmapconj}=[draw,doubled,dashed,shape=NWbox,inner sep=2pt,minimum height=6mm,fill=white]
\tikzstyle{smap}=[draw,shape=sNEbox,fill=white]
\tikzstyle{smapdag}=[draw,shape=sSEbox,fill=white]
\tikzstyle{smapadj}=[draw,shape=sSEbox,fill=white]
\tikzstyle{smaptrans}=[draw,shape=sSWbox,fill=white]
\tikzstyle{smapconj}=[draw,shape=sNWbox,fill=white]
\tikzstyle{dsmap}=[draw,dashed,shape=sNEbox,fill=white]
\tikzstyle{dsmapdag}=[draw,dashed,shape=sSEbox,fill=white]
\tikzstyle{dsmaptrans}=[draw,dashed,shape=sSWbox,fill=white]
\tikzstyle{dsmapconj}=[draw,dashed,shape=sNWbox,fill=white]
\tikzstyle{mmap}=[draw,shape=mNEbox]
\tikzstyle{mmapdag}=[draw,shape=mSEbox]
\tikzstyle{mmaptrans}=[draw,shape=mSWbox]
\tikzstyle{mmapconj}=[draw,shape=mNWbox]
\tikzstyle{mmapgray}=[draw,fill=gray!40!white,shape=mNEbox]
\tikzstyle{smapgray}=[draw,fill=gray!40!white,shape=sNEbox]
\pgfmathsetmacro{\pgf@shorten@left}{\pgfkeysvalueof{/tikz/shorten left}}
\pgfmathsetmacro{\pgf@shorten@right}{\pgfkeysvalueof{/tikz/shorten right}}
\pgfmathsetmacro{\pgf@shorten@left}{\pgfkeysvalueof{/tikz/shorten left}}
\pgfmathsetmacro{\pgf@shorten@right}{\pgfkeysvalueof{/tikz/shorten right}}
\tikzstyle{kpoint common}=[draw,fill=white,inner sep=1pt,minimum height=3mm]
\tikzstyle{kpoint}=[shape=cornerpoint,shorten left=5pt,kpoint common]
\tikzstyle{kpoint adjoint}=[shape=cornercopoint,shorten left=5pt,kpoint common]
\tikzstyle{kpoint conjugate}=[shape=cornerpoint,shorten right=5pt,kpoint common]
\tikzstyle{kpoint transpose}=[shape=cornercopoint,shorten right=5pt,kpoint common]
\tikzstyle{kpoint symm}=[shape=cornerpoint,shorten left=5pt,shorten right=5pt,kpoint common]
\tikzstyle{black kpoint}=[shape=cornerpoint,shorten left=5pt,kpoint common,fill=black]
\tikzstyle{black kpoint adjoint}=[shape=cornercopoint,shorten left=5pt,kpoint common,fill=black]
\tikzstyle{kpointdag}=[kpoint adjoint]
\tikzstyle{kpointadj}=[kpoint adjoint]
\tikzstyle{kpointconj}=[kpoint conjugate]
\tikzstyle{kpointtrans}=[kpoint transpose]
\tikzstyle{big kpoint}=[kpoint, minimum width=1.2 cm, minimum height=8mm, inner sep=4pt, text depth=3mm]
\tikzstyle{wide kpoint}=[kpoint, minimum width=1 cm, inner sep=2pt, text depth=-0.7 mm]
\tikzstyle{wide kpointdag}=[kpointdag, minimum width=1 cm, inner sep=2pt, text depth=0.7 mm]
\tikzstyle{wide kpointconj}=[kpointconj, minimum width=1 cm, inner sep=2pt, text depth=-0.7 mm]
\tikzstyle{wide kpointtrans}=[kpointtrans, minimum width=1 cm, inner sep=2pt, text depth=0.7 mm]
\tikzstyle{gray kpoint}=[kpoint,fill=gray!50!white]
\tikzstyle{gray kpointdag}=[kpointdag,fill=gray!50!white]
\tikzstyle{gray kpointadj}=[kpointadj,fill=gray!50!white]
\tikzstyle{gray kpointconj}=[kpointconj,fill=gray!50!white]
\tikzstyle{gray kpointtrans}=[kpointtrans,fill=gray!50!white]
\tikzstyle{gray dkpoint}=[kpoint,fill=gray!50!white,doubled]
\tikzstyle{gray dkpointdag}=[kpointdag,fill=gray!50!white,doubled]
\tikzstyle{gray dkpointadj}=[kpointadj,fill=gray!50!white,doubled]
\tikzstyle{gray dkpointconj}=[kpointconj,fill=gray!50!white,doubled]
\tikzstyle{gray dkpointtrans}=[kpointtrans,fill=gray!50!white,doubled]
\tikzstyle{white label}=[draw,fill=white,rectangle,inner sep=0.7 mm]
\tikzstyle{gray label}=[draw,fill=gray!50!white,rectangle,inner sep=0.7 mm]
\tikzstyle{black label}=[draw,fill=black,rectangle,inner sep=0.7 mm]
\tikzstyle{dkpoint}=[kpoint,doubled]
\tikzstyle{wide dkpoint}=[wide kpoint,doubled]
\tikzstyle{dkpointdag}=[kpoint adjoint,doubled]
\tikzstyle{dkcopoint}=[kpoint adjoint,doubled]
\tikzstyle{dkpointadj}=[kpoint adjoint,doubled]
\tikzstyle{dkpointconj}=[kpoint conjugate,doubled]
\tikzstyle{dkpointtrans}=[kpoint transpose,doubled]
\tikzstyle{kscalar}=[kpoint common, shape=EBox, inner xsep=-1pt, inner ysep=3pt,font=\small]
\tikzstyle{kscalarconj}=[kpoint common, shape=WBox, inner xsep=-1pt, inner ysep=3pt,font=\small]
 \tikzstyle{upground}=[circuit ee IEC,thick,ground,rotate=90,scale=2.5]
 \tikzstyle{downground}=[circuit ee IEC,thick,ground,rotate=-90,scale=2.5]
 \tikzstyle{bigground}=[regular polygon,regular polygon sides=3,draw=gray,scale=0.50,inner sep=-0.5pt,minimum width=10mm,fill=gray]
\tikzstyle{arrs}=[-latex,font=\small,auto]
\tikzstyle{arrow plain}=[arrs]
\tikzstyle{arrow dashed}=[dashed,arrs]
\tikzstyle{arrow bold}=[very thick,arrs]
\tikzstyle{arrow hide}=[draw=white!0,-]
\tikzstyle{arrow reverse}=[latex-]
\tikzstyle{cdnode}=[]
\title{Mermin Non-Locality in Abstract Process Theories}
\author{
	Stefano Gogioso
	\institute{Quantum Group \\ University of Oxford}
	\email{stefano.gogioso@cs.ox.ac.uk}
	\and
	William Zeng
	\institute{Quantum Group \\ University of Oxford}
	\email{william.zeng@cs.ox.ac.uk}
}
\begin{document}

	\maketitle

\begin{abstract}
The study of non-locality is fundamental to the understanding of quantum mechanics. The past 50 years have seen a number of non-locality proofs, but its fundamental building blocks, and the exact role it plays in quantum protocols, has remained elusive. In this paper, we focus on a particular flavour of non-locality, generalising Mermin's argument on the GHZ state. Using strongly complementary observables, we provide necessary and sufficient conditions for Mermin non-locality in abstract process theories. We show that the existence of more phases than classical points (aka eigenstates) is not sufficient, and that the key to Mermin non-locality lies in the presence of certain algebraically non-trivial phases. This allows us to show that $\fRelCategory$, a favourite toy model for categorical quantum mechanics, is Mermin local. We show Mermin non-locality to be the key resource ensuring the device-independent security of the HBB CQ (N,N) family of Quantum Secret Sharing protocols. Finally, we challenge the unspoken assumption that the measurements involved in Mermin-type scenarios should be complementary (like the pair $X,Y$), opening the doors to a much wider class of potential experimental setups than currently employed. In short, we give conditions for Mermin non-locality tests on any number of systems, where each party has an arbitrary number of measurement choices, where each measurement has an arbitrary number of outcomes and further, that works in any abstract process theory.\\

\noindent \textbf{PACS Numbers:} 03.65.Fd,03.65.Ta,03.65.Ud, 03.67.Dd
\end{abstract}

\setcounter{tocdepth}{2} 

\section{Introduction}
	\label{section_Introduction}

	Non-locality is a fundamental property of quantum mechanics.  It impacts both foundations and application, ruling out the existence of \emph{local hidden variable theories} consistent with quantum theory \cite{bell}, and underpinning protocols like quantum key distribution \cite{Ekert1999} and quantum secret sharing \cite{HBB}. The importance of this property pushed the development of methods to characterise it both in general (e.g. the sheaf-theoretic methods of \cite{NLC-SheafSeminal}) and in specific extensions of quantum theory (e.g. the generalized probabilistic theories of \cite{barrett2007information}).
	
	We focus on a particular possibilistic class of non-locality arguments generalized from Mermin's argument \cite{mermin1990quantum} and related to the recent work on All-versus-Nothing arguments by Abramsky et al. \cite{NLC-AvN}. These experiments produce possibilistic evidence for quantum mechanical non-locality, i.e. certain measurement outcomes that can only be realized by non-local theories.  Mermin scenarios are typically described by triples $(N,M,D)$ for $N$ parties with $M$ measurement choices for each party, each having $D$ classical outcomes. Current literature generalises from the original $(3,2,2)$ scenario~\cite{mermin1990quantum} to derive non-locality proofs for the $(3,3,2)$\cite{ryu2014multisetting}, $(D+1,2,D)$\cite{zukowski-GHZ-multiport}, $(N>D, 2, D \mbox{ even})$\cite{cerf-GHZ-many}, and $(\mbox{odd }N, 2, \mbox{even }D)$\cite{lee-even-dim}. One contribution of our work is to extend the work of \cite{CQM-StrongComplementarity} to cover all $(N,M,D)$ scenarios.

In \cite{CQM-StrongComplementarity}, Coecke et al. used strong complementarity to formulate Mermin arguments within the framework of Categorical Quantum Mechanics \cite{CQM-seminal}. Not only does this approach help generalize non-locality arguments within quantum theory, but it also paved the way towards an understanding of Mermin non-locality in \emph{abstract process theories}, aka dagger symmetric monoidal categories. As a corollary, they are able to identify the difference between qubit stabilizer quantum mechanics (which is non-local) and Spekken's toy theory (which is local) in the structure of the respective phase groups \cite{CQM-StrongComplementarity,coecke2011phase}.

In Sections \ref{section_MerminMeasurements} and \ref{section_MerminLocality}, we remove implicit assumptions about phase groups and classical points from~\cite{CQM-StrongComplementarity} and use strongly complementary structures to generalise Mermin measurements to abstract process theories, defining Mermin non-locality as the existence of a Mermin measurement scenario not admitting a local hidden variable model.

\newtheorem*{theorem*}{Theorem}

In Section \ref{section_MerminLocality}, we show that strong complementarity is not sufficient to characterise Mermin non-locality. The phase group structure is shown to provide necessary algebraic conditions in abstract process theories, as summarised by our first main result:
\begin{theorem*}\hspace{-3pt}\text{\textbf{\ref{thm_MerminLocality}.}}
		Let $\CategoryC$ be a $\dagger$-SMC. If for any strongly complementary pair $(\hbox{\begin{tikzpicture} [scale=1,transform shape] 

\def\deltax{0.3} 
\def\deltay{0.5} 


\node [dot, fill=\classicalStructColour] (mult) at (0,0) {};

\end{tikzpicture}}\!,\hbox{\begin{tikzpicture} [scale=1,transform shape] 

\def\deltax{0.3} 
\def\deltay{0.5} 


\node [dot, fill=\groupStructColour] (mult) at (0,0) {};

\end{tikzpicture}}\!)$ of $\dagger$-qSCFAs the group of $\hbox{\begin{tikzpicture} [scale=1,transform shape] 

\def\deltax{0.3} 
\def\deltay{0.5} 


\node [dot, fill=\classicalStructColour] (mult) at (0,0) {};

\end{tikzpicture}}\!$-phases is a trivial algebraic extension of the subgroup of $\hbox{\begin{tikzpicture} [scale=1,transform shape] 

\def\deltax{0.3} 
\def\deltay{0.5} 


\node [dot, fill=\groupStructColour] (mult) at (0,0) {};

\end{tikzpicture}}\!$-classical points (i.e. if there exist no algebraically non-trivial $\hbox{\begin{tikzpicture} [scale=1,transform shape] 

\def\deltax{0.3} 
\def\deltay{0.5} 


\node [dot, fill=\classicalStructColour] (mult) at (0,0) {};

\end{tikzpicture}}\!$-phases), then $\CategoryC$ is Mermin local.
	\end{theorem*}

\noindent Thus $\hbox{\begin{tikzpicture} [scale=1,transform shape] 

\def\deltax{0.3} 
\def\deltay{0.5} 


\node [dot, fill=\classicalStructColour] (mult) at (0,0) {};

\end{tikzpicture}}\!$-phase groups that are trivial algebraic extensions of the respective subgroups of $\hbox{\begin{tikzpicture} [scale=1,transform shape] 

\def\deltax{0.3} 
\def\deltay{0.5} 


\node [dot, fill=\groupStructColour] (mult) at (0,0) {};

\end{tikzpicture}}\!$-classical points always lead to local hidden variable models, irregardless of whether there are enough $\hbox{\begin{tikzpicture} [scale=1,transform shape] 

\def\deltax{0.3} 
\def\deltay{0.5} 


\node [dot, fill=\groupStructColour] (mult) at (0,0) {};

\end{tikzpicture}}\!$-classical points to form a basis and/or strictly more $\hbox{\begin{tikzpicture} [scale=1,transform shape] 

\def\deltax{0.3} 
\def\deltay{0.5} 


\node [dot, fill=\classicalStructColour] (mult) at (0,0) {};

\end{tikzpicture}}\!$-phases than $\hbox{\begin{tikzpicture} [scale=1,transform shape] 

\def\deltax{0.3} 
\def\deltay{0.5} 


\node [dot, fill=\groupStructColour] (mult) at (0,0) {};

\end{tikzpicture}}\!$-classical points. Indeed, we show that the category $\fRelCategory$ of finite sets and relations is Mermin local (despite it having arbitrarily many more $\hbox{\begin{tikzpicture} [scale=1,transform shape] 

\def\deltax{0.3} 
\def\deltay{0.5} 


\node [dot, fill=\classicalStructColour] (mult) at (0,0) {};

\end{tikzpicture}}\!$-phases than $\hbox{\begin{tikzpicture} [scale=1,transform shape] 

\def\deltax{0.3} 
\def\deltay{0.5} 


\node [dot, fill=\groupStructColour] (mult) at (0,0) {};

\end{tikzpicture}}\!$-classical points), and also confirm that Spekken's toy theory is Mermin local (despite them having enough $\hbox{\begin{tikzpicture} [scale=1,transform shape] 

\def\deltax{0.3} 
\def\deltay{0.5} 


\node [dot, fill=\groupStructColour] (mult) at (0,0) {};

\end{tikzpicture}}\!$-classical points to form a basis).  Our method also gives an easy proof that qutrit stabilizer mechanic is Mermin local.

Additionally, in Section \ref{section_MerminLocality}, we show that the existence of algebraically non-trivial $\hbox{\begin{tikzpicture} [scale=1,transform shape] 

\def\deltax{0.3} 
\def\deltay{0.5} 


\node [dot, fill=\classicalStructColour] (mult) at (0,0) {};

\end{tikzpicture}}\!$-phases is sufficient, under mild additional assumptions, to formulate a non-locality argument. This leads to our second main result:
	\begin{theorem*}\hspace{-3pt}\text{\textbf{\ref{thm_MerminNonLocality}.}}
		Let $\CategoryC$ be a $\dagger$-SMC, and $(\hbox{\begin{tikzpicture} [scale=1,transform shape] 

\def\deltax{0.3} 
\def\deltay{0.5} 


\node [dot, fill=\classicalStructColour] (mult) at (0,0) {};

\end{tikzpicture}}\!,\hbox{\begin{tikzpicture} [scale=1,transform shape] 

\def\deltax{0.3} 
\def\deltay{0.5} 


\node [dot, fill=\groupStructColour] (mult) at (0,0) {};

\end{tikzpicture}}\!)$ be a strongly complementary pair of $\dagger$-qSCFAs. Suppose further that the $\hbox{\begin{tikzpicture} [scale=1,transform shape] 

\def\deltax{0.3} 
\def\deltay{0.5} 


\node [dot, fill=\groupStructColour] (mult) at (0,0) {};

\end{tikzpicture}}\!$-classical points form a basis. If the group of $\hbox{\begin{tikzpicture} [scale=1,transform shape] 

\def\deltax{0.3} 
\def\deltay{0.5} 


\node [dot, fill=\classicalStructColour] (mult) at (0,0) {};

\end{tikzpicture}}\!$-phases is a non-trivial algebraic extension of the subgroup of $\hbox{\begin{tikzpicture} [scale=1,transform shape] 

\def\deltax{0.3} 
\def\deltay{0.5} 


\node [dot, fill=\groupStructColour] (mult) at (0,0) {};

\end{tikzpicture}}\!$-classical points, then $\CategoryC$ is Mermin non-local.
	\end{theorem*}
\noindent As a consequence, we confirm that qubit stabilizer quantum mechanics is Mermin non-local.

In Section \ref{section_QSS}, we argue that our concrete characterisation as the existence of algebraically non-trivial phases can be used to see Mermin non-locality as a resource in the construction of quantum protocols. We exemplify this by showing how the security of the HBB CQ (N,N) family of Quantum Secret Sharing protocols from \cite{HBB, HBB2} directly relates to the flavour of non-locality explored in this work.

In Section \ref{section:non-compl}, we use our general framework to investigate Mermin non-locality in $\fdHilbCategory$, the usual arena of quantum mechanics. The traditional formulation of Mermin arguments relies on sets of complementary measurements, such as the $X$ ($\hbox{\begin{tikzpicture} [scale=1,transform shape] 

\def\deltax{0.3} 
\def\deltay{0.5} 


\node [dot, fill=\groupStructColour] (mult) at (0,0) {};

\end{tikzpicture}}\!$ measurement with $\hbox{\begin{tikzpicture} [scale=1,transform shape] 

\def\deltax{0.3} 
\def\deltay{0.5} 


\node [dot, fill=\classicalStructColour] (mult) at (0,0) {};

\end{tikzpicture}}\!$-phase $0$) and $Y$ ($\hbox{\begin{tikzpicture} [scale=1,transform shape] 

\def\deltax{0.3} 
\def\deltay{0.5} 


\node [dot, fill=\groupStructColour] (mult) at (0,0) {};

\end{tikzpicture}}\!$ measurement with $\hbox{\begin{tikzpicture} [scale=1,transform shape] 

\def\deltax{0.3} 
\def\deltay{0.5} 


\node [dot, fill=\classicalStructColour] (mult) at (0,0) {};

\end{tikzpicture}}\!$-phase $\frac{\pi}{2}$) measurements of the qubit in the original $(3,2,2)$ Mermin argument. We show how, even in the case of $(N,2,D)$ scenarios, many more possible measurements exist than complementary ones. This result opens a wealth of novel experimental configurations for tests of Mermin non-locality and, through results of Section \ref{section_QSS}, new configurations for quantum secret sharing protocols as well.


\section{Background}
	\label{section_Background}

This section refers the reader to background literature on the mathematical concepts of abstract process theories that we use in this work.

	Classical structures, aka special commutative $\dagger$-Frobenius algebras ($\dagger$-SCFAs), play a central role in Categorical Quantum Mechanics (CQM)\cite{CQM-seminal} as the abstract incarnation of non-degenerate observables. The operational aspect of $\dagger$-SCFAs is extensively covered in \cite{CQM-QuantumClassicalStructuralism}, where they are interpreted as models for the classical data operations of copy, deletion, and comparison. Their key connection with non-degenerate observables in quantum mechanics is provided by \cite{CQM-OrthogonalBases}, where it is proven that $\dagger$-SCFAs in $\fdHilbCategory$ canonically correspond to orthonormal bases (their unique basis of copyable, or \textit{classical}, states), and can thus be used to model a basis of eigenstates;  more generally, commutative $\dagger$-Frobenius algebras ($\dagger$-CFAs) correspond to orthogonal bases.
	
	Strongly complementary pairs of classical structures appear in \cite{CQM-StrongComplementarity,CQM-ZXCalculusSeminal} to model non-locality in terms of commutative non-degenerate observables of generalized Mermin arguments. The paper \cite{CQM-KissingerPhdthesis} shows that 
they correspond to finite abelian groups in $\fdHilbCategory$ and \cite{StefanoGogioso-RepTheoryCQM} specifies their connection to the Fourier Transform. The notion of phase groups was explicitly introduced in \cite{CQM-ZXCalculusSeminal, coecke2011phase}. Their connection to non-locality was first made in \cite{coecke2011phase}, where it was used to differentiate Spekkens toy theory from stabilizer quantum mechanics. 	Finally, the upcoming \cite{CQM-QCSnotes} and \cite{CQM-CQMnotes} provide a comprehensive reference for many structures and results used here. These, along with the survey~\cite{selinger2011survey}, are also good references for the diagrammatic notation used throughout this literature.


\section{Mermin measurements}
	\label{section_MerminMeasurements}

Unlike Bell tests, which produce outcomes with probabilities that are forbidden to local hidden variable theories, the Mermin (or GHZ) argument produces outcomes which are impossible to observe in a local hidden variable theory \cite{mermin1990quantum}. This section introduces the definitions necessary to generalise the Mermin argument to abstract process theories. We make use of the standard definitions for strongly complementary observables, phase states and phases. We often refer to quasi-special $\dagger$-Frobenius algebras as \textbf{non-degenerate observables} and use the shorthand $\dagger$-qSFA. The acronym $\dagger$-qSCFA refers to a commutative $\dagger$-qSFA. Definitions of these concepts are reproduced in Appendix \ref{app:defs}.

	\begin{definition}\label{def_basis}
		A family $(\ket{\psi_j})_j$ of states of an object $\SpaceH$ in a $\dagger$-SMC forms a (orthogonal) \textbf{basis} if the following two conditions hold:
		\begin{enumerate}
			\item[1.] $\braket{\psi_i}{\psi_j} = 0$ for $i \neq j$
			\item[2.] for any $f,g: \SpaceH \rightarrow \SpaceH'$ we have that $\forall j. \; f \ket{\psi_j} = g \ket{\psi_j}$ implies $f=g$
		\end{enumerate}
	\end{definition}
    \noindent In $\fdHilbCategory$, the objects are vector spaces and any orthogonal vector space basis clearly obeys these conditions.  The above Definition allows us to extend the appropriate notion of a basis to an arbitrary $\dagger$-SMC. Within the context of Categorical Quantum Mechanics, a $\dagger$-qSCFA $\hbox{\begin{tikzpicture} [scale=1,transform shape] 

\def\deltax{0.3} 
\def\deltay{0.5} 


\node [dot, fill=\groupStructColour] (mult) at (0,0) {};

\end{tikzpicture}}\!$ with classical points forming a basis is said to have \textbf{enough classical points}. More details on phases and classical points of observables can be found in the Appendix.

	\begin{theorem}\label{thm_PhaseGroup} 
		Let $\hbox{\begin{tikzpicture} [scale=1,transform shape] 

\def\deltax{0.3} 
\def\deltay{0.5} 


\node [dot, fill=\classicalStructColour] (mult) at (0,0) {};

\end{tikzpicture}}\!$ and $\hbox{\begin{tikzpicture} [scale=1,transform shape] 

\def\deltax{0.3} 
\def\deltay{0.5} 


\node [dot, fill=\groupStructColour] (mult) at (0,0) {};

\end{tikzpicture}}\!$ be strongly complementary $\dagger$-qSFAs in any $\dagger$-SMC. Phase states (resp. phases) of $\hbox{\begin{tikzpicture} [scale=1,transform shape] 

\def\deltax{0.3} 
\def\deltay{0.5} 


\node [dot, fill=\classicalStructColour] (mult) at (0,0) {};

\end{tikzpicture}}\!$ form group under the action of $(\ZmultSym,\ZunitSym)$. This group of phase states is denoted the {\bf phase group} P$_{\hbox{\begin{tikzpicture} [scale=1,transform shape] 

\def\deltax{0.3} 
\def\deltay{0.5} 


\node [dot, fill=\classicalStructColour] (mult) at (0,0) {};

\end{tikzpicture}}\!}$. The classical points (resp. the induced phases) of $\hbox{\begin{tikzpicture} [scale=1,transform shape] 

\def\deltax{0.3} 
\def\deltay{0.5} 


\node [dot, fill=\groupStructColour] (mult) at (0,0) {};

\end{tikzpicture}}\!$ are a subgroup K$_{\hbox{\begin{tikzpicture} [scale=1,transform shape] 

\def\deltax{0.3} 
\def\deltay{0.5} 


\node [dot, fill=\groupStructColour] (mult) at (0,0) {};

\end{tikzpicture}}\!}\subseteq $P$_{\hbox{\begin{tikzpicture} [scale=1,transform shape] 

\def\deltax{0.3} 
\def\deltay{0.5} 


\node [dot, fill=\classicalStructColour] (mult) at (0,0) {};

\end{tikzpicture}}\!}$. 
	\end{theorem}
	\begin{proof}
		Proof that phases form a group can be found in \cite{CQM-CQMnotes}. Proof that classical points form a group can be found in \cite{CQM-StrongComplementarity} (for $\dagger$-SCFAs) and \cite{StefanoGogioso-RepTheoryCQM}. Statement follows from this.
	\end{proof}
	\noindent When talking about the phase group of a $\dagger$-qSCFA is commutative, we use additive notation: given two $\hbox{\begin{tikzpicture} [scale=1,transform shape] 

\def\deltax{0.3} 
\def\deltay{0.5} 


\node [dot, fill=\classicalStructColour] (mult) at (0,0) {};

\end{tikzpicture}}\!$-phase states $\ket{\alpha}$ and $\ket{\beta}$, we denote by $\ket{\alpha+\beta}$ their addition in the phase group. From now on, we interchangeably use phase states and phases, leaving disambiguation to context.
	
\noindent The GHZ states and Mermin measurements are the main ingredients needed in our argument.  GHZ states appear in the ZX calculus fragment of our framework in~\cite{CQM-ZXCalculusSeminal} and are generalized to the definition that we use in~\cite{CQM-StrongComplementarity}.
\begin{definition}
Given a $\dagger$-qSFA $\hbox{\begin{tikzpicture} [scale=1,transform shape] 

\def\deltax{0.3} 
\def\deltay{0.5} 


\node [dot, fill=\classicalStructColour] (mult) at (0,0) {};

\end{tikzpicture}}\!$ in a $\dagger$-SMC, an \textbf{$N$-partite GHZ state} for $\hbox{\begin{tikzpicture} [scale=1,transform shape] 

\def\deltax{0.3} 
\def\deltay{0.5} 


\node [dot, fill=\classicalStructColour] (mult) at (0,0) {};

\end{tikzpicture}}\!$ is:
\begin{equation}\label{eqn_GHZstate}
			\hbox{\begin{tikzpicture}[node distance=10mm]

\node [smalldot, fill = \classicalStructColour] (0) at (0, -1) {};
\node (1) at (1, -0) {};
\node (2) at (-1, -0) {};
\node (3) at (0, -0) {$\cdot\;\cdot\;\cdot$};
                
\draw [bend right=45, looseness=1.00] (2.center) to (0);
\draw [bend right=45, looseness=1.00] (0) to (1.center);

\draw [thick, decoration={brace, amplitude=6pt},decorate] (-1.2,0.3) to (1.2,0.3);
\node at (0, 1) {n-systems};
\end{tikzpicture}}
		\end{equation}
\end{definition}

\noindent Inspired by \cite{CQM-StrongComplementarity}, we build Mermin type scenarios out of them.
	\begin{definition}\label{def_MerminMeasurements} 
		Let $\hbox{\begin{tikzpicture} [scale=1,transform shape] 

\def\deltax{0.3} 
\def\deltay{0.5} 


\node [dot, fill=\classicalStructColour] (mult) at (0,0) {};

\end{tikzpicture}}\!$ and $\hbox{\begin{tikzpicture} [scale=1,transform shape] 

\def\deltax{0.3} 
\def\deltay{0.5} 


\node [dot, fill=\groupStructColour] (mult) at (0,0) {};

\end{tikzpicture}}\!$ be a pair of strongly complementary $\dagger$-qSFAs in a $\dagger$-SMC. An $N$-partite \textbf{Mermin measurement} is obtained by applying $N$ $\hbox{\begin{tikzpicture} [scale=1,transform shape] 

\def\deltax{0.3} 
\def\deltay{0.5} 


\node [dot, fill=\classicalStructColour] (mult) at (0,0) {};

\end{tikzpicture}}\!$-phases $\alpha_1,...,\alpha_N$ to the $N$ components of an $N$-partite GHZ state, and then measuring each component in the $\hbox{\begin{tikzpicture} [scale=1,transform shape] 

\def\deltax{0.3} 
\def\deltay{0.5} 


\node [dot, fill=\groupStructColour] (mult) at (0,0) {};

\end{tikzpicture}}\!$ structure:
		\begin{equation}\label{eqn_MerminMeasurementGHZstate}
		    \hbox{\begin{tikzpicture}[node distance=10mm, yscale=1.6, xscale=1.8]

                \node [smalldot, fill = \classicalStructColour] (0) at (-0.5, -2) {};
                \node [smalldot, fill = \classicalStructColour] (1) at (0.5, -2) {};
                \node [style=dot, fill = \classicalStructColour, inner sep=0pt] (2) at (-2, -0.5) {$-\alpha_1$};
                \node [style=dot, fill = \classicalStructColour, inner sep=3pt] (3) at (-1, -0.5) {$\alpha_1$};
                \node [style=dot, fill = \classicalStructColour, inner sep=0pt] (4) at (1, -0.5) {$-\alpha_N$};
                \node [style=dot, fill = \classicalStructColour, inner sep=3pt] (5) at (2, -0.5) {$\alpha_N$};
                \node [smalldot, fill = \groupStructColour] (6) at (1.5, 0.5) {};
                \node [smalldot, fill = \groupStructColour] (7) at (-1.5, 0.5) {};
                \node (8) at (1.5, 1.25) {};
                \node (9) at (-1.5, 1.25) {};
                \node (10) at (0, 0.25) {$\cdot\;\cdot\;\cdot$};

                \draw [->-=.5] (2) to (0);
                \draw [->-=.5] (1) to (3);
                \draw [->-=.5] (4) to (0);
                \draw [->-=.5] (1) to (5);
                \draw [->-=.5, bend right=45, looseness=1.00] (5) to (6);
                \draw [->-=.5, bend right=45, looseness=1.00] (6) to (4);
                \draw [->-=.5, bend right=45, looseness=1.00] (3) to (7);
                \draw [->-=.5, bend right=45, looseness=1.00] (7) to (2);
                \draw [->-=.5] (7) to (9.center);
                \draw [->-=.5] (6) to (8.center);

\end{tikzpicture}}
		\end{equation}
		We further require that $\sum_i \alpha_i$, where the sum is taken in the group of phases, be a $\hbox{\begin{tikzpicture} [scale=1,transform shape] 

\def\deltax{0.3} 
\def\deltay{0.5} 


\node [dot, fill=\groupStructColour] (mult) at (0,0) {};

\end{tikzpicture}}\!$-classical point.
	\end{definition}

	\begin{lemma}\label{thm_MerminMeasurementGHZstate}
		The Mermin measurement shown in Equation \ref{eqn_MerminMeasurementGHZstate} is equivalent to the following state:
	\begin{equation}\label{eqn_MerminMeasurementGHZstateSimplified}
	    \hbox{\begin{tikzpicture}[node distance=8mm, xscale=1.8, yscale=0.9]

                \node [style=dot, fill = \classicalStructColour, inner sep=1pt] (0) at (-0.75, -1.95) {\small $-\sum\alpha_i$};
                \node [style=dot, fill = \classicalStructColour, inner sep=1pt] (1) at (0.75, -1.95) {\small $+\sum\alpha_i$};
                \node [smalldot, fill = \groupStructColour] (2) at (0, 0) {};
                \node [smalldot, fill = \classicalStructColour] (3) at (0, 1) {};
                \node (4) at (-1.5,2.5) {};
                \node (5) at (-1, 2.5) {};
                \node (6) at (1.5, 2.5) {};
                \node (7) at (0, 2.5) {$\cdot\cdot\cdot$};

        \begin{pgfonlayer}{background}
                \draw [->-=.5, bend left, looseness=1.00] (3) to (4.center);
                \draw [->-=.5, bend left=15, looseness=0.75] (3) to (5.center);
                \draw [->-=.5, bend right=45, looseness=1.00] (3) to (6.center);
                \draw [->-=.5, bend right=45, looseness=0.75] (1.center) to (2);
                \draw [->-=.5, bend right=45, looseness=0.75] (2) to (0.center);
                \draw [->-=.5] (2) to (3);
        \end{pgfonlayer}

\end{tikzpicture}}
	\end{equation}		
	\end{lemma}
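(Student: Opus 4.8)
The plan is to establish the equality by diagrammatic rewriting, in two stages: first I would collapse the $2N$ phase gates $\pm\alpha_i$ into just the two phases $\pm\sum_i\alpha_i$, and then collapse the bundle of $N$ $\XdotSym$-measurements sitting on the $N$ legs of the GHZ state down to the single $\XdotSym$-spider appearing in Equation~\ref{eqn_MerminMeasurementGHZstateSimplified}. Throughout, ``equivalent'' should be read as ``equal up to an invertible scalar'', the only scalars involved coming from the quasi-speciality of $\ZdotSym$ and $\XdotSym$.

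For the first stage, recall from Equation~\ref{eqn_GHZstate} that a GHZ state is just a $\ZdotSym$-spider, so in Equation~\ref{eqn_MerminMeasurementGHZstate} the ``ket'' GHZ state carrying the gates $\alpha_1,\dots,\alpha_N$ on its legs, and likewise the ``bra'' GHZ effect carrying the gates $-\alpha_1,\dots,-\alpha_N$, are each a $\ZdotSym$-spider decorated with $\ZdotSym$-phases on its legs. A $\ZdotSym$-phase gate is by definition $\ZdotSym$-multiplication against a phase state, so by (co)associativity and commutativity of the $\ZdotSym$-Frobenius structure each such gate fuses into the adjacent $\ZdotSym$-spider; the phase states that accumulate on that spider then merge under a single $\ZdotSym$-multiplication, which by Theorem~\ref{thm_PhaseGroup} is precisely addition in the (abelian) phase group. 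This yields the single phase $\sum_i\alpha_i$ on the ket side and $\sum_i(-\alpha_i) = -\sum_i\alpha_i$ on the bra side; pushing each back out along one distinguished leg leaves the diagram in the form ``$N$-partite $\ZdotSym$-GHZ state, one distinguished leg carrying $+\sum_i\alpha_i$ and the other $N-1$ legs bare, each leg feeding an $\XdotSym$-measurement, symmetrically on the bra side with $-\sum_i\alpha_i$''. Note that the phases have only ever been pushed along $\ZdotSym$-spiders, never through an $\XdotSym$-spider.

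For the second stage I would set the two surviving phases aside and analyse the composite ``$\ZdotSym$-GHZ state $\to$ $N$ parallel $\XdotSym$-spiders (one per party, each joining a ket leg, a bra leg and an outcome wire) $\to$ $\ZdotSym$-GHZ effect''. This is where strong complementarity does the work: $\ZdotSym$ and $\XdotSym$ obey the bialgebra law, which lets an $\XdotSym$-spider attached to a leg of a $\ZdotSym$-spider be pushed through it (merging and duplicating legs), and applying this one party at a time --- an induction on $N$, with $N=1$ immediate --- collapses the entire bundle into a single $\XdotSym$-spider adjacent to a single $\ZdotSym$-spider from which the $N$ outcome wires now emanate, which is exactly the shape of Equation~\ref{eqn_MerminMeasurementGHZstateSimplified}. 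Since the phases $\pm\sum_i\alpha_i$ never touch an $\XdotSym$-spider they are simply transported onto the two remaining legs of this final $\XdotSym$-spider, recovering Equation~\ref{eqn_MerminMeasurementGHZstateSimplified}. The first stage is routine bookkeeping; I expect the main obstacle to be the second stage --- carrying out the leg-by-leg collapse cleanly in an abstract $\dagger$-SMC, tracking the scalar factors forced by quasi-speciality, applying the bialgebra rule with the correct orientation at each inductive step, and checking that the two surviving phases are carried to exactly the positions drawn in Equation~\ref{eqn_MerminMeasurementGHZstateSimplified} and not, say, onto an outcome wire.
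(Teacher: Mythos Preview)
Your two-stage plan is correct and is exactly the approach the paper takes: its proof is the one-liner ``Pushing the phases down through the $\ZdotSym$ nodes and using strong complementarity'', which is your Stage~1 (fusing all $\pm\alpha_i$ into $\pm\sum_i\alpha_i$ on the two $\ZdotSym$-GHZ spiders) followed by your Stage~2 (iterated bialgebra to turn the complete bipartite pattern of two $\ZdotSym$-spiders against $N$ $\XdotSym$-spiders into a single $\XdotSym$-mult feeding a single $N$-output $\ZdotSym$-spider). Your more explicit bookkeeping --- in particular the observation that the phases never cross an $\XdotSym$-spider and the inductive leg-by-leg application of the bialgebra law --- simply unpacks what the paper leaves to the reference.
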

	\begin{proof}
		Pushing the phases down through the $\hbox{\begin{tikzpicture} [scale=1,transform shape] 

\def\deltax{0.3} 
\def\deltay{0.5} 


\node [dot, fill=\classicalStructColour] (mult) at (0,0) {};

\end{tikzpicture}}\!$ nodes and using strong complementarity. See \cite{CQM-StrongComplementarity}.
	\end{proof}

\noindent While this defines a single Mermin experiment, the full non-locality argument requires the joint outcomes of several Mermin measurements.
	\begin{definition} 
		Let $\hbox{\begin{tikzpicture} [scale=1,transform shape] 

\def\deltax{0.3} 
\def\deltay{0.5} 


\node [dot, fill=\classicalStructColour] (mult) at (0,0) {};

\end{tikzpicture}}\!$ and $\hbox{\begin{tikzpicture} [scale=1,transform shape] 

\def\deltax{0.3} 
\def\deltay{0.5} 


\node [dot, fill=\groupStructColour] (mult) at (0,0) {};

\end{tikzpicture}}\!$ be strongly complementary $\dagger$-qSCFAs on a space $\SpaceH$ in a $\dagger$-SMC. An $N$-partite \textbf{Mermin measurement scenario} (for $\hbox{\begin{tikzpicture} [scale=1,transform shape] 

\def\deltax{0.3} 
\def\deltay{0.5} 


\node [dot, fill=\classicalStructColour] (mult) at (0,0) {};

\end{tikzpicture}}\!$ and $\hbox{\begin{tikzpicture} [scale=1,transform shape] 

\def\deltax{0.3} 
\def\deltay{0.5} 


\node [dot, fill=\groupStructColour] (mult) at (0,0) {};

\end{tikzpicture}}\!$) is any non-empty, finite collection of Mermin measurements $\underline{\alpha}^s = (\alpha_1^s,...,\alpha_N^s)_{s=1,...,S}$ of the $N$-partite GHZ state in the form of Equation \ref{thm_MerminMeasurementGHZstate}.
	\end{definition}
\noindent In the category $\fdHilbCategory$ of finite-dimensional Hilbert spaces, an $N$-partite Mermin measurement scenario where $a_1,...,a_M$ are the distinct $\hbox{\begin{tikzpicture} [scale=1,transform shape] 

\def\deltax{0.3} 
\def\deltay{0.5} 


\node [dot, fill=\classicalStructColour] (mult) at (0,0) {};

\end{tikzpicture}}\!$-phases appearing in the scenario and $\SpaceH$ is $D$-dimensional is exactly the usual $(N,M,D)$ Mermin scenario.  This correspondence is clarified in Section \ref{section_MerminLocality}, where we derive our generalized Mermin non-locality argument.


\section{Mermin locality and non-locality}
	\label{section_MerminLocality}
	
	The last definitions we need for our main results, Theorems \ref{thm_MerminNonLocality} and \ref{thm_MerminLocality}, are those of local hidden variable models (following the construction of \cite{CQM-StrongComplementarity}) and non-trivial algebraic extensions.

	\begin{definition}\label{def_LHVMerminMeasurement}
		Let $\hbox{\begin{tikzpicture} [scale=1,transform shape] 

\def\deltax{0.3} 
\def\deltay{0.5} 


\node [dot, fill=\classicalStructColour] (mult) at (0,0) {};

\end{tikzpicture}}\!$ and $\hbox{\begin{tikzpicture} [scale=1,transform shape] 

\def\deltax{0.3} 
\def\deltay{0.5} 


\node [dot, fill=\groupStructColour] (mult) at (0,0) {};

\end{tikzpicture}}\!$ be strongly complementary $\dagger$-qSCFAs on some system $\SpaceH$. Consider an $N$-partite Mermin measurement scenario $(\underline{\alpha}^s)_{s=1,...,S}$, and let $a_1,...,a_M$ be the distinct $\hbox{\begin{tikzpicture} [scale=1,transform shape] 

\def\deltax{0.3} 
\def\deltay{0.5} 


\node [dot, fill=\classicalStructColour] (mult) at (0,0) {};

\end{tikzpicture}}\!$-phases appearing in it. The \textbf{local map} for the scenario is the map $\SpaceH^{\tensor (M \cdot N)} \rightarrow \SpaceH^{\tensor (N \cdot S)}$ defined as follows:
		\begin{enumerate}
			\item[a.] we group the input wires in $N$ groups of $M$ wires: we say that the $r$-th wire of $i$-th group is the $a_r$ \textbf{input wire for system } $i$
			\item[b.] we group the output wires in $S$ groups of $N$ wires: we say that the $j$-th wire of $r$-th group is the $j$-th \textbf{output wire for measurement } $s$
			\item[c.] each input wire is connected to a $\hbox{\begin{tikzpicture} [scale=1,transform shape] 

\def\deltax{0.3} 
\def\deltay{0.5} 


\node [dot, fill=\groupStructColour] (mult) at (0,0) {};

\end{tikzpicture}}\!$ node
			\item[d.] for all $r,i,j$ and $s$, the $\hbox{\begin{tikzpicture} [scale=1,transform shape] 

\def\deltax{0.3} 
\def\deltay{0.5} 


\node [dot, fill=\groupStructColour] (mult) at (0,0) {};

\end{tikzpicture}}\!$ node of each $a_r$ input wire for system $i$ is connected to the $j$-th output wire for measurement $s$ if and only if $i=j$ and $\alpha_j^s = a_r$
		\end{enumerate}
		The following diagram details the procedure:
		\begin{equation}\label{eqn_LocalMap}
			\hbox{\begin{tikzpicture}[node distance = 10mm]



\node (a1sys1dot) [smalldot, fill = \Xcolour] {};
\node (a1sys1dotl) [above of = a1sys1dot, xshift = -4mm,yshift = -3mm] {};
\node (a1sys1dotr) [above of = a1sys1dot, xshift = +4mm,yshift = -3mm] {};
\node (a1sys1dotellipsis) [above of = a1sys1dot, yshift = -2mm,yshift = -3mm] {$...$};

\node (a1sys1) [below of = a1sys1dot,yshift = +3mm] {$a_1$};

\node (sys1dotellipsis) [right of = a1sys1dot,xshift = -3mm] {$\cdot \cdot \cdot$};
\node (sys1label) [below of = sys1dotellipsis, yshift = -2mm] {System $1$};

\node (aMsys1dot) [smalldot, right of = sys1dotellipsis,xshift = -3mm, fill = \Xcolour] {};
\node (aMsys1dotl) [above of = aMsys1dot, xshift = -4mm,yshift = -3mm] {};
\node (aMsys1dotr) [above of = aMsys1dot, xshift = +4mm,yshift = -3mm] {};
\node (aMsys1dotellipsis) [above of = aMsys1dot, yshift = -2mm,yshift = -3mm] {$...$};

\node (aMsys1) [below of = aMsys1dot,yshift = +3mm] {$a_M$};

\node (sys1sysiellipsis) [right of = aMsys1dot, yshift = -3mm, xshift = -1mm] {};


\node (a1sysidot) [smalldot, fill = \Xcolour, right of = sys1sysiellipsis, yshift = +3mm, xshift = -1mm] {};
\node (a1sysidotl) [above of = a1sysidot, xshift = -4mm,yshift = -3mm] {};
\node (a1sysidotr) [above of = a1sysidot, xshift = +4mm,yshift = -3mm] {};
\node (a1sysidotellipsis) [above of = a1sysidot, yshift = -2mm,yshift = -3mm] {$...$};

\node (a1sysi) [below of = a1sysidot,yshift = +3mm] {$a_1$};

\node (sysidotellipsis) [right of = a1sysidot,xshift = -3mm] {$\cdot \cdot \cdot$};

\node (arsysidot) [smalldot, fill = \Xcolour, right of = sysidotellipsis,xshift = -3mm] {};
\node (arsysidotl) [above of = arsysidot, xshift = -4mm,yshift = -3mm] {};
\node (arsysidotc) [above of = arsysidot, xshift = +4mm,yshift = -3mm] {};
\node (arsysidotr) [above of = arsysidot, xshift = +8mm,yshift = -3mm] {};
\node (arsysidotellipsis) [above of = arsysidot, yshift = -2mm,yshift = -3mm] {$...$};
\node (arsysidotellipsis) [above of = arsysidot, xshift = +6mm,yshift = -2mm,yshift = -3mm] {$...$};

\node (arsysi) [below of = arsysidot,yshift = +3mm] {$a_r$};

\node (sysilabel) [below of = arsysidot, yshift = -2mm] {System $i$};
\node (sysidotellipsis2) [right of = arsysidot,xshift = -3mm] {$\cdot \cdot \cdot$};

\node (aMsysidot) [smalldot, fill = \Xcolour, right of = sysidotellipsis2,xshift = -3mm] {};
\node (aMsysidotl) [above of = aMsysidot, xshift = -4mm,yshift = -3mm] {};
\node (aMsysidotr) [above of = aMsysidot, xshift = +4mm,yshift = -3mm] {};
\node (aMsysidotellipsis) [above of = aMsysidot, yshift = -2mm,yshift = -3mm] {$...$};

\node (aMsysi) [below of = aMsysidot,yshift = +3mm] {$a_M$};

\node (sysisysNellipsis) [right of = aMsysidot, yshift = -3mm, xshift = -1mm] {};


\node (a1sysNdot) [smalldot, fill = \Xcolour, right of = sysisysNellipsis, yshift = +3mm, xshift = -1mm] {};
\node (a1sysNdotl) [above of = a1sysNdot, xshift = -4mm,yshift = -3mm] {};
\node (a1sysNdotr) [above of = a1sysNdot, xshift = +4mm,yshift = -3mm] {};
\node (a1sysNdotellipsis) [above of = a1sysNdot, yshift = -2mm,yshift = -3mm] {$...$};

\node (a1sysN) [below of = a1sysNdot,yshift = +3mm] {$a_1$};

\node (sysNdotellipsis) [right of = a1sysNdot,xshift = -3mm] {$\cdot \cdot \cdot$};
\node (sysNlabel) [below of = sysNdotellipsis, yshift = -2mm] {System $N$};

\node (aMsysNdot) [smalldot, fill = \Xcolour, right of = sysNdotellipsis,xshift = -3mm] {};
\node (aMsysNdotl) [above of = aMsysNdot, xshift = -4mm,yshift = -3mm] {};
\node (aMsysNdotr) [above of = aMsysNdot, xshift = +4mm,yshift = -3mm] {};
\node (aMsysNdotellipsis) [above of = aMsysNdot, yshift = -2mm,yshift = -3mm] {$...$};

\node (aMsysN) [below of = aMsysNdot,yshift = +3mm] {$a_M$};

\begin{pgfonlayer}{background}
\draw[->-=.5,out=90,in=270] (a1sys1) to (a1sys1dot);
\draw[->-=.5,out=90,in=270] (a1sys1dot) to (a1sys1dotl);
\draw[->-=.5,out=90,in=270] (a1sys1dot) to (a1sys1dotr);

\draw[->-=.5,out=90,in=270] (aMsys1) to (aMsys1dot);
\draw[->-=.5,out=90,in=270] (aMsys1dot) to (aMsys1dotl);
\draw[->-=.5,out=90,in=270] (aMsys1dot) to (aMsys1dotr);

\draw[->-=.5,out=90,in=270] (a1sysi) to (a1sysidot);
\draw[->-=.5,out=90,in=270] (a1sysidot) to (a1sysidotl);
\draw[->-=.5,out=90,in=270] (a1sysidot) to (a1sysidotr);

\draw[->-=.5,out=90,in=270] (arsysi) to (arsysidot);
\draw[->-=.5,out=90,in=270] (arsysidot) to (arsysidotl);
\draw[-,out=90,in=270] (arsysidot) to (arsysidotc);
\draw[->-=.5,out=90,in=270] (arsysidot) to (arsysidotr);

\draw[->-=.5,out=90,in=270] (aMsysi) to (aMsysidot);
\draw[->-=.5,out=90,in=270] (aMsysidot) to (aMsysidotl);
\draw[->-=.5,out=90,in=270] (aMsysidot) to (aMsysidotr);

\draw[->-=.5,out=90,in=270] (a1sysN) to (a1sysNdot);
\draw[->-=.5,out=90,in=270] (a1sysNdot) to (a1sysNdotl);
\draw[->-=.5,out=90,in=270] (a1sysNdot) to (a1sysNdotr);

\draw[->-=.5,out=90,in=270] (aMsysN) to (aMsysNdot);
\draw[->-=.5,out=90,in=270] (aMsysNdot) to (aMsysNdotl);
\draw[->-=.5,out=90,in=270] (aMsysNdot) to (aMsysNdotr);

\end{pgfonlayer}



\node (alfa1dotellipsis) [above of = sys1dotellipsis, yshift = 20mm] {$...$};
\node (alfa1label) [above of = alfa1dotellipsis, yshift = 0mm] {Measurement $1$};

\node (alfa1sys1anchor) [left of = alfa1dotellipsis, yshift = -5mm, xshift = 5mm] {};
\node (alfa1sys1) [above of = alfa1sys1anchor,yshift = -2mm] {$\alpha^1_1$};

\node (alfa1sysNanchor) [right of = alfa1dotellipsis,yshift = -5mm, xshift = -5mm] {};
\node (alfa1sysN) [above of = alfa1sysNanchor,yshift = -2mm] {$\alpha^1_N$};


\node (alfasanchor) [above of = arsysidot, yshift = 20mm] {};
\node (alfaslabel) [above of = alfasanchor, yshift = 0mm] {Measurement $s$};
\node (alfassysianchor) [below of = alfasanchor, yshift = 5mm] {};
\node (alfassysi) [above of = alfassysianchor,yshift = -2mm] {$\alpha^s_j$};

\node (alfasdotellipsisl) [left of = alfasanchor, xshift = 5mm] {$...$};
\node (alfasdotellipsisr) [right of = alfasanchor, xshift = -5mm] {$...$};

\node (alfassys1anchor) [left of = alfasdotellipsisl, yshift = -5mm, xshift = 5mm] {};
\node (alfassys1) [above of = alfassys1anchor,yshift = -2mm] {$\alpha^s_1$};

\node (alfassysNanchor) [right of = alfasdotellipsisr,yshift = -5mm, xshift = -5mm] {};
\node (alfassysN) [above of = alfassysNanchor,yshift = -2mm] {$\alpha^s_N$};


\node (alfaSdotellipsis) [above of = sysNdotellipsis, yshift = 20mm] {$...$};
\node (alfaSlabel) [above of = alfaSdotellipsis, yshift = 0mm] {Measurement $S$};

\node (alfaSsys1anchor) [left of = alfaSdotellipsis, yshift = -5mm, xshift = 5mm] {};
\node (alfaSsys1) [above of = alfaSsys1anchor,yshift = -2mm] {$\alpha^S_1$};

\node (alfaSsysNanchor) [right of = alfaSdotellipsis,yshift = -5mm, xshift = -5mm] {};
\node (alfaSsysN) [above of = alfaSsysNanchor,yshift = -2mm] {$\alpha^S_N$};

\begin{pgfonlayer}{background}
\draw[->-=.5,out=90,in=270] (alfa1sys1anchor) to (alfa1sys1);
\draw[->-=.5,out=90,in=270] (alfa1sysNanchor) to (alfa1sysN);

\draw[->-=.5,out=90,in=270] (alfassys1anchor) to (alfassys1);
\draw[->-=.5,out=90,in=270] (alfassysianchor) to (alfassysi);
\draw[->-=.5,out=90,in=270] (alfassysNanchor) to (alfassysN);

\draw[->-=.5,out=90,in=270] (alfaSsys1anchor) to (alfaSsys1);
\draw[->-=.5,out=90,in=270] (alfaSsysNanchor) to (alfaSsysN);
\end{pgfonlayer}


\node (label) [above of = arsysidotc, xshift = -2mm,yshift = -2mm] {Connected iff $i=j$ and $a_r = \alpha^s_j$};

\node (localmaplabel) [below of = alfa1sys1anchor, yshift = 7mm] {\textbf{Local Map}};

\begin{pgfonlayer}{background}
\node (box) [box, above of = label, yshift = -13mm, xshift = -3mm, minimum width = 110mm, minimum height = 28mm, fill = none] {};
\end{pgfonlayer}

\begin{pgfonlayer}{background}
\draw[->-=.5,out=90,in=270] (arsysidotc.270) to (label);
\draw[-,out=90,in=270] (label) to (alfassysianchor.90);
\end{pgfonlayer}

\end{tikzpicture}}
		\end{equation}		

\noindent A \textbf{local hidden variable model} for an $N$-partite Mermin measurement scenario is a state $\Lambda$ of $\SpaceH^{\tensor (N \cdot S)}$, obtained by applying the local map for the scenario to some state $\Lambda'$ of $\SpaceH^{\tensor (M \cdot N)}$. We further require that for each $s=1,...,S$, the Mermin measurement $\underline{\alpha}^s$ is the same as the state obtained from $\Lambda$ by composing an $\XcounitSym$ with each output wires of each measurement $t$ with $t \neq s$: 
		\begin{equation}\label{eqn_MerminLHVCondition}
			\hbox{\begin{tikzpicture}[node distance = 9.5mm]

\node (equals) {$=$};


\node (LHSanchor) [left of = equals, xshift = -30mm]{$\cdot \cdot \cdot$};

\node (alphajNstar) [dot,inner sep = 0.6mm,right of = LHSanchor, fill = \Zcolour] {$-\alpha^s_N$};
\node (alphajN) [dot,inner sep = 0.6mm,right of = alphajNstar, xshift=3mm, fill = \Zcolour] {$+\alpha^s_N$};

\node (alphaj1) [dot,inner sep = 1mm,left of = LHSanchor, fill = \Zcolour] {$+\alpha^s_1$};
\node (alphaj1star) [dot,inner sep = 1mm,left of = alphaj1, xshift=-3mm, fill = \Zcolour] {$-\alpha^s_1$};

\node (reddotl) [smalldot, above of = LHSanchor, yshift = 5mm, xshift = -15mm,fill = \Xcolour] {};
\node (reddotr) [smalldot, above of = LHSanchor, yshift = 5mm, xshift = +15mm,fill = \Xcolour] {};

\node (outl) [above of = reddotl, xshift = +9mm]{$\alpha^s_1$};
\node (outdots) [above of = LHSanchor, yshift = 15mm]{$...$};
\node (outr) [above of = reddotr, xshift = -9mm]{$\alpha^s_N$};

\node (greendotl) [smalldot, below of = LHSanchor, yshift = -5mm, xshift = -7mm,fill = \Zcolour] {};
\node (greendotr) [smalldot, below of = LHSanchor, yshift = -5mm, xshift = +7mm,fill = \Zcolour] {};

\begin{pgfonlayer}{background}
\draw[->-=.5,out=90,in=315] (alphaj1) to (reddotl);
\draw[-<-=.5,out=90,in=225] (alphaj1star) to (reddotl);

\draw[->-=.5,out=90,in=315] (alphajN) to (reddotr);
\draw[-<-=.5,out=90,in=225] (alphajNstar) to (reddotr);

\draw[-<-=.5,out=135,in=270] (greendotl) to (alphaj1star);
\draw[-<-=.5,out=45,in=270] (greendotl) to (alphajNstar);

\draw[->-=.5,out=135,in=270] (greendotr) to (alphaj1);
\draw[->-=.5,out=45,in=270] (greendotr) to (alphajN);

\draw[->-=.5,out=90,in=270] (reddotl) to (outl);
\draw[->-=.5,out=90,in=270] (reddotr) to (outr);

\end{pgfonlayer}


\node (RHSanchor) [right of = equals, xshift = 15mm, yshift = -4mm] {};

\node (lambda) [wide kpoint, below of = RHSanchor, minimum width = 30mm, minimum height = 9mm] {$\Lambda'$};

\node (a1sys1) [above of = lambda, xshift = -16mm] {};
\node (ellipsissys1) [above of = lambda, xshift = -11mm] {$\cdot\cdot\cdot$};
\node (aMsys1) [above of = lambda, xshift = -7mm] {};

\node (a1sysN) [above of = lambda, xshift = +16mm] {};
\node (ellipsissysN) [above of = lambda, xshift = +11mm] {$\cdot\cdot\cdot$};
\node (aMsysN) [above of = lambda, xshift = +7mm] {};

\node (localmap) [box, above of = RHSanchor, minimum height = 17mm, minimum width = 36mm] {
\textbf{Local Map}};

\node (RHSanchorhigh) [above of = RHSanchor, yshift = 15mm] {};

\node (RHSanchordots) [above of = RHSanchorhigh, yshift = -6mm] {$...$};

\node (alfa11anchor) [below of = RHSanchorhigh, xshift = -16mm, yshift = +4mm] {};
\node (alfa11) [smalldot, above of = alfa11anchor, yshift = -4mm, fill = \Xcolour] {};

\node (meas1ellipsis) [below of = RHSanchorhigh, xshift = -13mm, yshift = +10mm] {$...$};

\node (alfa1nanchor) [below of = RHSanchorhigh, xshift = -10mm, yshift = +4mm] {};
\node (alfa1n) [smalldot, above of = alfa1nanchor, yshift = -4mm, fill = \Xcolour] {};

\node (alfaj1anchor) [below of = RHSanchorhigh, xshift = -3mm, yshift = +4mm] {};
\node (alfaj1) [above of = alfaj1anchor, yshift = +0mm,xshift = -3mm] {$\alpha^s_1$};

\node (alfajnanchor) [below of = RHSanchorhigh, xshift = +3mm, yshift = +4mm] {};
\node (alfajn) [above of = alfajnanchor, yshift = +0mm,xshift = +3mm] {$\alpha^s_N$};

\node (alfaS1anchor) [below of = RHSanchorhigh, xshift = +16mm, yshift = +4mm] {};
\node (alfaS1) [smalldot, above of = alfaS1anchor, yshift = -4mm, fill = \Xcolour] {};

\node (measSellipsis) [below of = RHSanchorhigh, xshift = +13mm, yshift = +10mm] {$...$};

\node (alfaSnanchor) [below of = RHSanchorhigh, xshift = +10mm, yshift = +4mm] {};
\node (alfaSn) [smalldot, above of = alfaSnanchor, yshift = -4mm, fill = \Xcolour] {};

\begin{pgfonlayer}{background}

\draw[->-=.5,out=90,in=270] (lambda.150) to (a1sys1.90);
\draw[->-=.5,out=90,in=270] (lambda.135) to (aMsys1.90);

\draw[->-=.5,out=90,in=270] (lambda.45) to (a1sysN.90);
\draw[->-=.5,out=90,in=270] (lambda.60) to (aMsysN.90);

\draw[->-=.5,out=90,in=270] (alfa11anchor.270) to (alfa11);
\draw[->-=.5,out=90,in=270] (alfa1nanchor.270) to (alfa1n);
\draw[->-=.5,out=90,in=270] (alfaj1anchor.270) to (alfaj1);
\draw[->-=.5,out=90,in=270] (alfajnanchor.270) to (alfajn);
\draw[->-=.5,out=90,in=270] (alfaS1anchor.270) to (alfaS1);
\draw[->-=.5,out=90,in=270] (alfaSnanchor.270) to (alfaSn);

\end{pgfonlayer}

\end{tikzpicture}}
		\end{equation}	
	\end{definition}

\noindent The definition of local hidden variables finally allows us to formulate our generalised notion of Mermin non-locality. 

	\begin{definition}
		We say a $\dagger$-SMC $\CategoryC$ is \textbf{Mermin non-local} if there exists a Mermin scenario for some strongly complementary pair $(\hbox{\begin{tikzpicture} [scale=1,transform shape] 

\def\deltax{0.3} 
\def\deltay{0.5} 


\node [dot, fill=\classicalStructColour] (mult) at (0,0) {};

\end{tikzpicture}}\!,\hbox{\begin{tikzpicture} [scale=1,transform shape] 

\def\deltax{0.3} 
\def\deltay{0.5} 


\node [dot, fill=\groupStructColour] (mult) at (0,0) {};

\end{tikzpicture}}\!)$ of $\dagger$-qSCFAs which has no local hidden variable model. If for all strongly complementary pairs no such measurement exists, then we say that $\CategoryC$ is \textbf{Mermin local}.
	\end{definition}

\noindent Mermin non-locality will shortly be shown to be equivalent to the following algebraic property of the group of $\hbox{\begin{tikzpicture} [scale=1,transform shape] 

\def\deltax{0.3} 
\def\deltay{0.5} 


\node [dot, fill=\classicalStructColour] (mult) at (0,0) {};

\end{tikzpicture}}\!$-phases. The following examples will be used later on to investigate some abstract process theories of interest.

	\begin{definition} \label{def_algExt}
		Let $(G,+,0)$ be an abelian group and $(H,+,0)$ be a subgroup. We say that $G$ is a \textbf{non-trivial algebraic extension} of $H$ if there exists a finite system of equations $(\sum_{j=1}^{l} n^p_j \cdot x_j = h^p)_p$, with $h^p \in H$ and $n^p_j \in \integers$, which has solutions in $G$ but not in $H$. Otherwise, we say $G$ is a \textbf{trivial algebraic extension} of $H$. 
	\end{definition}
If $G = P_{\; \hbox{\begin{tikzpicture} [scale=1,transform shape] 

\def\deltax{0.3} 
\def\deltay{0.5} 


\node [dot, fill=\classicalStructColour] (mult) at (0,0) {};

\end{tikzpicture}}\!}$ is a non-trivial algebraic extension of $H = K_{\; \hbox{\begin{tikzpicture} [scale=1,transform shape] 

\def\deltax{0.3} 
\def\deltay{0.5} 


\node [dot, fill=\groupStructColour] (mult) at (0,0) {};

\end{tikzpicture}}\!}$, then the $\hbox{\begin{tikzpicture} [scale=1,transform shape] 

\def\deltax{0.3} 
\def\deltay{0.5} 


\node [dot, fill=\classicalStructColour] (mult) at (0,0) {};

\end{tikzpicture}}\!$-phases involved in any solution $x_j := \alpha_j$ to a system unsolvable in $K_{\;\hbox{\begin{tikzpicture} [scale=1,transform shape] 

\def\deltax{0.3} 
\def\deltay{0.5} 


\node [dot, fill=\groupStructColour] (mult) at (0,0) {};

\end{tikzpicture}}\!}$ will be called \textbf{algebraically non-trivial phases}.

	\begin{example}\label{example_PhaseGroupZX}
		Let $G = \{0,\pi/2,\pi,-\pi/2\} < \reals / 2\pi\integers$ and $H = \{0,\pi\} < G$. Then $G$ is a non-trivial algebraic extension of $H$, because the single equation $2x = \pi$ has no solution in $H$ but has solution(s) $\pm \pi/2$ in $G$. It is in fact this example that yields the original argument in $\fdHilbCategory$ from \cite{CQM-StrongComplementarity}.
	\end{example}

	\begin{lemma}\label{thm_EqnToSystem}
		Let $(G,+,0)$ be an abelian group and $(H,+,0)$ be a subgroup. Suppose that there is a function $\Phi: G \rightarrow H$ such that for any equation $\sum_{j=1}^{l} n_j \cdot x_j = h$ with $h \in H$ and $n_j \in \integers$, if $x_j := g_j$ is a solution in $G$, $x_j := \Phi(g_j)$ is also a solution (in $H$). Then $G$ is a trivial algebraic extension of $H$.
	\end{lemma}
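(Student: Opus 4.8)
The plan is to show directly that every finite system of equations with integer coefficients and right-hand sides in $H$ which is solvable in $G$ is already solvable in $H$; by Definition \ref{def_algExt} this is exactly what it means for $G$ to be a trivial algebraic extension of $H$. The only tool available is the map $\Phi$, and the key observation is that $\Phi$ is a \emph{single fixed} function $G \rightarrow H$, so it can be applied to a candidate solution coordinate-wise in a way that is consistent across all equations of a system simultaneously.

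Concretely, I would start from a finite system $(\sum_{j=1}^{l} n^p_j \cdot x_j = h^p)_p$, after first observing that there is no loss of generality in assuming every equation of the system is written in the same variables $x_1,\dots,x_l$: if some equation mentions only a subset of the variables, pad it with zero coefficients. Now suppose $x_j := g_j$ (for $j = 1,\dots,l$) is a solution of the system in $G$. Then for each index $p$ the tuple $(g_1,\dots,g_l)$ is in particular a solution of the single equation $\sum_{j=1}^{l} n^p_j \cdot x_j = h^p$, which has $h^p \in H$ and $n^p_j \in \integers$, so the hypothesis on $\Phi$ applies and tells us that $x_j := \Phi(g_j)$ is a solution of that same equation.

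The final step is to note that the tuple $(\Phi(g_1),\dots,\Phi(g_l))$ does not depend on $p$: it is obtained by applying the one function $\Phi$ to the one tuple $(g_1,\dots,g_l)$. Hence this single tuple solves every equation $p$ of the system at once, i.e. it is a solution of the whole system, and it lies in $H^l$ since $\Phi$ takes values in $H$. Therefore the arbitrary system has a solution in $H$, so $G$ is a trivial algebraic extension of $H$.

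There is no serious obstacle here; the two points that need a moment's care are (i) harmonising the variable sets of the different equations, so that a single coordinate-wise application of $\Phi$ makes sense, and (ii) being explicit that the hypothesis, though phrased for a single equation, may be invoked once per equation of the system precisely because $\Phi$ is chosen in advance — it is this \emph{parallel} use of the same $\Phi$ that upgrades the single-equation hypothesis to the system-level conclusion.
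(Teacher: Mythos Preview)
Your proposal is correct and follows essentially the same approach as the paper's proof: apply $\Phi$ coordinate-wise to a $G$-solution, use the single-equation hypothesis once per equation, and observe that the resulting $H$-tuple is the same for every equation and hence solves the whole system. Your write-up is more explicit (in particular about padding with zero coefficients so that all equations share the same variable set), but the underlying idea is identical.
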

	\begin{proof}
		Consider a system with solution $x_j := g_j$ in $G$. Then $x_j := \Phi(g_j)$ solves each individual equation in $H$, and thus also the system. 
	\end{proof}
	
	\begin{example}\label{example_PhaseGroupRel}
		Let $(K,+,0)$ be any finite abelian group, and $G = K \times K'$ for some finite non-trivial abelian group $(K',+,0)$. Let $H<G$ be the subgroup $K \times \{0\}$. If $h = (k,0) \in H$, then any equation $\sum_{j=1}^{N} n_j \cdot x_j = h$ is equivalent to the following pair of equations, where $\pi_{K}$ and $\pi_{K'}$ are the quotient projections onto $K \isom G/K'$ and $K' \isom G/K$ respectively: 
\begin{enumerate}
	\item[a.] $\sum_{j=1}^{N} n_j \cdot \pi_{K}x_j = k$ in $K$
	\item[b.] $\sum_{j=1}^{N} n_j \cdot \pi_{K'}x_j = 0$ in $K'$
\end{enumerate}	
If $x_j := g_j = (\pi_K g_j, \pi_{K'} g_j)$ is a solution in $G$, then $x_j := (\pi_K g_j,0)$ is a solution in $H$. Define $\Phi$ to be the map $g_j:G \mapsto (\pi_K g_j,0) \in H$ and use Lemma \ref{thm_EqnToSystem} to conclude that $G$ is a trivial algebraic extension of $H$. 
	\end{example}

We are now able to introduce our first main result:
	\begin{theorem}[Mermin Non-Locality]
	\label{thm_MerminNonLocality}
		Let $\CategoryC$ be a $\dagger$-SMC, and $(\hbox{\begin{tikzpicture} [scale=1,transform shape] 

\def\deltax{0.3} 
\def\deltay{0.5} 


\node [dot, fill=\classicalStructColour] (mult) at (0,0) {};

\end{tikzpicture}}\!,\hbox{\begin{tikzpicture} [scale=1,transform shape] 

\def\deltax{0.3} 
\def\deltay{0.5} 


\node [dot, fill=\groupStructColour] (mult) at (0,0) {};

\end{tikzpicture}}\!)$ be a strongly complementary pair of $\dagger$-qSCFAs. Suppose further that the $\hbox{\begin{tikzpicture} [scale=1,transform shape] 

\def\deltax{0.3} 
\def\deltay{0.5} 


\node [dot, fill=\groupStructColour] (mult) at (0,0) {};

\end{tikzpicture}}\!$-classical points form a basis. If the group of $\hbox{\begin{tikzpicture} [scale=1,transform shape] 

\def\deltax{0.3} 
\def\deltay{0.5} 


\node [dot, fill=\classicalStructColour] (mult) at (0,0) {};

\end{tikzpicture}}\!$-phases is a non-trivial algebraic extension of the subgroup of $\hbox{\begin{tikzpicture} [scale=1,transform shape] 

\def\deltax{0.3} 
\def\deltay{0.5} 


\node [dot, fill=\groupStructColour] (mult) at (0,0) {};

\end{tikzpicture}}\!$-classical points, then $\CategoryC$ is Mermin non-local.
	\end{theorem}
	\begin{proof} 
		For clarity, we present a proof where the system of equations that defines the phase group as a non-trivial algebraic extension is composed of a single equation. The construction for general systems of $l$ equations consists of $l$ copies of the construction we explicitly give. 
		
		Let $a_1, ..., a_M$ be $\hbox{\begin{tikzpicture} [scale=1,transform shape] 

\def\deltax{0.3} 
\def\deltay{0.5} 


\node [dot, fill=\classicalStructColour] (mult) at (0,0) {};

\end{tikzpicture}}\!$-phases and $a \neq 0$ be (the phase induced by) a $\hbox{\begin{tikzpicture} [scale=1,transform shape] 

\def\deltax{0.3} 
\def\deltay{0.5} 


\node [dot, fill=\groupStructColour] (mult) at (0,0) {};

\end{tikzpicture}}\!$-classical point such that the following equation (in additive $\integers$-module notation, for $n_r \in \integers$) has solution $(x_r := a_r)_{r=1,...,M}$ in the group of $\hbox{\begin{tikzpicture} [scale=1,transform shape] 

\def\deltax{0.3} 
\def\deltay{0.5} 


\node [dot, fill=\classicalStructColour] (mult) at (0,0) {};

\end{tikzpicture}}\!$-phases, but has no solution in the subgroup of (phases induced by) $\hbox{\begin{tikzpicture} [scale=1,transform shape] 

\def\deltax{0.3} 
\def\deltay{0.5} 


\node [dot, fill=\groupStructColour] (mult) at (0,0) {};

\end{tikzpicture}}\!$-classical points:
		\begin{equation}\label{eqn_MerminNonLocalityProofEquation}
			\sum_{r=1}^{M} n_r \cdot a_r = a
		\end{equation}
		This means that we are assuming the group of $\hbox{\begin{tikzpicture} [scale=1,transform shape] 

\def\deltax{0.3} 
\def\deltay{0.5} 


\node [dot, fill=\classicalStructColour] (mult) at (0,0) {};

\end{tikzpicture}}\!$-phases are a non-trivial algebraic extension of the subgroup of $\hbox{\begin{tikzpicture} [scale=1,transform shape] 

\def\deltax{0.3} 
\def\deltay{0.5} 


\node [dot, fill=\groupStructColour] (mult) at (0,0) {};

\end{tikzpicture}}\!$-classical points. Without loss of generality, assume that $n_r \neq 0$ and $a_r \neq 0$ for all $r=1,...,M$. 
		
\noindent Let $k$ be the exponent of the group of $\hbox{\begin{tikzpicture} [scale=1,transform shape] 

\def\deltax{0.3} 
\def\deltay{0.5} 


\node [dot, fill=\groupStructColour] (mult) at (0,0) {};

\end{tikzpicture}}\!$-classical points, and define the following Mermin measurement, where each $a_r$ appears $n_r$ times and $0$ appears $n_0$ times, for some $n_0$ such that $V := \sum_{r=0}^{M} n_r \equiv \modclass{1}{k}$ 
		\begin{equation}\label{eqn_MerminNonLocalityProofMeasurement}
			\underline{\alpha} = (a_1,...,a_1,...,a_M,...,a_M,0,...,0)
		\end{equation}

		\noindent Define a $V$-partite Mermin measurement scenario with $S := n_0+V$ and:
		\begin{align*}\label{eqn_MerminNonLocalityProofMeasurement2}
			\underline{\alpha}^s &:= (0,0,...,0,0) \text{ for } s=1,...,n_0\\
			\underline{\alpha}^{n_0+v}_i &:= \underline{\alpha}_{i+\modclass{v}{V}} \text{ for } v=1,...,V
			\numberthis
		\end{align*}
		The scenario has $n_0$ measurements with only $0$ phases (the \textbf{controls}) and $V$ measurements with cyclic permutations of $\underline{\alpha}$ (the \textbf{variations}). The following diagram depicts the scenario:
		\begin{equation}\label{MerminScenarioProof}
		\hbox{\begin{tikzpicture}[transform shape, scale=1.2]

		\node [style=dot, fill={\classicalStructColour}] (0) at (-7.5, -1.75) {};
		\node [style=dot, fill={\classicalStructColour}] (1) at (-6.5, -1.75) {};
		\node [style=dot, fill={\classicalStructColour}, inner sep={6 pt}] (2) at (-8.5, 0.25) {$0$};
		\node [style=dot, fill={\classicalStructColour}, inner sep={6 pt}] (3) at (-7.5, 0.25) {$0$};
		\node [style=dot, fill={\classicalStructColour}, inner sep={6 pt}] (4) at (-6.5, 0.25) {$0$};
		\node [style=dot, fill={\classicalStructColour}, inner sep={6 pt}] (5) at (-5.5, 0.25) {$0$};
		\node [dot, fill={\groupStructColour}, style=none] (6) at (-6, 1) {};
		\node [dot, fill={\groupStructColour}, style=none] (7) at (-8, 1) {};
		\node [style=none] (8) at (-6, 1.75) {};
		\node [style=none] (9) at (-8, 1.75) {};
		\node [style=none] (10) at (-7, 1) {$\cdot\;\cdot\;\cdot$};
		\node [style=none] (11) at (-4.5, -0) {$\cdot\;\cdot\;\cdot$};
		\node [dot, fill={\groupStructColour}, style=none] (12) at (-1, 1) {};
		\node [style=none] (13) at (-2, 1) {$\cdot\;\cdot\;\cdot$};
		\node [style=dot, fill={\classicalStructColour}, inner sep={6 pt}] (14) at (-0.5, 0.25) {$0$};
		\node [style=dot, fill={\classicalStructColour}, inner sep={6 pt}] (15) at (-1.5, 0.25) {$0$};
		\node [style=none] (16) at (-1, 1.75) {};
		\node [dot, fill={\groupStructColour}, style=none] (17) at (-3, 1) {};
		\node [style=none] (18) at (-3, 1.75) {};
		\node [style=dot, fill={\classicalStructColour}] (19) at (-1.5, -1.75) {};
		\node [style=dot, fill={\classicalStructColour}, inner sep={6 pt}] (20) at (-2.5, 0.25) {$0$};
		\node [style=dot, fill={\classicalStructColour}] (21) at (-2.5, -1.75) {};
		\node [style=dot, fill={\classicalStructColour}, inner sep={6 pt}] (22) at (-3.5, 0.25) {$0$};
		\node [style=dot, fill={\classicalStructColour}, inner sep={4 pt}] (23) at (3.5, 0.25) {$\alpha_N^1$};
		\node [style=none] (24) at (7, 1) {$\cdot\;\cdot\;\cdot$};
		\node [style=dot, fill={\classicalStructColour}, inner sep={1 pt}] (25) at (7.5, 0.25) {$-\alpha_N^V$};
		\node [dot, fill={\groupStructColour}, style=none] (26) at (8, 1) {};
		\node [style=dot, fill={\classicalStructColour}] (27) at (6.5, -1.75) {};
		\node [style=none] (28) at (3, 1.75) {};
		\node [style=none] (29) at (8, 1.75) {};
		\node [style=dot, fill={\classicalStructColour}, inner sep={4 pt}] (30) at (1.5, 0.25) {$\alpha_1^1$};
		\node [style=dot, fill={\classicalStructColour}, inner sep={1 pt}] (31) at (5.5, 0.25) {$-\alpha_1^V$};
		\node [style=none] (32) at (2, 1) {$\cdot\;\cdot\;\cdot$};
		\node [style=dot, fill={\classicalStructColour}, inner sep={1 pt}] (33) at (2.5, 0.25) {$-\alpha_N^1$};
		\node [style=dot, fill={\classicalStructColour}] (34) at (1.5, -1.75) {};
		\node [dot, fill={\groupStructColour}, style=none] (35) at (6, 1) {};
		\node [style=dot, fill={\classicalStructColour}] (36) at (7.5, -1.75) {};
		\node [dot, fill={\groupStructColour}, style=none] (37) at (1, 1) {};
		\node [dot, fill={\groupStructColour}, style=none] (38) at (3, 1) {};
		\node [style=dot, fill={\classicalStructColour}, inner sep={1 pt}] (39) at (0.5, 0.25) {$-\alpha_1^1$};
		\node [style=none] (40) at (4.5, -0) {$\cdot\;\cdot\;\cdot$};
		\node [style=dot, fill={\classicalStructColour}] (41) at (2.5, -1.75) {};
		\node [style=dot, fill={\classicalStructColour}, inner sep={4 pt}] (42) at (6.5, 0.25) {$\alpha_1^V$};
		\node [style=dot, fill={\classicalStructColour}, inner sep={4 pt}] (43) at (8.5, 0.25) {$\alpha_N^V$};
		\node [style=none] (44) at (1, 1.75) {};
		\node [style=none] (45) at (6, 1.75) {};

	\begin{pgfonlayer}{background}
		\draw [->-=.5] (2) to (0);
		\draw [->-=.5] (1) to (3);
		\draw [->-=.5] (4) to (0);
		\draw [->-=.5] (1) to (5);
		\draw [->-=.5, bend right=45, looseness=1.00] (5) to (6.center);
		\draw [->-=.5, bend right=45, looseness=1.00] (6.center) to (4);
		\draw [->-=.5, bend right=45, looseness=1.00] (3) to (7.center);
		\draw [->-=.5, bend right=45, looseness=1.00] (7.center) to (2);
		\draw [->-=.5] (7.center) to (9.center);
		\draw [->-=.5] (6.center) to (8.center);
		\draw [->-=.5] (22) to (21);
		\draw [->-=.5] (19) to (20);
		\draw [->-=.5] (15) to (21);
		\draw [->-=.5] (19) to (14);
		\draw [->-=.5, bend right=45, looseness=1.00] (14) to (12.center);
		\draw [->-=.5, bend right=45, looseness=1.00] (12.center) to (15);
		\draw [->-=.5, bend right=45, looseness=1.00] (20) to (17.center);
		\draw [->-=.5, bend right=45, looseness=1.00] (17.center) to (22);
		\draw [->-=.5] (17.center) to (18.center);
		\draw [->-=.5] (12.center) to (16.center);
		\draw [->-=.5] (39) to (34);
		\draw [->-=.5] (41) to (30);
		\draw [->-=.5] (33) to (34);
		\draw [->-=.5] (41) to (23);
		\draw [->-=.5, bend right=45, looseness=1.00] (23) to (38.center);
		\draw [->-=.5, bend right=45, looseness=1.00] (38.center) to (33);
		\draw [->-=.5, bend right=45, looseness=1.00] (30) to (37.center);
		\draw [->-=.5, bend right=45, looseness=1.00] (37.center) to (39);
		\draw [->-=.5] (37.center) to (44.center);
		\draw [->-=.5] (38.center) to (28.center);
		\draw [->-=.5] (31) to (27);
		\draw [->-=.5] (36) to (42);
		\draw [->-=.5] (25) to (27);
		\draw [->-=.5] (36) to (43);
		\draw [->-=.5, bend right=45, looseness=1.00] (43) to (26.center);
		\draw [->-=.5, bend right=45, looseness=1.00] (26.center) to (25);
		\draw [->-=.5, bend right=45, looseness=1.00] (42) to (35.center);
		\draw [->-=.5, bend right=45, looseness=1.00] (35.center) to (31);
		\draw [->-=.5] (35.center) to (45.center);
		\draw [->-=.5] (26.center) to (29.center);
	\end{pgfonlayer}
	                
\draw [thick, decoration={brace, mirror,amplitude=9pt},decorate] (-9,-2.3) to (-0.25,-2.3);
\draw [thick, decoration={brace, mirror,amplitude=9pt},decorate] (0.25,-2.3) to (9,-2.3);
\node at (-4.6, -3.1) {\bf controls};
\node at (4.6, -3.1) {\bf variations};

\end{tikzpicture}}
		\end{equation}	

\noindent To show that the scenario from Equation \ref{MerminScenarioProof} does not admit a local hidden variable:
		\begin{enumerate}
			\item[1a.] we add up (in the group of $\hbox{\begin{tikzpicture} [scale=1,transform shape] 

\def\deltax{0.3} 
\def\deltay{0.5} 


\node [dot, fill=\classicalStructColour] (mult) at (0,0) {};

\end{tikzpicture}}\!$-phases) all the components of each control, using Lemma \ref{thm_MerminMeasurementGHZstate}, and obtain $0$ from each control
			\item[1b.] we add up all the components of each variation, again using Lemma \ref{thm_MerminMeasurementGHZstate}, and obtain $a$ from each variation
			\item[2a.] we add up the result from all the controls, and obtain $\Sigma_C := n_0 \cdot 0 = 0$
			\item[2b.] we add up the result from all variations, and obtain $\Sigma_V := V \cdot a = a$ , using the fact that $a$ is in the subgroup of (phases induced by) $\hbox{\begin{tikzpicture} [scale=1,transform shape] 

\def\deltax{0.3} 
\def\deltay{0.5} 


\node [dot, fill=\groupStructColour] (mult) at (0,0) {};

\end{tikzpicture}}\!$-classical points and $V$ is congruent to $1$ modulo the exponent of the subgroup
			\item[3.] we subtract $\Sigma_C$ from $\Sigma_V$, using the antipode $\hbox{\begin{tikzpicture} [scale=1,transform shape] 

\def\deltax{0.3} 
\def\deltay{0.5} 


\node [antipode] (mult) at (0,0) {};
\node (mult_label_in) at (0,-\deltay) {};
\node (mult_label_out) at (0,+\deltay) {};
\draw[-] (mult_label_in) to (mult);
\draw[-] (mult) to (mult_label_out);

\end{tikzpicture}}\!$ of the strongly complementary pair $(\hbox{\begin{tikzpicture} [scale=1,transform shape] 

\def\deltax{0.3} 
\def\deltay{0.5} 


\node [dot, fill=\classicalStructColour] (mult) at (0,0) {};

\end{tikzpicture}}\!,\hbox{\begin{tikzpicture} [scale=1,transform shape] 

\def\deltax{0.3} 
\def\deltay{0.5} 


\node [dot, fill=\groupStructColour] (mult) at (0,0) {};

\end{tikzpicture}}\!)$, and obtain $a-0 = a$
			\item[4.] we test the result against the $\hbox{\begin{tikzpicture} [scale=1,transform shape] 

\def\deltax{0.3} 
\def\deltay{0.5} 


\node [dot, fill=\groupStructColour] (mult) at (0,0) {};

\end{tikzpicture}}\!$-classical point $\bra{a}$, and obtain the non-zero scalar $\braket{a}{a}$
		\end{enumerate}		
		The procedure is summarised by the following diagram:
		\begin{equation}\label{MerminSetupProof}
			\hbox{\begin{tikzpicture}[node distance = 9.5mm]

\node (a) [dot, inner sep = 1mm, fill = \Zcolour] {$a$};
\node (topdot) [smalldot, below of = a, fill = \Zcolour, yshift = 2mm] {};
\node (antipode) [dot, below left of = topdot,  yshift = 2mm,xshift = -3mm] {};
\node (antipoderfiller) [below right of = topdot, yshift = 2mm,xshift = +6mm] {};

\node (ldot) [smalldot, below left of = antipode, fill = \Zcolour, yshift = 2mm, xshift = -3mm] {};
\node (rdot) [smalldot, below right of = antipoderfiller, fill = \Zcolour, yshift = 2mm, xshift = + 6mm] {};

\node (lcontroldot) [smalldot, below left of = ldot, fill = \Zcolour, yshift = 2mm, xshift = -2mm] {};
\node (lcontrol) [box, below of = lcontroldot, yshift = 2mm] {$0 ... 0$};

\node (rcontroldot) [smalldot, below right of = ldot, fill = \Zcolour, yshift = 2mm,xshift = +2mm] {};
\node (rcontrol) [box, below of = rcontroldot, yshift = 2mm] {$0 ... 0$};

\node (lvariationdot) [smalldot, below left of = rdot, fill = \Zcolour, yshift = 2mm,xshift = -8mm] {};
\node (lvariation) [box, below of = lvariationdot, yshift = 2mm] {$a_1...a_m0...0$};

\node (rvariationdot) [smalldot, below right of = rdot, fill = \Zcolour, yshift = 2mm,xshift = +8mm] {};
\node (rvariation) [box, below of = rvariationdot, yshift = 2mm] {$0a_1...a_m0...0$};

\node (controlellipsis) [below of = ldot, yshift = -2.5mm] {...};
\node (variationellipsis) [below of = rdot, yshift = -2.5mm] {...};
\node (controllabel) [below of = controlellipsis, yshift = 3mm] {$n_0$ controls};
\node (variationlabel) [below of = variationellipsis, yshift = 3mm] {$V$ variations};

\node (lcontrolellipsis) [below of = lcontroldot, yshift = +7mm] {...};
\node (rcontrolellipsis) [below of = rcontroldot, yshift = +7mm] {...};
\node (lvariationellipsis) [below of = lvariationdot, yshift = +7mm] {...};
\node (rvariationellipsis) [below of = rvariationdot, yshift = +7mm] {...};

\begin{pgfonlayer}{background}
\draw[->-=.5,out=90,in=270] (topdot) to (a);
\draw[->-=.5,out=45,in=225] (antipode) to (topdot);
\draw[->-=.5,out=45,in=225] (ldot) to (antipode);
\draw[->-=.5,out=135,in=315] (rdot) to (topdot);

\draw[->-=.5,out=45,in=225] (lcontroldot) to (ldot);
\draw[->-=.5,out=135,in=315] (rcontroldot) to (ldot);
\draw[->-=.5,out=45,in=225] (lvariationdot) to (rdot);
\draw[->-=.5,out=135,in=315] (rvariationdot) to (rdot);

\draw[->-=.5,out=90,in=225] (lcontrol.135) to (lcontroldot);
\draw[->-=.5,out=90,in=315] (lcontrol.45) to (lcontroldot);
\draw[->-=.5,out=90,in=225] (rcontrol.135) to (rcontroldot);
\draw[->-=.5,out=90,in=315] (rcontrol.45) to (rcontroldot);

\draw[->-=.5,out=90,in=225] (lvariation.155) to (lvariationdot);
\draw[->-=.5,out=90,in=315] (lvariation.25) to (lvariationdot);
\draw[->-=.5,out=90,in=225] (rvariation.155) to (rvariationdot);
\draw[->-=.5,out=90,in=315] (rvariation.25) to (rvariationdot);
\end{pgfonlayer}


\node (l0) [left of = ldot, xshift = +3mm, yshift = -1mm] {$0$};
\node (r0) [right of = ldot, xshift = -3mm, yshift = -1mm] {$0$};

\node (la) [left of = rdot, xshift = +2mm, yshift = 0mm] {$a$};
\node (ra) [right of = rdot, xshift = -2mm, yshift = 0mm] {$a$};

\node (n00) [left of = antipode, xshift = +3mm, yshift = 0mm] {$n_0 \cdot 0$};
\node (vaa) [right of = topdot, xshift = 0mm, yshift = -2mm] {$V \cdot a$};

\node (topa) [left of = topdot, xshift = +7mm, yshift = 3.5mm] {$a$};

\end{tikzpicture}}
		\end{equation}

\noindent The same procedure applied to any local hidden variable model always yields the $0$ scalar. A local hidden variable model is nothing but the local map for the scenario applied to some state, so it is enough to show that the above procedure yields the constant $0$ function when composed with the local map:
		\begin{equation}\label{MerminSetupProofLHV}
		\hbox{\begin{tikzpicture}[node distance = 9.5mm]

\node (a) [dot, inner sep = 1mm, fill = \Zcolour] {$a$};
\node (topdot) [smalldot, below of = a, fill = \Zcolour, yshift = 2mm] {};
\node (antipode) [dot, below left of = topdot,  yshift = 2mm,xshift = -3mm] {};
\node (antipoderfiller) [below right of = topdot, yshift = 2mm,xshift = +6mm] {};

\node (ldot) [smalldot, below left of = antipode, fill = \Zcolour, yshift = 2mm, xshift = -3mm] {};
\node (rdot) [smalldot, below right of = antipoderfiller, fill = \Zcolour, yshift = 2mm, xshift = + 6mm] {};

\node (lcontroldot) [smalldot, below left of = ldot, fill = \Zcolour, yshift = 2mm, xshift = -2mm] {};
\node (lcontrol) [box, below of = lcontroldot, yshift = 2mm] {};

\node (rcontroldot) [smalldot, below right of = ldot, fill = \Zcolour, yshift = 2mm,xshift = +2mm] {};
\node (rcontrol) [box, below of = rcontroldot, yshift = 2mm] {};

\node (lvariationdot) [smalldot, below left of = rdot, fill = \Zcolour, yshift = 2mm,xshift = -8mm] {};
\node (lvariation) [box, below of = lvariationdot, yshift = 2mm] {};

\node (rvariationdot) [smalldot, below right of = rdot, fill = \Zcolour, yshift = 2mm,xshift = +8mm] {};
\node (rvariation) [box, below of = rvariationdot, yshift = 2mm] {};

\node (localmap) [box, below of = a, yshift = -21mm, xshift=6mm, yscale=1.6, xscale=14] {};

\node (localmaplabel) [below of = a, yshift = -21mm, xshift=6mm] {\bf Local Map};


\node (lcontrolellipsis) [below of = lcontroldot, yshift = +7mm] {...};
\node (rcontrolellipsis) [below of = rcontroldot, yshift = +7mm] {...};
\node (lvariationellipsis) [below of = lvariationdot, yshift = +7mm] {...};
\node (rvariationellipsis) [below of = rvariationdot, yshift = +7mm] {...};

\node (xxx) [below of = rvariationdot, yshift = -9mm, xshift=-10mm] {.\;.\;.};
\node (L2) [below of = rvariationdot, yshift = -9mm, xshift=-16mm] {$\alpha_1$};
\node (L3) [below of = rvariationdot, yshift = -9mm, xshift=-3mm] {$\alpha_M$};
\node (xxxx) [below of = rvariationdot, yshift = -13mm, xshift=-9mm] {System $N$};

\node (x) [below of = rcontrolellipsis, yshift = -6mm, xshift=-10mm] {.\;.\;.};
\node (L0) [below of = rcontrolellipsis, yshift = -6mm, xshift=-16mm] {$\alpha_1$};
\node (L1) [below of = rcontrolellipsis, yshift = -6mm, xshift=-3mm] {$\alpha_M$};
\node (xx) [below of = rcontrolellipsis, yshift = -10mm, xshift=-9mm] {System $1$};

\node (xxxxx) [below of = a, yshift = -33mm, xshift=5mm] {.\quad.\quad.\quad.};

\begin{pgfonlayer}{background}
\draw[->-=.5,out=90,in=270] (topdot) to (a);
\draw[->-=.5,out=45,in=225] (antipode) to (topdot);
\draw[->-=.5,out=45,in=225] (ldot) to (antipode);
\draw[->-=.5,out=135,in=315] (rdot) to (topdot);

\draw[->-=.5,out=45,in=225] (lcontroldot) to (ldot);
\draw[->-=.5,out=135,in=315] (rcontroldot) to (ldot);
\draw[->-=.5,out=45,in=225] (lvariationdot) to (rdot);
\draw[->-=.5,out=135,in=315] (rvariationdot) to (rdot);

\draw[->-=.5,out=90,in=225] (lcontrol.135) to (lcontroldot);
\draw[->-=.5,out=90,in=315] (lcontrol.45) to (lcontroldot);
\draw[->-=.5,out=90,in=225] (rcontrol.135) to (rcontroldot);
\draw[->-=.5,out=90,in=315] (rcontrol.45) to (rcontroldot);

\draw[->-=.5,out=90,in=225] (lvariation.135) to (lvariationdot);
\draw[->-=.5,out=90,in=315] (lvariation.45) to (lvariationdot);
\draw[->-=.5,out=90,in=225] (rvariation.135) to (rvariationdot);
\draw[->-=.5,out=90,in=315] (rvariation.45) to (rvariationdot);

\draw[->-=.5] (L0) to (-5.35,-7.1);
\draw[->-=.5] (L1) to (-2.8,-7.1);
\draw[->-=.5] (L2) to (5.05,-7.1);
\draw[->-=.5] (L3) to (7.6,-7.1);
\end{pgfonlayer}


\node (l0) [left of = ldot, xshift = +3mm, yshift = -1mm] {$0$};
\node (r0) [right of = ldot, xshift = -3mm, yshift = -1mm] {$0$};

\node (la) [left of = rdot, xshift = +2mm, yshift = 0mm] {$a$};
\node (ra) [right of = rdot, xshift = -2mm, yshift = 0mm] {$a$};

\node (n00) [left of = antipode, xshift = +3mm, yshift = 0mm] {$n_0 \cdot 0$};
\node (vaa) [right of = topdot, xshift = 0mm, yshift = -2mm] {$V \cdot a$};

\node (topa) [left of = topdot, xshift = +7mm, yshift = 3.5mm] {$a$};

\end{tikzpicture}}
		\end{equation}		
		Since the $\hbox{\begin{tikzpicture} [scale=1,transform shape] 

\def\deltax{0.3} 
\def\deltay{0.5} 


\node [dot, fill=\groupStructColour] (mult) at (0,0) {};

\end{tikzpicture}}\!$-classical points form a basis, it is sufficient to show that the map from Diagram \ref{MerminSetupProofLHV} always yields $0$ when applied to $\hbox{\begin{tikzpicture} [scale=1,transform shape] 

\def\deltax{0.3} 
\def\deltay{0.5} 


\node [dot, fill=\groupStructColour] (mult) at (0,0) {};

\end{tikzpicture}}\!$-classical points. In the following diagram, the $\hbox{\begin{tikzpicture} [scale=1,transform shape] 

\def\deltax{0.3} 
\def\deltay{0.5} 


\node [dot, fill=\classicalStructColour] (mult) at (0,0) {};

\end{tikzpicture}}\!$ nodes have been re-arranged using the spider theorem, so that the wiring of the local map can be written down explicitly in a clean way. The diagram also annotates the $\hbox{\begin{tikzpicture} [scale=1,transform shape] 

\def\deltax{0.3} 
\def\deltay{0.5} 


\node [dot, fill=\groupStructColour] (mult) at (0,0) {};

\end{tikzpicture}}\!$-classical values on the wires at each stage to aid in following the argument:
		\begin{enumerate}
			\item[1.] the values $b_0^1,...,b_0^V$ for the $0$ phases of systems $1$ to $V$ are each duplicated $n_0+n_0$ times and then added up to $b_0 := n_0 \cdot \sum{i=1}^V b_0^i$ by the two $\hbox{\begin{tikzpicture} [scale=1,transform shape] 

\def\deltax{0.3} 
\def\deltay{0.5} 


\node [dot, fill=\classicalStructColour] (mult) at (0,0) {};

\end{tikzpicture}}\!$ nodes
			\item[2.] the values $b_1^i,...,b_m^i$ for the $a_1,...,a_m$ phases of each system $i=1$ (for $i=1,...,V$) are each duplicated $n_k$ times (for $k=1,...,m$) and added up to $b^i := \sum_{r=1}^m n_r \cdot b_r^i$ by the respective $\hbox{\begin{tikzpicture} [scale=1,transform shape] 

\def\deltax{0.3} 
\def\deltay{0.5} 


\node [dot, fill=\classicalStructColour] (mult) at (0,0) {};

\end{tikzpicture}}\!$ nodes
			\item[3.] the values $b^1,...,b^V$ are added up to $b := \sum_{i=1}^V b^i$
			\item[4.] the value $b_0$ is added up to $b$
			\item[5.] finally, the value $b_0$ is subtracted from $b$, and $b$ is tested against the $\hbox{\begin{tikzpicture} [scale=1,transform shape] 

\def\deltax{0.3} 
\def\deltay{0.5} 


\node [dot, fill=\groupStructColour] (mult) at (0,0) {};

\end{tikzpicture}}\!$-classical point $\bra{a}$, obtaining the scalar $\braket{a}{b}$ (which we want to be zero)
		\end{enumerate} 
        The steps are summarised by the following diagram:
		\begin{equation}\label{MerminSetupProofLHVeval}
			\hbox{\begin{tikzpicture}[node distance = 10mm]

\node (a) [dot, inner sep = 1mm, fill = \Zcolour] {$a$};
\node (topdot) [smalldot, below of = a, fill = \Zcolour, yshift = 2mm] {};
\node (antipode) [dot, below left of = topdot,  yshift = 0mm,xshift = -3mm] {};
\node (antipodefiller) [smalldot, fill = \Zcolour, below right of = topdot, yshift = 0mm,xshift = +6mm] {};

\node (ldot) [smalldot, below left of = antipode, fill = \Zcolour, yshift = 0mm, xshift = -3mm] {};
\node (rdot) [smalldot, below right of = antipodefiller, fill = \Zcolour, yshift = 0mm, xshift = + 6mm] {};

\node (lcontroldot) [smalldot, below left of = ldot, fill = \Zcolour, yshift = -2mm, xshift = -2mm] {};
\node (l0phasedot) [smalldot, below of = lcontroldot, yshift = 2mm, fill = \Xcolour] {};
\node (l0phase) [smalldot, inner sep = 1mm, below of = l0phasedot, yshift = 1mm, fill = \Zcolour] {$b_0^1$};

\node (rcontroldot) [smalldot, below right of = ldot, fill = \Zcolour, yshift = -2mm,xshift = +2mm] {};
\node (r0phasedot) [smalldot, below of = rcontroldot, yshift = 2mm, fill = \Xcolour] {};
\node (r0phase) [dot, inner sep = 1mm, below of = r0phasedot, yshift = 1mm, fill = \Zcolour] {$b_0^V$};

\node (lvariationdot) [smalldot, below left of = rdot, fill = \Zcolour, yshift = 0mm,xshift = -8mm] {};
\node (la1variationdot) [smalldot, below left of = lvariationdot, yshift = -3mm, fill = \Xcolour] {};
\node (la1phase) [dot, inner sep = 1mm, below of = la1variationdot, yshift = 1mm, fill = \Zcolour] {$b_1^1$};
\node (laMvariationdot) [smalldot, below right of = lvariationdot, yshift = -3mm, fill = \Xcolour] {};
\node (laMphase) [dot, inner sep = 1mm, below of = laMvariationdot, yshift = 1mm, fill = \Zcolour] {$b_M^1$};

\node (rvariationdot) [smalldot, below right of = rdot, fill = \Zcolour, yshift = 0mm,xshift = +8mm] {};
\node (ra1variationdot) [smalldot, below left of = rvariationdot, yshift = -3mm, fill = \Xcolour] {};
\node (ra1phase) [dot, inner sep = 1mm, below of = ra1variationdot, yshift = 1mm, fill = \Zcolour] {$b_1^V$};
\node (raMvariationdot) [smalldot, below right of = rvariationdot, yshift = -3mm, fill = \Xcolour] {};
\node (raMphase) [dot, inner sep = 1mm, below of = raMvariationdot, yshift = 1mm, fill = \Zcolour] {$b_M^V$};

\node (controlellipsis) [below of = ldot, yshift = -10mm] {...};
\node (variationellipsis) [below of = rdot, yshift = -2.5mm] {...};

\node (lcontrolellipsislabel) [below of = lcontroldot, yshift = +6.5mm] {$n_0$};
\node (lcontrolellipsis) [below of = lcontroldot, yshift = +4.5mm] {$...$};
\node (rcontrolellipsislabel) [below of = rcontroldot, yshift = +6.5mm] {$n_0$};
\node (rcontrolellipsis) [below of = rcontroldot, yshift = +4.5mm] {$...$};

\node (rlcontrolellipsislabelcross) [below of = ldot, yshift = -2.75mm] {$n_0$};
\node (rlcontrolellipsiscross) [below of = ldot, yshift = -4.25mm] {$...$};

\node (la1variationellipsis) [below of = lvariationdot, yshift = +3mm, xshift = -6.5mm] {$...$};
\node (la1variationellipsislabel) [below of = lvariationdot, yshift = +4.75mm, xshift = -6.5mm] {$n_1$};
\node (laMvariationellipsis) [below of = lvariationdot, yshift = +3mm, xshift = +6.5mm] {$...$};
\node (laMvariationellipsislabel) [below of = lvariationdot, yshift = +4.75mm, xshift = +6.5mm] {$n_M$};

\node (ra1variationellipsis) [below of = rvariationdot, yshift = +3mm, xshift = -6.5mm] {$...$};
\node (ra1variationellipsislabel) [below of = rvariationdot, yshift = +4.75mm, xshift = -6.5mm] {$n_1$};
\node (raMvariationellipsis) [below of = rvariationdot, yshift = +3mm, xshift = +6.5mm] {$...$};
\node (raMvariationellipsislabel) [below of = rvariationdot, yshift = +4.75mm, xshift = +6.5mm] {$n_M$};

\node (rvariationellipsis) [below of = rvariationdot, yshift = +6mm] {$...$};

\begin{pgfonlayer}{background}
\draw[->-=.5,out=90,in=270] (topdot) to (a);
\draw[->-=.5,out=45,in=225] (antipode) to (topdot);
\draw[->-=.5,out=45,in=225] (ldot) to (antipode);
\draw[->-=.5,out=135,in=315] (rdot) to (antipodefiller);
\draw[->-=.5,out=135,in=315] (antipodefiller) to (topdot);

\draw[->-=.5,out=45,in=225] (lcontroldot) to (ldot);
\draw[->-=.5,out=45,in=225] (rcontroldot) to (antipodefiller);
\draw[->-=.5,out=45,in=225] (lvariationdot) to (rdot);
\draw[->-=.5,out=135,in=315] (rvariationdot) to (rdot);

\draw[->-=.5,out=135,in=225] (l0phasedot) to (lcontroldot);
\draw[->-=.5,out=45,in=315] (l0phasedot) to (lcontroldot);
\draw[->-=.5,out=135,in=225] (r0phasedot) to (rcontroldot);
\draw[->-=.5,out=45,in=315] (r0phasedot) to (rcontroldot);

\draw[->-=.5,out=45,in=180] (l0phasedot) to (rcontroldot);
\draw[->-=.5,out=0,in=225] (l0phasedot) to (rcontroldot);
\draw[->-=.5,out=135,in=0] (r0phasedot) to (lcontroldot);
\draw[->-=.5,out=180,in=315] (r0phasedot) to (lcontroldot);

\draw[->-=.5,out=90,in=270] (l0phase) to (l0phasedot);
\draw[->-=.5,out=90,in=270] (r0phase) to (r0phasedot);
\draw[->-=.5,out=90,in=270] (la1phase) to (la1variationdot);
\draw[->-=.5,out=90,in=270] (laMphase) to (laMvariationdot);
\draw[->-=.5,out=90,in=270] (ra1phase) to (ra1variationdot);
\draw[->-=.5,out=90,in=270] (raMphase) to (raMvariationdot);

\draw[->-=.5,out=135,in=180] (la1variationdot) to (lvariationdot);
\draw[->-=.5,out=45,in=225] (la1variationdot) to (lvariationdot);
\draw[->-=.5,out=135,in=315] (laMvariationdot) to (lvariationdot);
\draw[->-=.5,out=45,in=0] (laMvariationdot) to (lvariationdot);

\draw[->-=.5,out=135,in=180] (ra1variationdot) to (rvariationdot);
\draw[->-=.5,out=45,in=225] (ra1variationdot) to (rvariationdot);
\draw[->-=.5,out=135,in=315] (raMvariationdot) to (rvariationdot);
\draw[->-=.5,out=45,in=0] (raMvariationdot) to (rvariationdot);

\end{pgfonlayer}


\node (lb0) [left of = ldot, xshift = +3.5mm, yshift = -2mm] {$b_0$};
\node (rb0) [right of = ldot, xshift = +12mm, yshift = +2mm] {$b_0$};

\node (la) [left of = rdot, xshift = +2mm, yshift = -1mm] {$b^1$};
\node (ra) [right of = rdot, xshift = -2mm, yshift = -1mm] {$b^V$};
\node (ra) [right of = antipodefiller, xshift = -2mm, yshift = -1mm] {$b$};

\node (n00) [left of = antipode, xshift = +10mm, yshift = 5mm] {$-b_0$};
\node (vaa) [right of = topdot, xshift = 0mm, yshift = -1mm] {$b_0+b$};

\node (topa) [left of = topdot, xshift = +7mm, yshift = 3.5mm] {$b$};

\end{tikzpicture}}
		\end{equation}	

\noindent The $\hbox{\begin{tikzpicture} [scale=1,transform shape] 

\def\deltax{0.3} 
\def\deltay{0.5} 


\node [dot, fill=\groupStructColour] (mult) at (0,0) {};

\end{tikzpicture}}\!$-classical points $c$ that can be written as $c = \sum_{r=1}^M n_r \cdot c_r$ for some $\hbox{\begin{tikzpicture} [scale=1,transform shape] 

\def\deltax{0.3} 
\def\deltay{0.5} 


\node [dot, fill=\groupStructColour] (mult) at (0,0) {};

\end{tikzpicture}}\!$-classical points $c_1,...,c_M$ form a subgroup $H$ of the group of $\hbox{\begin{tikzpicture} [scale=1,transform shape] 

\def\deltax{0.3} 
\def\deltay{0.5} 


\node [dot, fill=\groupStructColour] (mult) at (0,0) {};

\end{tikzpicture}}\!$-classical points. Indeed we have that $0 = \sum_{r=1}^m n_r \cdot 0$ and that $(\sum_{r=1}^M n_r \cdot c_r)+(\sum_{r=1}^M n_r \cdot d_r) = \sum_{r=1}^M n_r \cdot (c_r+d_r)$. Furthermore, by assumption we have that $H$ does not contain $a$, and as a consequence $\braket{a}{c} = 0$ for all $c \in H$. Going back to Diagram \ref{MerminSetupProofLHVeval}, we see that $b^1,...,b^V \in H$ (but $b_0$ need not be in $H$, hence the need to subtract it before testing against $a$). We thus conclude that $b\in H$ (since $H$ is closed under addition): hence the scalar $\braket{a}{b}$ vanishes, concluding our proof that no local hidden variable can exist for our chosen measurement scenario. 
	\end{proof}

\begin{theorem}[Mermin Locality]
	\label{thm_MerminLocality}
		Let $\CategoryC$ be a $\dagger$-SMC. If for any strongly complementary pair $(\hbox{\begin{tikzpicture} [scale=1,transform shape] 

\def\deltax{0.3} 
\def\deltay{0.5} 


\node [dot, fill=\classicalStructColour] (mult) at (0,0) {};

\end{tikzpicture}}\!,\hbox{\begin{tikzpicture} [scale=1,transform shape] 

\def\deltax{0.3} 
\def\deltay{0.5} 


\node [dot, fill=\groupStructColour] (mult) at (0,0) {};

\end{tikzpicture}}\!)$ of $\dagger$-qSCFAs the group of $\hbox{\begin{tikzpicture} [scale=1,transform shape] 

\def\deltax{0.3} 
\def\deltay{0.5} 


\node [dot, fill=\classicalStructColour] (mult) at (0,0) {};

\end{tikzpicture}}\!$-phases is a trivial algebraic extension of the subgroup of $\hbox{\begin{tikzpicture} [scale=1,transform shape] 

\def\deltax{0.3} 
\def\deltay{0.5} 


\node [dot, fill=\groupStructColour] (mult) at (0,0) {};

\end{tikzpicture}}\!$-classical points (i.e. if there exist no algebraically non-trivial $\hbox{\begin{tikzpicture} [scale=1,transform shape] 

\def\deltax{0.3} 
\def\deltay{0.5} 


\node [dot, fill=\classicalStructColour] (mult) at (0,0) {};

\end{tikzpicture}}\!$-phases), then $\CategoryC$ is Mermin local.
	\end{theorem}
	\begin{proof} 
		Consider an $N$-partite Mermin measurement scenario $\underline{\alpha}^s = (\alpha_1^s,...,\alpha_N^s)_{s=1,...,S}$, and let $a_1,...,a_M$ be the distinct $\hbox{\begin{tikzpicture} [scale=1,transform shape] 

\def\deltax{0.3} 
\def\deltay{0.5} 


\node [dot, fill=\classicalStructColour] (mult) at (0,0) {};

\end{tikzpicture}}\!$-phases appearing in it. Consider the system of equations $(\sum_{r=1}^M n_r^s \cdot x_r = c^s)_{s=1,...,S}$, where $n_r^s$ is the numer of times phase $a_r$ appears in measurement $\underline{\alpha}^s$, and $c^s$ are the unique values making $x_r := a_r$ into a solution for the system. As the group of $\hbox{\begin{tikzpicture} [scale=1,transform shape] 

\def\deltax{0.3} 
\def\deltay{0.5} 


\node [dot, fill=\classicalStructColour] (mult) at (0,0) {};

\end{tikzpicture}}\!$-phases is a trivial algebraic extension of the subgroup of $\hbox{\begin{tikzpicture} [scale=1,transform shape] 

\def\deltax{0.3} 
\def\deltay{0.5} 


\node [dot, fill=\groupStructColour] (mult) at (0,0) {};

\end{tikzpicture}}\!$-classical points, there is a solution $x_r := b_r$ with $(b_r)_{r=1,...,M}$ $\hbox{\begin{tikzpicture} [scale=1,transform shape] 

\def\deltax{0.3} 
\def\deltay{0.5} 


\node [dot, fill=\groupStructColour] (mult) at (0,0) {};

\end{tikzpicture}}\!$-classical points. By using this, together with Lemma \ref{thm_MerminMeasurementGHZstate}, we see that each measurement in the scenario is equal to the Mermin measurement obtained by replacing $a_r$ with $b_r$ for all $r=1,...,M$ (say $\beta_i^s := b_r$ if $\alpha_i^s = a_r$): 
		\begin{equation}\label{MerminScenarioSwap}
		    \hbox{\begin{tikzpicture}[node distance=10mm, xscale=1.5, yscale=1.5]

                \node [smalldot, fill={\classicalStructColour}] (0) at (-3.75, -1.25) {};
                \node [smalldot, fill={\classicalStructColour}] (1) at (-2.75, -1.25) {};
                \node [style=dot, fill={\classicalStructColour}, inner sep={0 pt}] (2) at (-5.45, 0.25) {$-\alpha_1^s$};
                \node [style=dot, fill={\classicalStructColour}, inner sep={3 pt}] (3) at (-4.05, 0.25) {$\alpha_1^s$};
                \node [style=dot, fill={\classicalStructColour}, inner sep={0 pt}] (4) at (-2.45, 0.25) {$-\alpha_N^s$};
                \node [style=dot, fill={\classicalStructColour}, inner sep={2.4 pt}] (5) at (-1.05, 0.25) {$\alpha_N^s$};
                \node [smalldot, fill={\groupStructColour}, style=none] (6) at (-1.75, 1.25) {};
                \node [smalldot, fill={\groupStructColour}, style=none] (7) at (-4.75, 1.25) {};
                \node (8) at (-1.75, 2) {};
                \node (9) at (-4.75, 2) {};
                \node (10) at (-3.25, 1) {$\cdot\;\cdot\;\cdot$};
                \node (11) at (3.25, 1) {$\cdot\;\cdot\;\cdot$};
                \node [style=dot, fill={\classicalStructColour}, inner sep={0 pt}] (12) at (1.05, 0.25) {$-\beta_1^s$};
                \node [smalldot, fill={\classicalStructColour}] (13) at (2.75, -1.25) {};
                \node [smalldot, fill={\groupStructColour}, style=none] (14) at (4.75, 1.25) {};
                \node [smalldot, fill={\groupStructColour}, style=none] (15) at (1.75, 1.25) {};
                \node [style=dot, fill={\classicalStructColour}, inner sep={3 pt}] (16) at (2.45, 0.25) {$\beta_1^s$};
                \node [style=dot, fill={\classicalStructColour}, inner sep={3 pt}] (17) at (5.45, 0.25) {$\beta_N^s$};
                \node (18) at (1.75, 2) {};
                \node [smalldot, fill={\classicalStructColour}] (19) at (3.75, -1.25) {};
                \node (20) at (4.75, 2) {};
                \node [style=dot, fill={\classicalStructColour}, inner sep={0 pt}] (21) at (4.05, 0.25) {$-\beta_N^s$};
                \node (22) at (0, -0) {$=$};

\begin{pgfonlayer}{background}
                \draw [->-=.5] (2) to (0);
                \draw [->-=.5] (1) to (3);
                \draw [->-=.5] (4) to (0);
                \draw [->-=.5] (1) to (5);
                \draw [->-=.5, bend right=45, looseness=1.00] (5) to (6.center);
                \draw [->-=.5, bend right=45, looseness=1.00] (6.center) to (4);
                \draw [->-=.5, bend right=45, looseness=1.00] (3) to (7.center);
                \draw [->-=.5, bend right=45, looseness=1.00] (7.center) to (2);
                \draw [->-=.5] (7.center) to (9.center);
                \draw [->-=.5] (6.center) to (8.center);
                \draw [->-=.5] (12) to (13);
                \draw [->-=.5] (19) to (16);
                \draw [->-=.5] (21) to (13);
                \draw [->-=.5] (19) to (17);
                \draw [->-=.5, bend right=45, looseness=1.00] (17) to (14.center);
                \draw [->-=.5, bend right=45, looseness=1.00] (14.center) to (21);
                \draw [->-=.5, bend right=45, looseness=1.00] (16) to (15.center);
                \draw [->-=.5, bend right=45, looseness=1.00] (15.center) to (12);
                \draw [->-=.5] (15.center) to (18.center);
                \draw [->-=.5] (14.center) to (20.center);
\end{pgfonlayer}

\end{tikzpicture}}
		\end{equation}
		All phases are now induced by $\hbox{\begin{tikzpicture} [scale=1,transform shape] 

\def\deltax{0.3} 
\def\deltay{0.5} 


\node [dot, fill=\groupStructColour] (mult) at (0,0) {};

\end{tikzpicture}}\!$-classical points, and can thus be pushed up through the $\XmultSym$s:
		\begin{equation}\label{MerminScenarioPush}
		    \hbox{\begin{tikzpicture}[node distance=10mm, xscale=1.6, yscale=1.5]

                \node (0) at (0.3, -0) {$=$};
                \node [smalldot, fill={\classicalStructColour}] (1) at (-2.5, -1.25) {};
                \node (2) at (-1.5, 2.25) {};
                \node [style=dot, fill={\classicalStructColour}, inner sep={3 pt}] (3) at (-3.8, 0.25) {$\beta_1^s$};
                \node [style=none] (4) at (-4.5, 2.25) {};
                \node [style=dot, fill={\classicalStructColour}, inner sep={0 pt}] (5) at (-5.2, 0.25) {$-\beta_1^s$};
                \node [smalldot, fill={\groupStructColour}, style=none] (6) at (-4.5, 1.25) {};
                \node (7) at (-3, 1.2) {$...$};
                \node [smalldot, fill={\classicalStructColour}] (8) at (-3.5, -1.25) {};
                \node [style=dot, fill={\classicalStructColour}, inner sep={3 pt}] (9) at (-0.7, 0.25) {$\beta_N^s$};
                \node [smalldot, fill={\groupStructColour}, style=none] (10) at (-1.5, 1.25) {};
                \node [style=dot, fill={\classicalStructColour}, inner sep={0 pt}] (11) at (-2.2, 0.25) {$-\beta_N^s$};
                \node [smalldot, fill={\classicalStructColour}] (12) at (1.5, -1.25) {};
                \node [smalldot, fill={\classicalStructColour}] (13) at (2.5, -1.25) {};
                \node [smalldot, fill={\groupStructColour}] (14) at (1, -0.25) {};
                \node [smalldot, fill={\groupStructColour}] (15) at (3, -0.25) {};
                \node [style=dot, fill={\classicalStructColour}, inner sep={3 pt}] (16) at (3, 1) {$\beta_1^s$};
                \node [style=dot, fill={\classicalStructColour}, inner sep={3 pt}] (17) at (1, 1) {$\beta_N^s$};
                \node (18) at (3, 2.25) {};
                \node (19) at (1, 2.25) {};
                \node (20) at (2, 1) {$...$};

        \begin{pgfonlayer}{background}
                \draw [->-=.5] (5) to (8);
                \draw [->-=.5] (1) to (3);
                \draw [->-=.5] (11) to (8);
                \draw [->-=.5] (1) to (9);
                \draw [->-=.5, bend right=45, looseness=1.00] (9) to (10.center);
                \draw [->-=.5, bend right=45, looseness=1.00] (10.center) to (11);
                \draw [->-=.5, bend right=45, looseness=1.00] (3) to (6.center);
                \draw [->-=.5, bend right=45, looseness=1.00] (6.center) to (5);
                \draw [->-=.5] (6.center) to (4.center);
                \draw [->-=.5] (10.center) to (2.center);
                \draw [->-=.5] (14) to (12);
                \draw [->-=.5] (13) to (14);
                \draw [->-=.5] (15) to (12);
                \draw [->-=.5] (13) to (15);
                \draw [->-=.5] (15) to (16);
                \draw [->-=.5] (14) to (17);
                \draw [->-=.5] (17) to (19.center);
                \draw [->-=.5] (16) to (18.center);
        \end{pgfonlayer}
\end{tikzpicture}}
		\end{equation}
		Now that each measurement of the scenario amounts to performing some set of $\hbox{\begin{tikzpicture} [scale=1,transform shape] 

\def\deltax{0.3} 
\def\deltay{0.5} 


\node [dot, fill=\groupStructColour] (mult) at (0,0) {};

\end{tikzpicture}}\!$-classical operations on the same state, it is no surprise that the following gives a local hidden variable model:
		\begin{equation}\label{LHVproof}
		\hbox{\begin{tikzpicture}[xscale=1.6, yscale=1.6]
          \node [smalldot,  fill=\classicalStructColour] (0) at (-0.75, -3) {};
          \node [smalldot,  fill=\classicalStructColour] (1) at (0.75, -3) {};
          \node [smalldot,  fill=\groupStructColour] (2) at (1.75, -2.25) {};
          \node [smalldot,  fill=\groupStructColour] (3) at (-1.75, -2.25) {};
          \node [smalldot,  fill=\groupStructColour] (4) at (-1.75, -1.5) {};
          \node [smalldot,  fill=\groupStructColour] (5) at (1.75, -1.5) {};
          \node [dot,  fill=\classicalStructColour, inner sep=2] (6) at (-2.5, -0.5) {$\beta_1$};
          \node [dot,  fill=\classicalStructColour, inner sep=1] (7) at (-1, -0.5) {$\beta_M$};
          \node [dot,  fill=\classicalStructColour, inner sep=2] (8) at (1, -0.5) {$\beta_1$};
          \node [dot,  fill=\classicalStructColour, inner sep=1] (9) at (2.5, -0.5) {$\beta_M$};
                \node [style=box, xscale=7, yscale=1.7] (10) at (0, 1) {};
                \node at (0, 1) {\bf Local Map};
                \node (11) at (-2.5, 0.5) {};
                \node (12) at (-1, 0.5) {};
                \node (13) at (1, 0.5) {};
                \node (14) at (2.5, 0.5) {};
                \node (15) at (-1, 1.65) {};
                \node (16) at (1, 1.65) {};
                \node (17) at (2.5, 1.65) {};
                \node (18) at (-2.5, 1.65) {};
                \node (19) at (2.5, 2.5) {};
                \node (20) at (1, 2.5) {};
                \node (21) at (-1, 2.5) {};
                \node (22) at (-2.5, 2.5) {};
                \node (23) at (0, -1.5) {$\cdot\;\cdot\;\cdot$};
                \node (24) at (0, -0.5) {$\cdot\cdot\cdot$};
                \node (25) at (-1.75, -0.5) {$...$};
                \node (26) at (1.75, -0.5) {$...$};
                \node (27) at (-1.75, 2.25) {$\cdot\;\cdot\;\cdot$};
                \node (28) at (1.75, 2.25) {$\cdot\;\cdot\;\cdot$};

        \begin{pgfonlayer}{background}
                \draw [->-=.5] (3) to (0);
                \draw [->-=.5] (1) to (3);
                \draw [->-=.5] (2) to (0);
                \draw [->-=.5] (1) to (2);
                \draw [->-=.5] (3) to (4);
                \draw [->-=.5] (4) to (6);
                \draw [->-=.5] (4) to (7);
                \draw [->-=.5] (2) to (5);
                \draw [->-=.5] (5) to (8);
                \draw [->-=.5] (5) to (9);
                \draw [->-=.5] (9) to (14.center);
                \draw [->-=.5] (8) to (13.center);
                \draw [->-=.5] (7) to (12.center);
                \draw [->-=.5] (6) to (11.center);
                \draw [->-=.5] (18.center) to (22.center);
                \draw [->-=.5] (15.center) to (21.center);
                \draw [->-=.5] (16.center) to (20.center);
                \draw [->-=.5] (17.center) to (19.center);
        \end{pgfonlayer}
        
\node at (-3.3, -1.6) {system 1};
\node at (3.3, -1.6) {system N};
        
\end{tikzpicture}}
		\end{equation}	

	\end{proof}

\noindent The abstract framework can now be applied to some particular examples of interest.

	\begin{corollary} The restricted ZX calculus (that corresponds to qubit stabilizer quantum mechanics) from \cite{CQM-ZXCalculusSeminal,CQM-ZXCalculusComplete} (referred to as Stab in \cite{CQM-StrongComplementarity}) is Mermin non-local.
	\end{corollary}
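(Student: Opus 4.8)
The plan is to deduce the corollary as a direct application of Theorem \ref{thm_MerminNonLocality}, so the work reduces to verifying its three hypotheses for the restricted ZX calculus modelling qubit stabilizer quantum mechanics (the category \emph{Stab}). First, this calculus is a $\dagger$-SMC equipped with a distinguished strongly complementary pair $(\ZdotSym,\XdotSym)$ of $\dagger$-qSCFAs, namely the $Z$ and $X$ observables of the qubit, so the ambient structural requirements hold by construction. Second, I would recall that the $\XdotSym$-classical points are exactly $\ket{+}$ and $\ket{-}$ and that these satisfy the conditions of Definition \ref{def_basis}; this is transparent in the $\fdHilbCategory$ model and carries over to the restricted calculus, where it may be taken as part of the presentation (or extracted from the known completeness of the restricted ZX calculus for stabilizer quantum mechanics).

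The only substantive point is the third hypothesis: that the group of $\ZdotSym$-phases is a non-trivial algebraic extension of the subgroup of $\XdotSym$-classical points. For this I would invoke Example \ref{example_PhaseGroupZX}. In the restricted ZX calculus the $\ZdotSym$-phases are the four values $\{0,\pi/2,\pi,-\pi/2\}$, forming a group isomorphic to $\integers_4$, whereas the $\XdotSym$-classical points $\ket{+},\ket{-}$ induce precisely the $\ZdotSym$-phases $0$ and $\pi$, a subgroup isomorphic to $\integers_2$. The single equation $2x = \pi$ has no solution in $\{0,\pi\}$ but is solved by $x := \pi/2$ (equivalently $x := -\pi/2$) in the full phase group, so the phase group is a non-trivial algebraic extension of the subgroup of classical points, with $\pi/2$ an algebraically non-trivial phase.

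With all three hypotheses checked, Theorem \ref{thm_MerminNonLocality} gives that the restricted ZX calculus is Mermin non-local. I anticipate no genuine obstacle here: the argument is essentially a dictionary lookup matching the structures of stabilizer quantum mechanics to the abstract hypotheses, the only care needed being the already-noted point about the basis condition holding in the restricted calculus rather than merely in its standard model. As a consistency check, unwinding the construction in the proof of Theorem \ref{thm_MerminNonLocality} with $M=1$, $n_1 = 2$, $a_1 = \pi/2$, $a = \pi$ and $n_0 = 1$ recovers exactly Mermin's original $(3,2,2)$ GHZ scenario on three qubits, with the single control realising the $XXX$ correlation and the three variations realising the $XYY$, $YXY$, $YYX$ correlations.
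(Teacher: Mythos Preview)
Your proposal is correct and follows essentially the same route as the paper: pick the $Z$ and $X$ observables as the strongly complementary pair, identify the $\ZdotSym$-phase group as $\integersMod{4}$ and the subgroup of $\XdotSym$-classical points as $\integersMod{2}$, invoke Example~\ref{example_PhaseGroupZX} for the non-trivial algebraic extension, and conclude by Theorem~\ref{thm_MerminNonLocality}. You are simply more explicit than the paper in checking the basis hypothesis and in unwinding the construction to recover the original $(3,2,2)$ scenario as a sanity check.
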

	\begin{proof} 
		Take $\hbox{\begin{tikzpicture} [scale=1,transform shape] 

\def\deltax{0.3} 
\def\deltay{0.5} 


\node [dot, fill=\classicalStructColour] (mult) at (0,0) {};

\end{tikzpicture}}\!$ and $\hbox{\begin{tikzpicture} [scale=1,transform shape] 

\def\deltax{0.3} 
\def\deltay{0.5} 


\node [dot, fill=\groupStructColour] (mult) at (0,0) {};

\end{tikzpicture}}\!$ to be the $Z$ and $X$ single-qubit observables in the ZX calculus. The group of $\hbox{\begin{tikzpicture} [scale=1,transform shape] 

\def\deltax{0.3} 
\def\deltay{0.5} 


\node [dot, fill=\groupStructColour] (mult) at (0,0) {};

\end{tikzpicture}}\!$-phases is $\integersMod{4}$ and the subgroup of $\hbox{\begin{tikzpicture} [scale=1,transform shape] 

\def\deltax{0.3} 
\def\deltay{0.5} 


\node [dot, fill=\groupStructColour] (mult) at (0,0) {};

\end{tikzpicture}}\!$-classical points is $\integersMod{2}$. Conclude with Theorem \ref{thm_MerminNonLocality} and Example \ref{example_PhaseGroupZX}.
	\end{proof}

	\begin{corollary} The toy theory Spekk from \cite{CQM-StrongComplementarity} is Mermin local.
	\end{corollary}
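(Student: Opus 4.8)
The plan is to invoke Theorem~\ref{thm_MerminLocality}: to conclude that Spekk is Mermin local it suffices to verify its hypothesis, i.e.\ that for \emph{every} strongly complementary pair $(\ZdotSym,\XdotSym)$ of $\dagger$-qSCFAs in Spekk, the group $\mathrm{P}_{\ZdotSym}$ of $\ZdotSym$-phases is a trivial algebraic extension of the subgroup $\mathrm{K}_{\XdotSym}$ of $\XdotSym$-classical points. Unlike the non-locality corollaries, which only need to exhibit one bad pair, here all pairs must be controlled at once, so the first step is to recall what phase groups look like in Spekkens' toy theory.

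The structural input I would use is the one already isolated in \cite{coecke2011phase} (and in \cite{CQM-StrongComplementarity}): the feature that separates Spekk from qubit stabilizer quantum mechanics --- where a factor of $\integersMod{4}$ appears in the phase group --- is that in Spekk every phase group is a \emph{finite elementary abelian $2$-group}. For the toy bit this is the familiar $\mathrm{P}_{\ZdotSym}\iso\integersMod{2}\times\integersMod{2}$ sitting over $\mathrm{K}_{\XdotSym}\iso\integersMod{2}$, and for composite systems one only ever builds up products of copies of $\integersMod{2}$. I would pin this down by appeal to the explicit description of Spekk's objects and of its $\dagger$-qSCFAs in \cite{CQM-StrongComplementarity}.

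Granting that, the rest is immediate. Since $\mathrm{P}_{\ZdotSym}$ is a finite elementary abelian $2$-group it is an $\mathbb{F}_2$-vector space, so its subgroup $\mathrm{K}_{\XdotSym}$ admits a complement: $\mathrm{P}_{\ZdotSym}\iso\mathrm{K}_{\XdotSym}\times K'$ with $K'$ again finite elementary abelian and $\mathrm{K}_{\XdotSym}$ identified with the summand $\mathrm{K}_{\XdotSym}\times\{0\}$. If $K'$ is non-trivial, Example~\ref{example_PhaseGroupRel} (taking the group there called $K$ to be $\mathrm{K}_{\XdotSym}$ and $K'$ to be the complement) shows that $\mathrm{P}_{\ZdotSym}$ is a trivial algebraic extension of $\mathrm{K}_{\XdotSym}$, concretely via the projection $\Phi$ onto $\mathrm{K}_{\XdotSym}\times\{0\}$ and Lemma~\ref{thm_EqnToSystem}. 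If $K'$ is trivial then $\mathrm{P}_{\ZdotSym}=\mathrm{K}_{\XdotSym}$ and the conclusion holds trivially. Hence Spekk admits no algebraically non-trivial $\ZdotSym$-phases for any strongly complementary pair, and Theorem~\ref{thm_MerminLocality} yields that Spekk is Mermin local.

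The main obstacle is entirely in the structural step: one must be certain that the classification of strongly complementary $\dagger$-qSCFA pairs available in Spekk really does force every $\ZdotSym$-phase group to be $2$-torsion (elementary abelian), rather than acquiring, say, a $\integersMod{4}$ factor somewhere among the less obvious pairs. This is precisely where the argument uses the $\integersMod{2}$ arithmetic underlying Spekkens' model; once it is in hand, the reduction to Example~\ref{example_PhaseGroupRel} and Theorem~\ref{thm_MerminLocality} is routine.
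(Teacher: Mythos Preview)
Your proposal is correct and follows essentially the same route as the paper: invoke Theorem~\ref{thm_MerminLocality} and discharge its hypothesis via Example~\ref{example_PhaseGroupRel}. The paper's own proof is terser---it simply records that the relevant phase group is $\integersMod{2}\times\integersMod{2}$ over $\mathrm{K}_{\XdotSym}\iso\integersMod{2}$ and appeals directly to Example~\ref{example_PhaseGroupRel}---whereas you are more careful about the universal quantifier over strongly complementary pairs, arguing that every $\ZdotSym$-phase group in Spekk is elementary abelian $2$-torsion and then using the $\mathbb{F}_2$-splitting to reduce to Example~\ref{example_PhaseGroupRel}. That extra care is appropriate, since Theorem~\ref{thm_MerminLocality} genuinely requires checking all pairs; the paper implicitly relies on the fact that in Spekk (as presented in \cite{CQM-StrongComplementarity}) the toy-bit $Z/X$ pair is, up to relabelling, the only strongly complementary pair in play, so your structural worry is real but resolved by the restricted scope of the model.
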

	\begin{proof} 
		Same setup as in the previous corollary, but the phase group is now $\integersMod{2} \times \integersMod{2}$. Conclude using Theorem \ref{thm_MerminLocality} and Example \ref{example_PhaseGroupRel} with $d=2$.
	\end{proof}
	
    \begin{corollary}
    Qutrit stabilizer quantum mechanics from~\cite{ranchin2014depicting} is Mermin local.
    \end{corollary}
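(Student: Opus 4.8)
The plan is to obtain this exactly as we did for Spekkens' toy theory: as an immediate application of Theorem~\ref{thm_MerminLocality}, once the relevant phase data has been identified. Take $\ZdotSym$ and $\XdotSym$ to be the single-qutrit $Z$ and $X$ observables of the (restricted) qutrit stabilizer calculus of~\cite{ranchin2014depicting}. The first step is to read off the algebra: the group of $\ZdotSym$-phases of a single qutrit is $\integersMod{3}\times\integersMod{3}$ — the unbiased stabilizer states correspond to the quadratic forms $j \mapsto \alpha j^2 + \beta j$ on $\integersMod{3}$, which add componentwise — while the $\XdotSym$-classical points (the $X$-eigenstates $\tfrac{1}{\sqrt 3}\sum_j \omega^{cj}\ket{j}$), regarded as $\ZdotSym$-phases via Theorem~\ref{thm_PhaseGroup}, are exactly the \emph{linear} forms and hence sit inside it as a direct summand $\isom\integersMod{3}$.

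With this in hand the second step is purely algebraic: apply Example~\ref{example_PhaseGroupRel} with $K=K'=\integersMod{3}$ (here $K'$ is a non-trivial finite abelian group). This shows that $\integersMod{3}\times\integersMod{3}$ is a \emph{trivial} algebraic extension of its $\integersMod{3}$ subgroup of $\XdotSym$-classical points, i.e.\ there are no algebraically non-trivial $\ZdotSym$-phases. Note the contrast with the qubit case of Example~\ref{example_PhaseGroupZX}, where the enlargement $\integersMod{2}\subset\integersMod{4}$ is non-trivial: for qutrits the phase group splits as a direct product over the classical points rather than forming such a cyclic enlargement, and it is precisely this splitting that forces locality.

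The only point that needs care — and the only real work — is that Theorem~\ref{thm_MerminLocality} quantifies over \emph{all} strongly complementary pairs of $\dagger$-qSCFAs in $\CategoryC$, not just the canonical single-qutrit pair. So I would argue that the decomposition above is uniform: on the $n$-qutrit object every $\dagger$-qSCFA participating in a strongly complementary pair has $\ZdotSym$-phase group $(\integersMod{3})^{n}\times(\integersMod{3})^{n}$ with the complementary classical points appearing as the factor $(\integersMod{3})^{n}\times\{0\}$, so that Example~\ref{example_PhaseGroupRel} again applies verbatim with $K=K'=(\integersMod{3})^{n}$. This uniformity rests on two facts: strongly complementary pairs correspond to internal finite abelian groups~\cite{CQM-KissingerPhdthesis}, and in the qutrit stabilizer fragment the only groups arising this way are the elementary abelian $3$-groups, whose phase data always splits in the required $K\times K'$ shape — the complementary observables are Clifford-conjugate, and Clifford unitaries act as phase-group isomorphisms taking classical points to classical points. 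Granting this bookkeeping, Theorem~\ref{thm_MerminLocality} delivers Mermin locality.
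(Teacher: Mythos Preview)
Your approach is essentially the same as the paper's: identify the single-qutrit $\ZdotSym$-phase group as $\integersMod{3}\times\integersMod{3}$ with the $\XdotSym$-classical points sitting as a $\integersMod{3}$ direct summand, then invoke Example~\ref{example_PhaseGroupRel} and Theorem~\ref{thm_MerminLocality}. The paper's proof is a two-line invocation of exactly these two results and does not elaborate further.

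Where you go beyond the paper is in your third paragraph: you correctly observe that Theorem~\ref{thm_MerminLocality} quantifies over \emph{all} strongly complementary pairs of $\dagger$-qSCFAs in the category, and you sketch why the $K\times K'$ splitting persists for every such pair on every $n$-qutrit object. The paper's proof (like its proof for Spekk) silently treats the canonical single-system pair as representative and does not justify the universal quantifier. Your extra bookkeeping is therefore not a deviation from the paper's argument but a tightening of it; the underlying idea and the key lemma invoked are identical.
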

    \begin{proof}
    The phase group here is $\mathbb{Z}_3\times\mathbb{Z}_3$. Conclude using Theorem~\ref{thm_MerminLocality} and Example~\ref{example_PhaseGroupRel} with $d=3$.\footnote{This example was first constructed by Edwards in~\cite{coecke2011phase} without reference to the qutrit stabilizer formalism.  This work also anticipated Example~\ref{example_PhaseGroupRel}, using a specific construction.}
    \end{proof}

	\begin{corollary} The category $\fRelCategory$ of finite sets and relations is Mermin local. 
	\end{corollary}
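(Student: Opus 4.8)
The plan is to invoke Theorem~\ref{thm_MerminLocality}: it suffices to show that for \emph{every} strongly complementary pair $(\ZdotSym,\XdotSym)$ of $\dagger$-qSCFAs in $\fRelCategory$, the group $P_{\ZdotSym}$ of $\ZdotSym$-phases is a trivial algebraic extension of the subgroup $K_{\XdotSym}$ of $\XdotSym$-classical points. First I would observe that the monoid of scalars of $\fRelCategory$ is the two-element Boolean monoid, whose only invertible element is the identity, so in $\fRelCategory$ a $\dagger$-qSCFA is simply a $\dagger$-SCFA; I may therefore use the classification of classical structures in $\fRelCategory$ (e.g.\ from \cite{CQM-KissingerPhdthesis,CQM-StrongComplementarity}), by which a $\dagger$-SCFA on a finite set $X$ amounts to a partition $X=\bigsqcup_{i\in I}G_i$ of $X$ into blocks, each carrying a finite abelian group structure, with multiplication the relation sending $(g,h)$ to $gh$ for $g,h$ in a common block.

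I would then compute the phase group directly in $\fRelCategory$: a $\ZdotSym$-phase state is exactly a subset of $X$ meeting each $\ZdotSym$-block in precisely one element, and under the induced $(\ZmultSym,\ZunitSym)$ operation, which acts blockwise, these assemble into the finite abelian group $P_{\ZdotSym}\cong\prod_{i\in I}G_i$. The substance of the argument is then the following structural fact, to be extracted from the classification of \emph{strongly complementary} pairs in $\fRelCategory$ via the bialgebra and Hopf laws relating $\ZdotSym$ and $\XdotSym$: the subgroup $K_{\XdotSym}\leq P_{\ZdotSym}$ is always a direct summand, so that there is a finite abelian group $K'$ with $P_{\ZdotSym}\cong K_{\XdotSym}\times K'$ and $K_{\XdotSym}$ identified with $K_{\XdotSym}\times\{0\}$. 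This is exactly the shape of Example~\ref{example_PhaseGroupRel}, and it is the same phenomenon underlying the Spekkens and qutrit-stabilizer corollaries above; the point is that in $\fRelCategory$, unlike $\fdHilbCategory$, the phase group is finite and the compatibility laws force this splitting, even though $P_{\ZdotSym}$ may be vastly larger than $K_{\XdotSym}$ --- as happens already for $\bigsqcup_{i=1}^{n}\integersMod{2}$.

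Granting this, the conclusion is immediate: applying Example~\ref{example_PhaseGroupRel} with $K=K_{\XdotSym}$, or equivalently applying Lemma~\ref{thm_EqnToSystem} to the projection $\Phi\colon P_{\ZdotSym}\to K_{\XdotSym}$ coming from the splitting, shows that $P_{\ZdotSym}$ is a trivial algebraic extension of $K_{\XdotSym}$; since $(\ZdotSym,\XdotSym)$ was arbitrary, Theorem~\ref{thm_MerminLocality} gives that $\fRelCategory$ is Mermin local. The main obstacle is the structural fact of the previous paragraph: one must check, uniformly over disjoint unions of arbitrarily many abelian groups rather than just a single group, that strong complementarity pins the $\XdotSym$-classical points down to a direct summand of $\prod_{i\in I}G_i$; excluding pathological subgroups such as $2\integersMod{4}\leq\integersMod{4}$ is exactly what keeps $\fRelCategory$ Mermin local.
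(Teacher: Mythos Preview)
Your approach is essentially the paper's: classify strongly complementary pairs in $\fRelCategory$, read off $P_{\ZdotSym}$ and $K_{\XdotSym}$, observe the direct-summand structure, and invoke Example~\ref{example_PhaseGroupRel} and Theorem~\ref{thm_MerminLocality}. The paper does exactly this, but resolves what you call the ``main obstacle'' by citing a sharper classification (from \cite{CQM-CQMnotes,StefanoGogioso-RepTheoryCQM}): a strongly complementary pair in $\fRelCategory$ is necessarily of the form $(\oplus_{h\in H}G,\,\oplus_{g\in G}H)$ on a set identified with $G\times H$, with $G,H$ finite abelian groups. In particular all the $\ZdotSym$-blocks are copies of the \emph{same} group $G$, so your $\prod_{i\in I}G_i$ collapses to $G^{H}$; the $\XdotSym$-classical points are the constant tuples, giving the diagonal copy of $G$, which is manifestly a direct summand. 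So you have correctly located the one nontrivial step, and the paper discharges it by quoting the classification rather than re-deriving it from the bialgebra/Hopf laws.
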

	\begin{proof} 
		See \cite{CQM-CQMnotes, StefanoGogioso-RepTheoryCQM} for more details on strong complementarity in $\fRelCategory$. Any $\dagger$-qSCFA on a set $\SpaceH$ in $\fRelCategory$ is a groupoid: we write it in the form $\oplus_{h\in H} G_h$, where $H$ is a set, $G_h$ are disjoint groups and $\cup_{h\in H} G_h = \SpaceH$. Any strongly complementary pair $\hbox{\begin{tikzpicture} [scale=1,transform shape] 

\def\deltax{0.3} 
\def\deltay{0.5} 


\node [dot, fill=\classicalStructColour] (mult) at (0,0) {};

\end{tikzpicture}}\!,\hbox{\begin{tikzpicture} [scale=1,transform shape] 

\def\deltax{0.3} 
\def\deltay{0.5} 


\node [dot, fill=\groupStructColour] (mult) at (0,0) {};

\end{tikzpicture}}\!$ is in the form $(\oplus_{h\in H} G,\oplus_{g\in G}H)$, where both $G$ and $H$ are groups (seen as sets when indexing the groupoids), and we can w.l.o.g. write $\SpaceH$ as $G \times H$. Each $\hbox{\begin{tikzpicture} [scale=1,transform shape] 

\def\deltax{0.3} 
\def\deltay{0.5} 


\node [dot, fill=\groupStructColour] (mult) at (0,0) {};

\end{tikzpicture}}\!$-classical points is in the form $\suchthat{(g,h)}{h\in H}$ for some $g\in G$, while the $\hbox{\begin{tikzpicture} [scale=1,transform shape] 

\def\deltax{0.3} 
\def\deltay{0.5} 


\node [dot, fill=\classicalStructColour] (mult) at (0,0) {};

\end{tikzpicture}}\!$-phases are in the form $\suchthat{(g_h,h)}{h\in H}$, for some family $(g_h)_{h \in H}$ of elements of $G$. Thus the group of $\hbox{\begin{tikzpicture} [scale=1,transform shape] 

\def\deltax{0.3} 
\def\deltay{0.5} 


\node [dot, fill=\classicalStructColour] (mult) at (0,0) {};

\end{tikzpicture}}\!$-phases is the group $G^H$ of $H$-indexed vectors with values in $G$, and the subgroup of $\hbox{\begin{tikzpicture} [scale=1,transform shape] 

\def\deltax{0.3} 
\def\deltay{0.5} 


\node [dot, fill=\groupStructColour] (mult) at (0,0) {};

\end{tikzpicture}}\!$-classical points, isomorphic to $G$, is that of vectors with constant components. Conclude using Theorem \ref{thm_MerminLocality} and Example \ref{example_PhaseGroupRel}.
	\end{proof}

\noindent This last result is particularly interesting for the following reasons:
    \begin{enumerate}
        \item[1.] Almost no $\dagger$-qSCFAs in $\fRelCategory$ have enough classical points (exactly one per space, out of a number that grows exponentially with space size).      
        \item[2.] The family of arguments from \cite{CQM-StrongComplementarity} fails in $\fRelCategory$ (partially as a consequence of the previous point).
        \item[3.] There are plenty of strongly complementary pairs in $\fRelCategory$, and arbitrarily many more $\hbox{\begin{tikzpicture} [scale=1,transform shape] 

\def\deltax{0.3} 
\def\deltay{0.5} 


\node [dot, fill=\classicalStructColour] (mult) at (0,0) {};

\end{tikzpicture}}\!$-phases than $\hbox{\begin{tikzpicture} [scale=1,transform shape] 

\def\deltax{0.3} 
\def\deltay{0.5} 


\node [dot, fill=\groupStructColour] (mult) at (0,0) {};

\end{tikzpicture}}\!$ classical points, but the lack of \textit{algebraically non-trivial} phases results in $\fRelCategory$ being Mermin local.
        \item[4.] As a consequence of point 3, quantum protocols relying only on Mermin non-locality will show no quantum advantage in $\fRelCategory$.
    \end{enumerate}


\section{Mermin in $\fdHilbCategory$: beyond the complementary $XY$ pair}
\label{section:non-compl}

We now focus on $\fdHilbCategory$ and quantum mechanics. While in general we can have many different choices of measurement on each subsystem (see Definition \ref{def_MerminMeasurements}), we shall restrict to the case of only two distinct measurements, i.e. $(N,M=2,D)$ scenarios.  In the case of qubits and $(N,2,2)$ scenarios, these complementary measurements happen to be the only choices that will lead to a non-locality argument. One might then conjecture that this will be the case for any dimension.  In this section we show that this assumption is not the case. For $(N,2,D)$ scenarios it is not necessary to have the two measurements be complementary. There are many possible pairs in general.

\begin{definition}
A {\bf two-measurement Mermin scenario} for $N$ systems (each with $D$ dimensions) and strongly complementary GHZ observable with $\hbox{\begin{tikzpicture} [scale=1,transform shape] 

\def\deltax{0.3} 
\def\deltay{0.5} 


\node [dot, fill=\classicalStructColour] (mult) at (0,0) {};

\end{tikzpicture}}\!$-phase group $G$ is denoted $G(N,2,D)$. Each system has two possible measurement settings:
\begin{enumerate}
\item the first measurement observable is the D-dimensional $X$ observable,
\item and the second measurement observable $B$ is defined by a Z-phase gate applied to $X$.
\end{enumerate}
In  general, the form of $B$ can be specified by the $D$-dimensional Z-phase applied to $X$. This Z-phase is of the form $(1, e^{1b_1}, ..., e^{ib_{D-1}})^{T}$ with $D-1$ degrees of freedom. A two-measurement Mermin scenario thus consists of $V$ variations each with $\beta$ measurements of the $B$ observable.
\end{definition}

\begin{example}
For qubits there is only a single possible phase group: $\mathbb{Z}_2$. A Mermin argument for three qubits (denoted $\mathbb{Z}_2(3,2,2))$ has measurements of the usual $X$ observable and of the $B$ observable that is a phase applied to $X$, i.e. diag$(1,e^{ib_1})X$. In the traditional Mermin scenario $\mathbb{Z}_2(3,2,2)$ from \cite{mermin1990quantum}, we have $V = 3$ and $\beta = 2$.
\end{example}

\noindent The state presented in Diagram \ref{MerminSetupProofLHVeval} will be zero when the \emph{control point} on the left is distinct from the \emph{variations point} on the right. We can characterize this as a condition on $B$ in our two measurement scenario with the following theorem.

\begin{lemma}
\label{correspond}
Measurements $X$ and $B$ allow a $(N,2,D)$ Mermin non-locality argument iff 
\begin{align}
\label{NewCond}
\sum_{j=1}^{D-1}e^{ic_j} = -1, \qquad \mbox{where } c_j = b_j\left(\bigoplus_{i=1}^{V}\beta\right) ,
\end{align}
where the sum in $c_j$ is the group sum for the $\hbox{\begin{tikzpicture} [scale=1,transform shape] 

\def\deltax{0.3} 
\def\deltay{0.5} 


\node [dot, fill=\classicalStructColour] (mult) at (0,0) {};

\end{tikzpicture}}\!$-phase group $G$.
\end{lemma}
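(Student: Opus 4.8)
The plan is to specialise the machinery of Theorems~\ref{thm_MerminNonLocality} and~\ref{thm_MerminLocality} to $\fdHilbCategory$ with $M=2$ measurement choices, and then convert the resulting group-theoretic dichotomy into the analytic identity~\ref{NewCond}. The concrete facts about $\fdHilbCategory$ I would use are: on the $D$-dimensional space, a $\ZdotSym$-phase written in the $\ZdotSym$-basis is, up to normalisation, a vector of the form $(1,e^{ig_1},\dots,e^{ig_{D-1}})$; by strong complementarity (cf.~\cite{CQM-KissingerPhdthesis}) the $\XdotSym$-classical points are exactly the characters of the finite abelian group $A\isom K_{\XdotSym}$ attached to the pair, written in that same basis, so they form an orthogonal basis (Definition~\ref{def_basis}); and $K_{\XdotSym}\subseteq P_{\ZdotSym}$ by Theorem~\ref{thm_PhaseGroup}, its identity element $x_0$ being, in the $\ZdotSym$-basis, the all-ones vector (the trivial character). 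Hence for any $\XdotSym$-classical point $\mathbf c$ one has $\braket{x_0}{\mathbf c}\propto\sum_{j=0}^{D-1}(\mathbf c)_j$ with $(\mathbf c)_0=1$, and by orthogonality this inner product vanishes iff $\mathbf c\neq x_0$.

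First I would reduce the two-measurement scenario to a single $\ZdotSym$-phase. Applying Lemma~\ref{thm_MerminMeasurementGHZstate} to each of the $V$ variations collapses it to a GHZ-type state determined only by the sum of its phases, namely $\beta$ copies of the $B$-phase $\mathbf b=(1,e^{ib_1},\dots,e^{ib_{D-1}})$; Definition~\ref{def_MerminMeasurements} forces this sum $\beta\cdot\mathbf b$ to be a $\XdotSym$-classical point, and since classical points form a group (Theorem~\ref{thm_PhaseGroup}) the $\ZdotSym$-phase $\mathbf c$ accumulated across all $V$ variations, with $j$-th component $c_j=b_j(\bigoplus_{i=1}^{V}\beta)$ as in~\ref{NewCond}, again lies in $K_{\XdotSym}$. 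Thus $\mathbf c$ is automatically some $\XdotSym$-classical point, and the content of the lemma is precisely that the non-locality argument succeeds iff $\mathbf c$ is a \emph{non-trivial} one, i.e.\ $\mathbf c\neq x_0$.

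Next I would pin down this dichotomy by running the constructions of Theorems~\ref{thm_MerminNonLocality} and~\ref{thm_MerminLocality}, now restricted to the two phases $\{0,\mathbf b\}$. If $\mathbf c\neq x_0$, one uses the control/variation construction of the proof of Theorem~\ref{thm_MerminNonLocality}: choosing the number of controls $n_0$ and variations $V$ with $V\equiv 1\pmod k$ ($k$ the exponent of $K_{\XdotSym}$) so that every hidden value is measured a multiple-of-$k$ number of times, the sum of all local-model constraints annihilates the $\XdotSym$-classical contribution of $\Lambda'$ and leaves only $\mathbf c$; testing against $\bra{x_0}$ as in Diagrams~\ref{MerminSetupProof}--\ref{MerminSetupProofLHVeval} then gives $\braket{x_0}{\mathbf c}=0$ for every local model but $\braket{x_0}{x_0}\neq 0$ for the quantum state, so no local hidden variable model exists. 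Conversely, if $\mathbf c=x_0$ then (as $V$ is coprime to $k$) already $\beta\cdot\mathbf b=x_0$, so every control and every variation has outcome-sum $x_0$, and assigning the identity value $x_0$ to all hidden variables $\lambda_{i,X},\lambda_{i,B}$ solves the whole local-model system, exhibiting a local hidden variable model exactly as in the proof of Theorem~\ref{thm_MerminLocality}; so the argument fails.

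Combining the two, the argument works iff $\mathbf c$ is a non-trivial $\XdotSym$-classical point iff $\braket{x_0}{\mathbf c}=0$ iff the entries of $\mathbf c$ sum to zero, i.e.\ $1+\sum_{j=1}^{D-1}e^{ic_j}=0$, which is exactly~\ref{NewCond}; as a consistency check, for $D=2$ this forces $e^{ic_1}=-1$, recovering the familiar fact that for $(N,2,2)$ scenarios only the (anti)complementary choices of $B$ work. The step I expect to be the main obstacle is the third one: making watertight that, once we restrict to the single $B$-phase and to scenarios of the prescribed control/variation shape, local models reach exactly the classical points in a proper subgroup missing $\mathbf c$ — i.e.\ that the multiplicity bookkeeping of Theorem~\ref{thm_MerminNonLocality}'s construction can really be arranged with the given $N$, $V$, $\beta$ and $n_0$ — together with checking the character-and-orthonormality identification in $\fdHilbCategory$ (including the correct normalisation) that underlies the translation to~\ref{NewCond}.
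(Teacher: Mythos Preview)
Your approach is correct and lands on the same core dichotomy as the paper: the argument succeeds iff the accumulated $\ZdotSym$-phase $\mathbf c$ across all variations is a \emph{non-identity} $\XdotSym$-classical point, which by orthogonality of the character basis in $\fdHilbCategory$ is exactly $1+\sum_{j=1}^{D-1}e^{ic_j}=0$.

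The difference is one of economy. The paper's own proof is extremely brief: it simply invokes Diagram~\ref{MerminSetupProofLHVeval} from the proof of Theorem~\ref{thm_MerminNonLocality} to assert that the argument works iff the ``control point'' (the $\XdotSym$-unit, the all-ones vector) and the ``variations point'' (the group sum of the per-variation phase totals) are distinct, i.e.\ orthogonal in $\fdHilbCategory$; it then computes the variations point as the $V$-fold $\oplus$-sum of vectors with entries $e^{i\beta b_j}$ and reads off~\ref{NewCond} from the inner product. In particular, the paper does \emph{not} separately re-derive the ``only if'' direction via Theorem~\ref{thm_MerminLocality} as you do; it treats the control/variation dichotomy as already established by the earlier construction. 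What your longer route buys is a genuinely two-sided argument (you explicitly exhibit the local model when $\mathbf c=x_0$), whereas the paper's proof is really just unpacking orthogonality. The step you flag as the main obstacle---the multiplicity bookkeeping ensuring the local-model contributions cancel---is precisely what the paper sweeps under the reference to Diagram~\ref{MerminSetupProofLHVeval}, so you are not missing anything the paper provides; you are being more careful than it is.
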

\begin{proof}
Diagram \ref{MerminSetupProofLHVeval} implies that the Mermin argument will succeed when the control point and variations point are distinct classical points.  In $\fdHilbCategory$ this precisely means that they are orthogonal vectors.  The vector that represents the control point is given by the $D$-dimensional unit for the $X$ observable, i.e. $1/\sqrt{D}(1,1,...,1)^{T}$.
The variations point is then given by the group sum of other classical points specified by their phase.  The phase for each classical point is given by the sum of phase accumulated by each $B$ measurement.  As there are $\beta$ such measurements in each variation, their sum is given by
\[
\frac{1}{\sqrt{D}}\left(\begin{array}{c}1 \\ e^{i \beta b_1} \\\vdots \\ e^{i \beta b_{D-1}} \end{array}\right)_1\oplus\left(\begin{array}{c}1 \\ e^{i \beta b_1} \\\vdots \\ e^{i \beta b_{D-1}} \end{array}\right)_2\oplus...\oplus\left(\begin{array}{c}1 \\ e^{i \beta b_1} \\\vdots \\ e^{i \beta b_{D-1}} \end{array}\right)_V = \frac{1}{\sqrt{D}}\left(\begin{array}{c}1 \\ e^{ic_1} \\\vdots \\ e^{ic_{D-1}} \end{array}\right),
\]
where the constants $c_j$ are defined as in Equation \ref{NewCond}.
Orthogonality between the control and variations points then requires
\begin{align*}
\left(\begin{array}{cccc}1 & 1 & ... & 1 \end{array}\right)
\left(\begin{array}{c}1 \\ e^{ic_1} \\\vdots \\ e^{ic_{D-1}} \end{array}\right) &= 0 \qquad \Rightarrow \qquad
\sum_{j=1}^{D-1}e^{ic_j} = -1
\end{align*}
This exactly recovers Equation \ref{NewCond} and completes the proof.
\end{proof}

\noindent This gives a necessary and sufficient condition on these measurements to enable a Mermin non-locality test. Note that in Mermin's original scenario  measurement observables were necessarily complementary, but that in general this is not the case.
\begin{theorem}
\label{thm:qubitcase}
In $(3,2,2)$ three qubit Mermin scenarios, the two measurements must be complementary.
\end{theorem}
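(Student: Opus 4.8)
The plan is to specialise Lemma~\ref{correspond} to $D=2$ and observe that the single scalar constraint it produces, together with the validity condition built into Definition~\ref{def_MerminMeasurements}, collapses the one free parameter of the second observable down to the two complementary choices $\pm Y$.

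First I would write the second observable in its general $D=2$ form $B=\operatorname{diag}(1,e^{ib_1})X$, there being exactly $D-1=1$ phase degree of freedom. By Lemma~\ref{correspond} the $(3,2,2)$ scenario yields a Mermin non-locality argument iff $\sum_{j=1}^{D-1}e^{ic_j}=-1$, which for $D=2$ is the single equation $e^{ic_1}=-1$, i.e. $c_1\equiv\pi\pmod{2\pi}$, where $c_1=b_1\cdot\bigl(\bigoplus_{i=1}^{V}\beta\bigr)$ is the $B$-phase accumulated along the variations (sum taken in the $\ZdotSym$-phase group).

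Next I would fix $\beta$. In a two-measurement scenario each of the three systems carries phase $0$ (for $X$) or $b_1$ (for $B$), so in every variation the sum of applied $\ZdotSym$-phases is $\beta b_1$, and Definition~\ref{def_MerminMeasurements} forces this to be a $\ZdotSym$-phase induced by an $\XdotSym$-classical point; for a qubit those form the two-element subgroup $\{0,\pi\}$, so $\beta b_1\in\{0,\pi\}$. Now $\beta b_1=0$ turns a variation into a control, and if moreover $b_1\in\{0,\pi\}$ then $B$ has the same eigenprojectors as $X$, so this is not a genuine two-measurement setting; $\beta=1$ forces $b_1=\pi$, hence $B\equiv X$; and $\beta=3$ makes the governing equation $3x=\pi$ (or $3x=0$) solvable over $\{0,\pi\}$ by $x=\pi$ (resp. $x=0$), so the phase group is only a \emph{trivial} algebraic extension of the classical points and Theorem~\ref{thm_MerminLocality} supplies a local hidden-variable model --- no non-locality argument. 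With $N=3$ this leaves $\beta=2$; validity then gives $2b_1\in\{0,\pi\}$, and discarding $2b_1=0$ (again $B\equiv X$) leaves $2b_1\equiv\pi\pmod{2\pi}$, i.e. $b_1\equiv\pm\pi/2$. Finally $B=\operatorname{diag}(1,\pm i)X$ has eigenvectors $\tfrac{1}{\sqrt2}(1,\pm i)^{T}$, which are exactly the eigenvectors of $Y$; hence $B=\pm Y$ is mutually unbiased with $X$, i.e. the two measurements are complementary. (As a sanity check, $b_1=\pm\pi/2$ makes $c_1\equiv\pi\pmod{2\pi}$ for the admissible odd values of $V$, consistent with Lemma~\ref{correspond}.)

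The hard part will be the third step: being scrupulous that the degenerate configurations --- $b_1\in\{0,\pi\}$, $\beta\in\{0,1,3\}$, and the various equivalent placements of the $B$-slots among the three systems --- really are excluded as genuine three-qubit $(3,2,2)$ non-locality scenarios, so that the surviving ones all have $\beta=2$ and $b_1=\pm\pi/2$; everything else is routine arithmetic $\bmod\,2\pi$.
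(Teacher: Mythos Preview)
Your approach is essentially the paper's: specialise Lemma~\ref{correspond} to $D=2$, reduce to the single constraint $e^{ic_1}=-1$, and solve for $b_1$. The paper, however, simply fixes $V=3$, $\beta=2$ from the outset (mirroring Mermin's original configuration), computes $c_1=\beta b_1\bigl(\bigoplus_{l=1}^{3}1\bigr)=2b_1(3\bmod 2)=2b_1$, and reads off $b_1=\pi/2$ in one line; it does not argue that other values of $\beta$ are inadmissible. Your extra case analysis on $\beta\in\{0,1,2,3\}$, invoking the validity condition of Definition~\ref{def_MerminMeasurements} and Theorem~\ref{thm_MerminLocality} to dispose of $\beta=3$, is sound and makes the claim tighter, but it is additional to what the paper actually proves.
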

\begin{proof}
We have $V=3$, $\beta = 2$, $G=\mathbb{Z}_2$ and $D = 2$. Thus
\begin{align*}
c_j &= \beta b_j \left(\bigoplus^3_{l=1}1\right)= 2b_j (3 \mod 2) = 2b_j
\end{align*}
so that our condition on $B$ becomes
\begin{align*}
\sum_{j=1}^{D-1}e^{ic_j} &= e^{i2b_1} = -1 \Rightarrow
b_1 = \frac{\pi}{2}
\end{align*}
with only a single solution.  This means that in this scenario there is only one measurement that could be used with $X$.  This is the $Y$ observable and it is complementary to $X$.
\end{proof}

\begin{theorem}
\label{thm:noncompl}
For $(N,2,D)$ scenarios the measurements need not be complementary.
\end{theorem}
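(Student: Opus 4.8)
The plan is to prove Theorem~\ref{thm:noncompl} by producing an explicit counterexample and certifying it with Lemma~\ref{correspond}. Since the statement is an existence claim --- that \emph{some} $(N,2,D)$ Mermin non-locality argument uses non-complementary measurements --- one well-chosen scenario suffices. Dimension $D=2$ is already settled the other way by Theorem~\ref{thm:qubitcase}, and in $D=3$ condition~\eqref{NewCond} pins the phases $c_1,c_2$ down to the single pair $\{2\pi/3,4\pi/3\}$; so I would work in dimension $D=4$, the first dimension in which \eqref{NewCond} admits a continuous family of solutions, including measurements genuinely different from $X$.

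First I would fix the shape of the scenario so that the integer multiplier in front of $b_j$ in \eqref{NewCond} is trivial. Exactly as in the proof of Theorem~\ref{thm_MerminNonLocality}, and paralleling the computation $\bigoplus_{l=1}^{3}1=(3\bmod 2)=1$ used in Theorem~\ref{thm:qubitcase}, one takes $\beta=1$ (one $B$-measurement per variation) and a number of variations congruent to $1$ modulo the exponent of the group of $\XdotSym$-classical points, so that $\bigoplus_{i=1}^{V}\beta$ acts as the identity and $c_j=b_j$ for all $j$. Then \eqref{NewCond} reads simply $\sum_{j=1}^{3}e^{ib_j}=-1$, which for any angle $t$ is solved by $b_1=t$, $b_2=t+\pi$, $b_3=\pi$, since $e^{ib_1}+e^{ib_2}+e^{ib_3}=e^{it}-e^{it}-1=-1$ identically in $t$. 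By Lemma~\ref{correspond}, for each such $t$ the pair $(X,B)$ --- with $B$ the $D$-dimensional $X$ measurement preceded by the $\ZdotSym$-phase gate $\operatorname{diag}(1,e^{ib_1},e^{ib_2},e^{ib_3})$ --- yields a bona fide $(N,2,4)$ Mermin non-locality argument, with $N$ as large as the chosen scenario requires.

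The remaining step is to check that for generic $t$ this $B$ is not complementary to $X$. In $\fdHilbCategory$, $X$ and $B$ are complementary precisely when their bases of classical points are mutually unbiased; writing the $X$-classical points in the standard Fourier form $\ket{x_k}=\tfrac{1}{\sqrt{D}}\sum_{m}\omega^{mk}\ket{m}$ with $\omega=e^{2\pi i/D}$, and the $B$-classical points as $P_b\ket{x_k}$ with $P_b=\operatorname{diag}(1,e^{ib_1},\dots,e^{ib_{D-1}})$ and $b_0:=0$, mutual unbiasedness is the requirement $\bigl|\sum_{m=0}^{D-1}e^{ib_m}\omega^{mr}\bigr|=\sqrt{D}$ for every $r$. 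But \eqref{NewCond} with trivial multiplier forces $\sum_{m=0}^{D-1}e^{ib_m}=1+(-1)=0$, so the $r=0$ instance fails and $X,B$ are not mutually unbiased, hence not complementary. To exclude the degenerate possibility $B=X$, one evaluates one further overlap: for $t=\pi/4$ (so $P_b=\operatorname{diag}(1,e^{i\pi/4},-e^{i\pi/4},-1)$) the $r=1$ overlap has modulus squared $\tfrac{1}{16}(4+2\sqrt{2})$, which is neither $0$ nor $1$, so the classical points of $B$ are not merely a relabelling of those of $X$. Thus $(X,B)$ is a genuinely distinct, non-complementary pair admitting an $(N,2,4)$ Mermin non-locality argument, which proves the theorem; more generally the same pairing trick (take the $b_j$ in pairs $(s_k,s_k+\pi)$, adding $b_{D-1}=\pi$ when $D$ is even and one pair $(2\pi/3,4\pi/3)$ when $D$ is odd) gives such a family in every dimension $D\ge 4$. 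The only point needing care is the scenario bookkeeping of the first step --- that parameters $N,V,\beta$ realising the trivial multiplier actually exist --- which is precisely the padding construction already carried out in the proof of Theorem~\ref{thm_MerminNonLocality}.
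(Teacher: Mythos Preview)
Your proposal has a genuine gap at the very first step. You set $\beta=1$, but the definition of a Mermin measurement (Definition~\ref{def_MerminMeasurements}) requires that $\sum_i\alpha_i$ be an $\XdotSym$-classical point for \emph{each} measurement in the scenario; for a variation with $\beta$ copies of the $B$-phase this says precisely that $\beta\cdot b$ lies in $K_{\XdotSym}$. With $\beta=1$ this forces $b$ itself to be an $\XdotSym$-classical point, i.e.\ $b_j=\tfrac{2\pi jk}{D}$ for some $k$, so the phase gate $P_b$ is a power of the generalised Pauli $Z$. But $Z^k$ merely permutes the $X$-eigenvectors, so the resulting $B$ has the same eigenbasis as $X$: you have not produced a second measurement at all, let alone a non-complementary one. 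Your family $(b_1,b_2,b_3)=(t,t+\pi,\pi)$ is a classical point only at the isolated value $t=\pi$ (namely $\ket{x_2}$), and for your chosen $t=\pi/4$ the per-variation sum is not classical, so Lemma~\ref{correspond} does not apply --- its proof explicitly assumes each variation contributes a classical point to the group sum.

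The paper's proof sidesteps this by taking $\beta=D$ (concretely $\beta=3$ in $D=3$, with $V=10$), so that $\beta\cdot b$ is a classical point while $b$ itself is only a non-trivial $D$-th root of one: $b_1=2\pi/9$, $b_2=-2\pi/9$ give $3b=(2\pi/3,4\pi/3)=\ket{x_1}\in K_{\XdotSym}$, yet $P_b$ is not a power of $Z$ and $B$ is genuinely distinct from $X$. In other words, the non-trivial multiplier you were trying to avoid is exactly what creates the freedom in $b$; your dismissal of $D=3$ on the grounds that the $c_j$ are pinned down overlooks that with $\beta>1$ the $b_j$ are not. Your argument can be repaired along the paper's lines by taking $\beta=D$ and choosing $b$ to be a $D$-th root of a non-zero classical point that is not itself classical, then checking non-unbiasedness directly as the paper does.
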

\begin{proof}
We prove this by counterexample.  Consider the three dimensional  ($D=3$) five party Mermin scenario.  The phase group of the non-local state is then given by $G=\mathbb{Z}_3$.
The control measurement is given by five systems all measured by the $X$ observable, i.e. $XXXXX$. The variations are
\begin{align*}
BBBXX \qquad BBXBX \qquad BXBBX \qquad XBBBX \qquad XBXBB \\
BBXXB \qquad BXBXB \qquad XBBXB \qquad BXXBB \qquad XXBBB
\end{align*}
so that $V = 10$ and $\beta = 3$. We calculate the coefficients
\begin{align*} c_j &= \beta b_j \left(\bigoplus^{10}_{l=1}1\right)= 3b_j (10\mod 3) = 3b_j
\end{align*}
Observable $B$ must then satisfy $e^{i3b_1}+e^{i3b_2} = -1$. Any $B$ observable satisfies this condition if  $b_2 = -\frac{i}{3}\log\left[-1-e^{3ib_1}\right]$.
Consider $b_1 = \frac{2\pi}{9} \Rightarrow b_2 =- \frac{2\pi}{9}$ and calculate (for $\omega = e^{2\pi i/3}$):
\begin{align*}
B :: \left(\begin{array}{ccc}1 & 0 & 0 \\ 0 & e^{i2\pi/9} & 0 \\ 0 & 0 & e^{-i2\pi/9} \end{array}\right) \frac{1}{\sqrt{3}}
\left(\begin{array}{ccc}1 & 1 & 1 \\ 1 & \omega & \omega^2 \\ 1 & \omega^2 & \omega^4  \end{array}\right) 
= \frac{1}{\sqrt{3}}\left(\begin{array}{ccc}1 & 1 & 1 \\ e^{2i\pi/9} & e^{8i\pi/9} & e^{-4i\pi/9} \\ e^{-2i\pi/9} & e^{-8i\pi/9} & e^{4i\pi/4} \end{array}\right)
\end{align*}
Observable $B$ is clearly not complementary to $X$ by simply checking the dot products of their basis vectors.
\end{proof}

\noindent Further we can exhibit numerical results that calculate the number of Mermin effective measurement pairs available for a particular scenario. For a given number of parties $N$ we have calculated the number of effective pairs maximized over all viable variation choices.  Typically these maximum values are found for variations where $\beta$ is maximized. Figure 1a shows pair counts for $\mathbb{Z}_2(N,2,2)$ scenarios. Here is appears that the number of effective measurement pairings grows approximately linearly with the number of parties. Figure 1b shows pair counts for the more complex $\mathbb{Z}_3(N,2,3)$ scenarios. It is clear that there are many (in some cases thousands) more available measurement configurations than just those given by complementary observables. This vastly expands the number of experimental setups that will generate, with certainty, a non-locality violation.  Indeed, combining this result with those of Section \ref{section_QSS} opens up a large class of quantum secret sharing protocols based on non-complementary measurements.

\begin{figure}[th]
    \label{counts}
    \centering
    \begin{subfigure}[t]{0.5\textwidth}
        \centering
        \includegraphics[height=1.6in]{./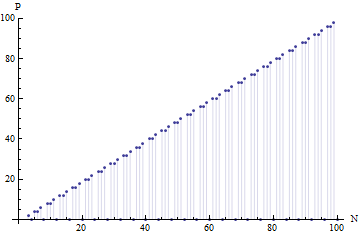}
        \caption{ }
    \end{subfigure}%
    ~ 
    \begin{subfigure}[t]{0.5\textwidth}
        \centering
        \includegraphics[height=1.6in]{./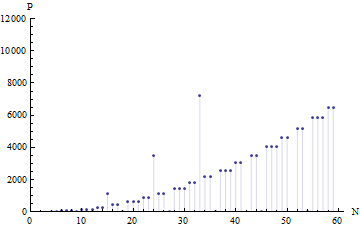}
        \caption{ }
    \end{subfigure}
    \caption{(a) A plot of the number of Mermin effective measurement pairs $P$ vs. the number of parties in the Mermin scenario $N$ for $\mathbb{Z}_2(N,2,2)$ scenarios. (b) A plot of the number of effective pairs for $\mathbb{Z}_3(N,2,3)$ scenarios.  These numbers were obtained by numerically counting solutions to~\eqref{NewCond}.}
\end{figure}

\section{Quantum Secret Sharing: non-locality as a resource}
\label{section_QSS}

The HBB CQ (N,N) family of Quantum Secret Sharing protocols originates in \cite{HBB, HBB2}, and has been abstractly formulated in Categorical Quantum Mechanics \cite{HBBVlad}. Here we generalise their construction to abstract process theories, unearthing a deep connection with Mermin non-locality. 

This protocol requires a pair $(\hbox{\begin{tikzpicture} [scale=1,transform shape] 

\def\deltax{0.3} 
\def\deltay{0.5} 


\node [dot, fill=\classicalStructColour] (mult) at (0,0) {};

\end{tikzpicture}}\!,\hbox{\begin{tikzpicture} [scale=1,transform shape] 

\def\deltax{0.3} 
\def\deltay{0.5} 


\node [dot, fill=\groupStructColour] (mult) at (0,0) {};

\end{tikzpicture}}\!)$ of strongly complementary observables, and an $(N+1)$-partite GHZ state shared by the dealer and the $N$ players. The dealer (and nobody else) knows the (classical) secret, in the form of a $\hbox{\begin{tikzpicture} [scale=1,transform shape] 

\def\deltax{0.3} 
\def\deltay{0.5} 


\node [dot, fill=\groupStructColour] (mult) at (0,0) {};

\end{tikzpicture}}\!$-classical point. The aim of the protocol is for the dealer to broadcast some information to all players on a public classical channel, and for the secret to be deterministically decodeable if only if all $N$ players cooperate. The implementation, graphically summarised in \ref{HBBQSS}, goes as follows:
\begin{enumerate}
\item[1.] the dealer and the players agree on a random set of $\hbox{\begin{tikzpicture} [scale=1,transform shape] 

\def\deltax{0.3} 
\def\deltay{0.5} 


\node [dot, fill=\classicalStructColour] (mult) at (0,0) {};

\end{tikzpicture}}\!$-phases $\alpha_0,\alpha_1,...,\alpha_N$ such that $\sum \alpha_j $ is some $\hbox{\begin{tikzpicture} [scale=1,transform shape] 

\def\deltax{0.3} 
\def\deltay{0.5} 


\node [dot, fill=\groupStructColour] (mult) at (0,0) {};

\end{tikzpicture}}\!$-classical point (call it $a$). This operation is done on a public channel.
\item[2.] the dealer measures his part of the system of the system with phase $\alpha_0$, and uses the resulting $\hbox{\begin{tikzpicture} [scale=1,transform shape] 

\def\deltax{0.3} 
\def\deltay{0.5} 


\node [dot, fill=\groupStructColour] (mult) at (0,0) {};

\end{tikzpicture}}\!$-classical data to encode the plaintext secret (classically adding the secret and the measurement data in the group $K_{\;\hbox{\begin{tikzpicture} [scale=1,transform shape] 

\def\deltax{0.3} 
\def\deltay{0.5} 


\node [dot, fill=\groupStructColour] (mult) at (0,0) {};

\end{tikzpicture}}\!}$; this generalises the original XOR operation, corresponding to $K_{\;\hbox{\begin{tikzpicture} [scale=1,transform shape] 

\def\deltax{0.3} 
\def\deltay{0.5} 


\node [dot, fill=\groupStructColour] (mult) at (0,0) {};

\end{tikzpicture}}\!} = \integersMod{2}$ with addition mod 2) into a classical cyphertext. This operation is done locally and privately by the dealer.
\item[3.] the dealer broadcasts the cyphertext on a public classical channel to the players.
\item[4.] at some later stage, when they all agree to unveil the secret, the $N$ players measure their part of the system, each locally and privately.
\item[5.] all players broadcast the $\hbox{\begin{tikzpicture} [scale=1,transform shape] 

\def\deltax{0.3} 
\def\deltay{0.5} 


\node [dot, fill=\groupStructColour] (mult) at (0,0) {};

\end{tikzpicture}}\!$-classical results of their measurements on a public classical channel.
\item[6.] the broadcast results can be classically added in $K_{\; \hbox{\begin{tikzpicture} [scale=1,transform shape] 

\def\deltax{0.3} 
\def\deltay{0.5} 


\node [dot, fill=\groupStructColour] (mult) at (0,0) {};

\end{tikzpicture}}\!}$, then the result can be added to $a$ and finally to the cyphertext (again in the group $K_{\; \hbox{\begin{tikzpicture} [scale=1,transform shape] 

\def\deltax{0.3} 
\def\deltay{0.5} 


\node [dot, fill=\groupStructColour] (mult) at (0,0) {};

\end{tikzpicture}}\!}$) to recover the original $\hbox{\begin{tikzpicture} [scale=1,transform shape] 

\def\deltax{0.3} 
\def\deltay{0.5} 


\node [dot, fill=\groupStructColour] (mult) at (0,0) {};

\end{tikzpicture}}\!$-classical plaintext secret. 
\end{enumerate}
\begin{equation}\label{HBBQSS}
	\hbox{\begin{tikzpicture}[node distance = 10mm]

\node (a) {};
\node (topdot) [smalldot, below of = a, fill = \Zcolour, yshift = 3mm] {};
\node (antipode) [below left of = topdot,  yshift = 2mm,xshift = -3mm] {};
\node (antipodefiller) [dot, inner sep = 1mm,fill = \Zcolour, below right of = topdot, yshift = 2mm,xshift = +6mm] {$a$};

\node (ldot) [smalldot, below left of = antipode, fill = \Zcolour, yshift = -10mm, xshift = -3mm] {};
\node (rdot) [smalldot, below right of = antipodefiller, fill = \Zcolour, yshift = 2mm, xshift = + 6mm] {};

\node (lcontroldot) [below left of = ldot, yshift = -15mm, xshift = -5mm] {secret};

\node (rcontroldot) [smalldot, below right of = ldot, fill = \Xcolour, yshift = -2mm,xshift = +2mm] {};
\node (alpha0) [smalldot,inner sep = 0.1mm, below right of = rcontroldot, fill = \Zcolour,xshift = -3mm] {$+\alpha_0$};
\node (alpha0star) [smalldot,inner sep = 0.1mm, below left of = rcontroldot, fill = \Zcolour,xshift = +3mm] {$-\alpha_0$};

\node (lvariationdot) [smalldot, below left of = rdot, fill = \Xcolour, yshift = 0mm,xshift = -8mm] {};
\node (alpha1) [smalldot,inner sep = 0.1mm, below right of = lvariationdot, fill = \Zcolour,xshift = -3mm] {$+\alpha_1$};
\node (alpha1star) [smalldot,inner sep = 0.1mm, below left of = lvariationdot, fill = \Zcolour,xshift = +3mm] {$-\alpha_1$};

\node (rvariationdot) [smalldot, below right of = rdot, fill = \Xcolour, yshift = 0mm,xshift = +8mm] {};
\node (alphaN) [smalldot,inner sep = 0.1mm, below right of = rvariationdot, fill = \Zcolour,xshift = -3mm] {$+\alpha_N$};
\node (alphaNstar) [smalldot,inner sep = 0.1mm, below left of = rvariationdot, fill = \Zcolour,xshift = +3mm] {$-\alpha_N$};

\node (lGHZ) [smalldot,fill=\Zcolour,below of = lvariationdot, yshift = -27mm] {}; 
\node (rGHZ) [smalldot,fill=\Zcolour,right of = lGHZ,yshift = 2mm] {}; 

\begin{pgfonlayer}{background}
\draw[->-=.5,out=90,in=270] (topdot) to (a);
\draw[->-=.5,out=45,in=225] (ldot) to (topdot);
\draw[->-=.5,out=135,in=315] (rdot) to (antipodefiller);
\draw[->-=.5,out=135,in=315] (antipodefiller) to (topdot);

\draw[->-=.5,out=90,in=225] (lcontroldot) to (ldot);
\draw[->-=.5,out=135,in=315] (rcontroldot) to (ldot);
\draw[->-=.5,out=45,in=225] (lvariationdot) to (rdot);
\draw[->-=.5,out=135,in=315] (rvariationdot) to (rdot);

\draw[-<-=.5,out=90,in=225] (alpha0star) to (rcontroldot);
\draw[->-=.5,out=90,in=315] (alpha0) to (rcontroldot);

\draw[-<-=.5,out=90,in=225] (alpha1star) to (lvariationdot);
\draw[->-=.5,out=90,in=315] (alpha1) to (lvariationdot);

\draw[-<-=.5,out=90,in=225] (alphaNstar) to (rvariationdot);
\draw[->-=.5,out=90,in=315] (alphaN) to (rvariationdot);

\draw[->-=.5,out=45,in=270] (rGHZ) to (alphaN);
\draw[->-=.5,out=90,in=270] (rGHZ) to (alpha1);
\draw[->-=.5,out=135,in=270] (rGHZ) to (alpha0);

\draw[->-=.5,out=270,in=45] (alphaNstar) to (lGHZ);
\draw[->-=.5,out=270,in=90] (alpha1star) to (lGHZ);
\draw[->-=.5,out=270,in=135] (alpha0star) to (lGHZ);

\end{pgfonlayer}

\node (lend) [below of = alpha0star, yshift = -7mm] {window of attack};
\node (rend) [below of = alphaN, yshift = -5mm, xshift = 5mm] {};

\begin{pgfonlayer}{background}
\draw[dashed,out = 45, in=135] (lend) to (rend);
\end{pgfonlayer}

\end{tikzpicture}}
\end{equation}

\noindent Most of the operations are either done locally and privately (all the measurements and the secret encoding), or broadcast by design on public classical channels, where one assumes that integrity of the message is guaranteed by appropriate classical protocols. There are many additional layers of quantum guarantees coming with this protocol, depending on the level of tampering allowed and on the phases chosen:
\begin{enumerate}
\item[1.] Assume no tampering happens anywhere. Then the refusal of (at least) one player to broadcast his or her measurement result makes the secret totally random to anyone else.
\item[2.] Assume that an attacker is allowed to tamper \textit{only} with the GHZ state, and \textit{before} the phases are chosen. Then the maximum amount of information she can gain is limited by (a) the random distribution on phases and (b) the amount of bias between the possible phases for each system. If $p_{max}$ is the highest probability appearing in the distribution of the phase choices (traditionally uniform with probability $1/^N$)\footnote{Not $1/2^{N+1}$, because of the parity requirement.}, and we let $k := |K_{\;\hbox{\begin{tikzpicture} [scale=1,transform shape] 

\def\deltax{0.3} 
\def\deltay{0.5} 


\node [dot, fill=\groupStructColour] (mult) at (0,0) {};

\end{tikzpicture}}\!}|$ be the dimensionality of the space (traditionally $k=2$ for qubits), then optimal tampering reveals an average of $p_{max}$ $k$-its of classical information (on a secret of 1 $k$-it), in the case where the alternative measurements on each system are mutually unbiased (e.g. the traditional $X,Y$ pair). A more complicated failure expression can be worked out for arbitrary bases. This gain in information, however, is compensated by the introduction of a probability of failure for the entire protocol of $(1-p_{max}) \cdot (1-1/k)$ (again in the mutually unbiased case), which can be detected by the players/dealer via statistical analysis of the outcomes.
\item[3.] The kind of tampering allowed in the previous point does not give significant advantage to the attacker (at least for large number of players), and can be mitigated by appropriate statistical analysis of the measurement outputs; however, there is a stronger form of tampering that we can consider. Assume that the attacker is allowed to tamper with the GHZ state after the phases have been chosen, or even with the measurement devices of the dealer/player themselves, in a way that will ensure he knows the measurement outcomes with certainty beforehand; this is the model of attack assumed by device-independent security, pioneered in \cite{QTC-NoSignallingQKD}. Under this stronger model of attack, we can show that the protocol is secure if and only if the phases chosen by the players are algebraically non-trivial. Indeed, from the point of view of the dealer/players, the attack results in the measurement outcomes having a classical probability distribution: 
\begin{enumerate}
\item[(a)] if the phases are algebraically non-trivial, the probability distribution in the tampered case will never match, because of contextuality, that generated by the un-tampered protocol, and the attack can be detected by statistical analysis of the outcomes.
\item[(b)] if the phases are algebraically trivial, on the other hand, they admit a probabilistic local hidden variable, and the attacker can generate her deterministic outcomes in a way to mimic the probability distribution of the un-tampered protocol.
\end{enumerate}
\end{enumerate}

\noindent To summarise, there are three distinct quantum resources playing complementary roles in the security of this protocol: the entanglement structure of the GHZ state, the amount of mutual complementarity of the available phases, and their algebraic non-triviality. Firstly, the entanglement structure of the GHZ state is the resource ensuring that the refusal of one player to cooperate results, if no tampering is allowed, into the inability for everyone else to recover the secret. Secondly, the amount of mutual complementarity of the available phases, e.g. the complementarity of the $X,Y$ pair, limits the maximum amount of information an attacker can gain by tampering with the state before phases are chose, and the minimum amount of disturbance introduced by the attack. Finally, Mermin non-locality, or equivalently algebraic non-triviality of the chosen phases, is the key resource ensuring device-independent security of the protocol.



\section{Conclusions and future work}
	\label{section_conclusion}
	By using few, simple ingredients --- $\dagger$-SMCs, strongly complementary pairs, GHZ states, phases and classical points --- we have generalised Mermin measurements to arbitrary abstract process theories. 
	We have defined Mermin non-locality, and we have proven that a necessary and sufficient\footnote{Always necessary, sufficient under the assumption that classical points form a basis.} condition for it is the existence of algebraically non-trivial phases, i.e. of phases which satisfy equations that classical points cannot. 
	As a corollary, we have confirmed the well-known result that the stabilizer ZX calculus (and therefore $\fdHilbCategory$) is Mermin non-local, and we have proven that $\fRelCategory$, a toy category of choice for Categorical Quantum Mechanics, is Mermin local (despite its unboundedly large ratio of phases to classical points). 
	This characterisation as the existence of certain phases opens the way to the treatment of Mermin non-locality as a resource in the abstract design of quantum protocols, as we have exemplified with the HBB CQ family of Quantum Secret Sharing protocols. 
	Finally, the application of our general framework to Mermin-type experiment in quantum mechanics allows us to show that, even in the restricted case of two-measurement scenarios, complementary measurements are not necessary, leading to many more potential configurations than previously believed.
	We conclude with a few open questions for investigation:
        \begin{enumerate}
        \item What are the minimal conditions under which algebraically non-trivial phases lead to non-locality?
        \item What is the exact connection between this framework as the framework of Abramsky et al.~\cite{NLC-AvN} for generalised All-versus-Nothing arguments where measurement outcomes are elements of some general field?
        \item Is there a more informative group-theoretic formulation of the algebraic non-triviality used here?
        \item Our analysis focuses on non-locality paradoxes for a kind of GHZ state.  It was recently shown by~\cite{tang2013greenberger} that multipartite non-locality arguments can be constructed from any of a set of qudit graph states that they call GHZ graphs.  What are the connections between these qudit graph states and the phase group formalism we present here?
        \item Which other quantum algorithms depend on Mermin non-locality as a resource to transcend classicality? Which process theories show these characteristics?
        \end{enumerate}

\paragraph{Acknowledgements}
	The authors would like to thank Bob Coecke and Aleks Kissinger for plenty of comments and suggestions, as well as Sukrita Chatterji and Nicol\`o Chiappori for useful discussions and support. Funding from EPSRC (award number OUCL/2013/SG) and Trinity College for the first author and the Rhodes Trust and AFOSR grant FA9550-14-1-0079 for the second is gratefully acknowledged. Both authors contributed equally to this work.
	
\bibliographystyle{eptcs}
\bibliography{bibliography/CategoryTheory,bibliography/CategoricalQM,bibliography/NonLocalityContextuality,bibliography/QuantumComputing,bibliography/ClassicalMechanics,bibliography/LogicComputation,bibliography/Gravitation,bibliography/QFT,bibliography/StatisticalPhysics,bibliography/Misc,bibliography/StefanoGogioso,bibliography/ComputationalLinguistics,bibliography/MerminNonLocal}

\appendix

\section{Preliminary definitions}
\label{app:defs}

In this section we recall some basic background definitions from the literature~\cite{CQM-ZXCalculusSeminal}.  We will use the diagrammatic language of symmetric monoidal categories, c.f.~\cite{selinger2011survey}.

\begin{definition}
\label{def:frobenius}
In a $\dagger$-symmetric monoidal category ($\dagger$-SMC), the pair of a monoid $(A,\hbox{\begin{tikzpicture} [scale=1,transform shape] 

\def\deltax{0.3} 
\def\deltay{0.5} 


\node (mult_label_inl) at (-\deltax,-\deltay) {};
\node (mult_label_inr) at (+\deltax,-\deltay) {};
\node [dot, fill=\groupStructColour] (mult) at (0,0) {};
\node (mult_label_out) at (0,+\deltay) {};

\draw[-] [out=90,in=225](mult_label_inl) to (mult);
\draw[-] [out=90,in=315](mult_label_inr) to (mult);
\draw[-] (mult) to (mult_label_out);

\end{tikzpicture}}\!,\hbox{\begin{tikzpicture} [scale=1,transform shape] 

\def\deltax{0.3} 
\def\deltay{0.5} 


\node [dot, fill=\groupStructColour] (mult) at (0,0) {};
\node (mult_label_out) at (0,+\deltay) {};
\draw[-] (mult) to (mult_label_out);

\end{tikzpicture}}\!)$ and comonoid $(A,\hbox{\begin{tikzpicture} [scale=1,transform shape] 

\def\deltax{0.3} 
\def\deltay{0.5} 


\node (mult_label_outl) at (-\deltax,+\deltay) {};
\node (mult_label_outr) at (+\deltax,+\deltay) {};
\node [dot, fill=\groupStructColour] (mult) at (0,0) {};
\node (mult_label_in) at (0,-\deltay) {};
\draw[-] [in=270,out=135] (mult) to (mult_label_outl);
\draw[-] [in=270,out=45] (mult) to (mult_label_outr);
\draw[-] (mult_label_in) to (mult);

\end{tikzpicture}}\!,\hbox{\begin{tikzpicture} [scale=1,transform shape] 

\def\deltax{0.3} 
\def\deltay{0.5} 


\node [dot, fill=\groupStructColour] (mult) at (0,0) {};
\node (mult_label_in) at (0,-\deltay) {};
\draw[-] (mult_label_in) to (mult);

\end{tikzpicture}}\!)$ form a \textbf{dagger-Frobenius algebra} (or $\dagger$-FA) when the following equation holds:
\begin{equation}
\label{eq:frobenius}
\begin{aligned}
\begin{tikzpicture}[xscale=1, yscale=1]
\node (a) [smalldot, fill = \groupStructColour] at (0.5,2) {};
\draw (0,0) to (0,1) to [out=up, in=\swangle] (a);
\draw (a) to (0.5,3);
\node (b) [smalldot, fill = \groupStructColour] at (1.5,1) {};
\draw (a) to [out=\seangle, in=\nwangle] (b);
\draw (1.5,0) to (b) to [out=\neangle, in=down] (2,2) to (2,3);
\end{tikzpicture}
\end{aligned}
\quad = \quad
\begin{aligned}
\begin{tikzpicture}[xscale=1, yscale=1]
\node (a) [smalldot, fill = \groupStructColour] at (-0.5,2) {};
\draw (0,0) to (0,1) to [out=up, in=\seangle] (a);
\draw (a) to (-0.5,3);
\node (b) [smalldot, fill = \groupStructColour] at (-1.5,1) {};
\draw (a) to [out=\swangle, in=\neangle] (b);
\draw (-1.5,0) to (b) to [out=\nwangle, in=down] (-2,2) to (-2,3);
\end{tikzpicture}
\end{aligned}
\end{equation}
\end{definition}

	\begin{definition}\label{def_QuasiSpecial}
		A \textbf{quasi-special} $\dagger$-Frobenius algebra  $(\hbox{\begin{tikzpicture} [scale=1,transform shape] 

\def\deltax{0.3} 
\def\deltay{0.5} 


\node (mult_label_inl) at (-\deltax,-\deltay) {};
\node (mult_label_inr) at (+\deltax,-\deltay) {};
\node [dot, fill=\groupStructColour] (mult) at (0,0) {};
\node (mult_label_out) at (0,+\deltay) {};

\draw[-] [out=90,in=225](mult_label_inl) to (mult);
\draw[-] [out=90,in=315](mult_label_inr) to (mult);
\draw[-] (mult) to (mult_label_out);

\end{tikzpicture}}\!,\hbox{\begin{tikzpicture} [scale=1,transform shape] 

\def\deltax{0.3} 
\def\deltay{0.5} 


\node [dot, fill=\groupStructColour] (mult) at (0,0) {};
\node (mult_label_out) at (0,+\deltay) {};
\draw[-] (mult) to (mult_label_out);

\end{tikzpicture}}\!,\hbox{\begin{tikzpicture} [scale=1,transform shape] 

\def\deltax{0.3} 
\def\deltay{0.5} 


\node (mult_label_outl) at (-\deltax,+\deltay) {};
\node (mult_label_outr) at (+\deltax,+\deltay) {};
\node [dot, fill=\groupStructColour] (mult) at (0,0) {};
\node (mult_label_in) at (0,-\deltay) {};
\draw[-] [in=270,out=135] (mult) to (mult_label_outl);
\draw[-] [in=270,out=45] (mult) to (mult_label_outr);
\draw[-] (mult_label_in) to (mult);

\end{tikzpicture}}\!,\hbox{\begin{tikzpicture} [scale=1,transform shape] 

\def\deltax{0.3} 
\def\deltay{0.5} 


\node [dot, fill=\groupStructColour] (mult) at (0,0) {};
\node (mult_label_in) at (0,-\deltay) {};
\draw[-] (mult_label_in) to (mult);

\end{tikzpicture}}\!)$ in a $\dagger$-SMC is a Frobenius algebra that satisfies:
		\begin{equation}\label{eqn_QuasiSpecialDef}
			\hbox{\begin{tikzpicture}[node distance=8mm]

\node (center) {};

\node (comult) [smalldot, fill=\groupStructColour] [below of = center, yshift = 5mm] {};
\node (mult) [smalldot, fill=\groupStructColour] [above of = center, yshift = -5mm] {};
\node (out) [above of = center, yshift = +1mm] {};
\node (in) [below of = center, yshift = -1mm] {};

\begin{pgfonlayer}{background}
\draw[-,out=45,in=315] (comult) to (mult);
\draw[-,out=135,in=225] (comult) to (mult);
\draw[-,out=90,in=270] (in) to (comult);
\draw[-,out=90,in=270] (mult) to (out);
\end{pgfonlayer}

\node (equals) [right of = center, xshift = 0mm] {$=$};
\node (center) [right of = equals, xshift = 5mm] {};
\node (out) [above of = center, yshift = +1mm] {};
\node (in) [below of = center, yshift = -1mm] {};
\node (scalar) [scalar] [left of = center, xshift = 3mm] {$N$};

\begin{pgfonlayer}{background}
\draw[-,out=90,in=270] (in) to (out);
\end{pgfonlayer}

\end{tikzpicture}}
		\end{equation}
		for some invertible scalar $N$.
	\end{definition}

These $\dagger$-qsFA are \textbf{commutative} when the monoid and comonid are commutative.
\begin{theorem}[{\cite[Thm 5.1]{CQM-OrthogonalBases}}]
Commutative dagger Frobenius algebras in $\fdHilbCategory$ are orthogonal bases.
\end{theorem}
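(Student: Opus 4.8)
The plan is to show that the commutative algebra underlying such a $\dagger$-FA is semisimple, to identify it with $\complexs^n$, and to read off the orthogonal basis from its idempotents. Write $A$ for the underlying Hilbert space, $m$ for the multiplication, $u$ for the unit, and $e := u(1)$ for the algebra unit; since the dagger on $\fdHilbCategory$ is the canonical one, $\delta := m^\dagger$ and $\epsilon := u^\dagger$. The Frobenius law together with the unit laws makes $A$ a self-dual object, with cup $\delta \circ u$ and cap $\epsilon \circ m$, so $A$ carries a canonical conjugation $a \mapsto \overline{a}$: an antilinear involution with $\overline{m(x \tensor y)} = m(\overline{x}\tensor\overline{y})$ and $\overline{e}=e$, satisfying $\langle \overline{a},\overline{b}\rangle = \overline{\langle a,b\rangle}$ and the standard identity $\langle a,b\rangle = \epsilon(m(\overline{a}\tensor b))$ relating the Hilbert-space inner product to the Frobenius pairing.

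First I would note that left multiplication $L \colon A \to \mathrm{End}(A)$, $L_a(x) := m(a \tensor x)$, is a faithful unital algebra homomorphism, faithfulness being immediate from $L_a(e) = a$. Using $\delta = m^\dagger$, the Frobenius law, and commutativity, a short diagrammatic computation gives $L_a^\dagger = L_{\overline{a}}$; hence $L$ exhibits the commutative $*$-algebra $(A, m)$, with involution $a \mapsto \overline{a}$, as a $*$-subalgebra of $\mathrm{End}(A)$, which is norm-closed because finite-dimensional. Therefore $A$ is a finite-dimensional commutative $C^*$-algebra; in particular it is semisimple, so $A \isom \complexs^n$ as a $*$-algebra, the conjugation going to coordinatewise complex conjugation. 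Verifying $L_a^\dagger = L_{\overline{a}}$ is the one genuinely non-formal step — it is precisely where the dagger and the Frobenius law are combined — and I expect it to be the main obstacle: the preliminary facts about $\overline{(\,\cdot\,)}$ are standard background on $\dagger$-FAs, and the structure theory afterwards is classical.

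Finally I would extract the basis. Under $A \isom \complexs^n$ let $e_1, \dots, e_n$ be the preimages of the coordinate units; these are the primitive idempotents of $(A,m)$, so they are canonically determined by the algebra, and they satisfy $m(e_i \tensor e_j) = \delta_{ij}\, e_i$, $\sum_i e_i = e$, and $\overline{e_i} = e_i$. They form a $\complexs$-basis of $A$, so any linear $f, g$ agreeing on all $e_i$ agree everywhere — condition~2 of Definition~\ref{def_basis}. For condition~1, when $i \neq j$ we have $\langle e_i, e_j\rangle = \epsilon(m(\overline{e_i}\tensor e_j)) = \epsilon(m(e_i \tensor e_j)) = \epsilon(0) = 0$, and each $e_i$ is nonzero. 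Hence $(e_i)_i$ is an orthogonal basis in the sense of Definition~\ref{def_basis}, and the correspondence is in fact a bijection: an orthogonal basis $(\psi_i)_i$ conversely determines $m, u, \delta, \epsilon$ via $m(\psi_i \tensor \psi_j) := \delta_{ij}\psi_i$ (with the evident $u$ and $\epsilon = u^\dagger$), and these are readily checked to satisfy the $\dagger$-FA axioms — orthonormal bases corresponding to special algebras and bases of vectors of equal norm to quasi-special ones in the sense of Definition~\ref{def_QuasiSpecial}.
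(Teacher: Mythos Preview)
The paper does not supply its own proof of this statement: it is quoted verbatim as Theorem~5.1 of \cite{CQM-OrthogonalBases} (Coecke--Pavlovi\'c--Vicary), with no argument given. So there is nothing in the present paper to compare your attempt against.

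That said, your proposal is essentially the argument of the cited source. The route via $L_a^\dagger = L_{\overline{a}}$, embedding $(A,m,\overline{(\,\cdot\,)})$ as a commutative $C^*$-subalgebra of $\mathrm{End}(A)$, invoking semisimplicity to get $A\cong\complexs^n$, and reading off the primitive idempotents as the orthogonal basis, is exactly the CPV strategy; your identification of $L_a^\dagger = L_{\overline{a}}$ as the place where the dagger and the Frobenius law interact is accurate, and your verification of orthogonality via $\langle e_i,e_j\rangle = \epsilon(m(\overline{e_i}\otimes e_j))$ is the standard one. The converse direction you sketch (building the $\dagger$-FA from a given orthogonal basis) is also the CPV construction. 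One small caution: the conjugation $a\mapsto\overline{a}$ you invoke is not quite free---in $\fdHilbCategory$ it is obtained by transporting the transpose (from the Frobenius self-duality) across the Riesz antilinear isomorphism $A\cong A^*$, and checking that this yields an antilinear $*$-involution with $\langle a,b\rangle=\epsilon(m(\overline{a}\otimes b))$ is itself a short computation you should not leave entirely implicit.
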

\noindent The additional condition of specialness (quasi-specialness where $N=1$) for $\dagger$-qsCFA acts as a normalizing condition so that:
\begin{theorem}[{\cite[Sec 6]{CQM-OrthogonalBases}}]
\label{thm:cstructBases}
Commutative dagger Frobenius algebras in $\fdHilbCategory$ in $\fdHilbCategory$ are orthonormal bases.
\end{theorem}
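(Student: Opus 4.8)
The plan is to derive the statement from the theorem stated just above --- that (not necessarily special) commutative $\dagger$-Frobenius algebras in $\fdHilbCategory$ are orthogonal bases --- by showing that the \emph{special} ones are exactly those whose associated orthogonal basis is in fact orthonormal. So the one genuinely new ingredient is a short scalar computation isolating the effect of the specialness axiom.

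First I would take a commutative $\dagger$-Frobenius algebra $(A,\timemultSym,\timeunitSym,\timecomultSym,\timecounitSym)$ on a finite-dimensional Hilbert space $A$, with $\timecomultSym = \timemultSym^{\dagger}$. By the preceding theorem $A$ carries an orthogonal basis $(\ket{e_i})_i$ --- its classical points, the minimal idempotents of the underlying commutative algebra --- which we may normalize so that $\timemultSym$ acts by $\ket{e_i}\tensor\ket{e_i}\mapsto\ket{e_i}$ and $\ket{e_i}\tensor\ket{e_j}\mapsto 0$ for $i\neq j$. Taking adjoints gives $\timecomultSym\ket{e_i} = \braket{e_i}{e_i}^{-1}\,(\ket{e_i}\tensor\ket{e_i})$, so the composite $\timemultSym\circ\timecomultSym$ appearing in the quasi-speciality axiom (Equation \ref{eqn_QuasiSpecialDef}) is the diagonal operator $\ket{e_i}\mapsto\braket{e_i}{e_i}^{-1}\ket{e_i}$. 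Hence $\timemultSym\circ\timecomultSym = N\cdot\mathrm{id}_A$ for a (necessarily positive, invertible) scalar $N$ exactly when every $\ket{e_i}$ has squared norm $1/N$; specialness, the case $N=1$, says precisely that $(\ket{e_i})_i$ is orthonormal. (In the merely quasi-special case one rescales each $\ket{e_i}$ by $\sqrt{N}$ to obtain an orthonormal basis, recovering a genuine $\dagger$-SCFA after renormalizing the multiplication.)

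Conversely, given an orthonormal basis $(\ket{i})_i$ of $A$, the standard construction --- multiplication $\ket{i}\tensor\ket{i}\mapsto\ket{i}$, $\ket{i}\tensor\ket{j}\mapsto 0$, unit $1\mapsto\sum_i\ket{i}$, and the respective adjoints for the comonoid --- is a commutative $\dagger$-Frobenius algebra, and the computation above shows it is special because the $\ket{i}$ are unit vectors. These two assignments are mutually inverse: from a $\dagger$-SCFA we recover its orthonormal basis as the normalized minimal idempotents (equivalently the unit-norm copyable states), and these data uniquely determine $\timemultSym$ and $\timecomultSym$. This gives the asserted correspondence.

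The only hard step is the one imported from the preceding theorem: that an abstract commutative $\dagger$-Frobenius algebra on an object of $\fdHilbCategory$ is semisimple, so that it splits as a product $\complexs^{n}$ and the minimal idempotents both exist and are mutually orthogonal as vectors. I would prove that by equipping $A$ with the antilinear involution $a\mapsto a^{\ast}$ determined by $\braket{a^{\ast}}{b} = \timecounitSym\!\big(\timemultSym(a\tensor b)\big)$ --- diagrammatically, bending an input wire around the cup $\timecomultSym\circ\timeunitSym$ --- noting that commutativity makes $\ast$ an algebra involution and that positivity of the dagger inner product forces the $C^{\ast}$-identity $\|a^{\ast}a\| = \|a\|^{2}$; finite-dimensional Gelfand duality then yields $A\iso\complexs^{n}$, with geometric orthogonality of the idempotents falling out of their self-adjointness with respect to this involution. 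That positivity argument is exactly where the Hilbert-space structure is used rather than an arbitrary $\dagger$-SMC; everything downstream, including the present theorem, is scalar bookkeeping.
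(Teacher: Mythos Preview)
The paper does not supply a proof of this theorem at all: it is merely cited from \cite[Sec.~6]{CQM-OrthogonalBases}, with the sentence preceding it indicating that specialness is the relevant extra hypothesis (the stated wording omits ``special'', which you correctly inferred from context). Your argument is correct and is essentially the one found in that reference --- derive the orthogonal basis from the general commutative $\dagger$-FA result via semisimplicity, then compute that $\timemultSym\circ\timecomultSym$ acts diagonally with eigenvalues $\braket{e_i}{e_i}^{-1}$, so specialness is exactly orthonormality. Your closing sketch of the semisimplicity step via the induced $C^\ast$-involution and finite-dimensional Gelfand duality is also the standard route taken there.
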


\begin{definition}
\label{def:copyables}
The set of \textbf{classical states} $K_{\hbox{\begin{tikzpicture} [scale=1,transform shape] 

\def\deltax{0.3} 
\def\deltay{0.5} 


\node [dot, fill=\groupStructColour] (mult) at (0,0) {};

\end{tikzpicture}}\!}$ for a $\dagger$-Frobenius algebra  $(A, \hbox{\begin{tikzpicture} [scale=1,transform shape] 

\def\deltax{0.3} 
\def\deltay{0.5} 


\node (mult_label_inl) at (-\deltax,-\deltay) {};
\node (mult_label_inr) at (+\deltax,-\deltay) {};
\node [dot, fill=\groupStructColour] (mult) at (0,0) {};
\node (mult_label_out) at (0,+\deltay) {};

\draw[-] [out=90,in=225](mult_label_inl) to (mult);
\draw[-] [out=90,in=315](mult_label_inr) to (mult);
\draw[-] (mult) to (mult_label_out);

\end{tikzpicture}}\!,\hbox{\begin{tikzpicture} [scale=1,transform shape] 

\def\deltax{0.3} 
\def\deltay{0.5} 


\node [dot, fill=\groupStructColour] (mult) at (0,0) {};
\node (mult_label_out) at (0,+\deltay) {};
\draw[-] (mult) to (mult_label_out);

\end{tikzpicture}}\!,\hbox{\begin{tikzpicture} [scale=1,transform shape] 

\def\deltax{0.3} 
\def\deltay{0.5} 


\node (mult_label_outl) at (-\deltax,+\deltay) {};
\node (mult_label_outr) at (+\deltax,+\deltay) {};
\node [dot, fill=\groupStructColour] (mult) at (0,0) {};
\node (mult_label_in) at (0,-\deltay) {};
\draw[-] [in=270,out=135] (mult) to (mult_label_outl);
\draw[-] [in=270,out=45] (mult) to (mult_label_outr);
\draw[-] (mult_label_in) to (mult);

\end{tikzpicture}}\!,\hbox{\begin{tikzpicture} [scale=1,transform shape] 

\def\deltax{0.3} 
\def\deltay{0.5} 


\node [dot, fill=\groupStructColour] (mult) at (0,0) {};
\node (mult_label_in) at (0,-\deltay) {};
\draw[-] (mult_label_in) to (mult);

\end{tikzpicture}}\!)$ are all states $j:I\to A$ such that:
\begin{equation}
\label{eq:copy}
\begin{aligned}
\begin{tikzpicture}[yscale=-1]
\node [kpoint] (s) at (1.25,3) {$j$};
\draw (2,0) to [out=up, in=\seangle] (1.25,1.5);
\draw (0.5,0) to [out=up, in=\swangle] (1.25, 1.5);
\draw (1.25,1.5) to (s);
\node [smalldot, fill = \groupStructColour] at (1.25,1.5) {};
\end{tikzpicture}
\end{aligned}
\quad=\quad
\begin{aligned}
\begin{tikzpicture}
\node (a) [kpoint] at (2.5,0) {$j$};
\node (b) [kpoint] at (0,0) {$j$};
\draw (a) to (2.5,2);
\draw (b) to (0,2);
\end{tikzpicture}
\end{aligned}
\end{equation}
\end{definition}

	    We now define strong complementarity, the first fundamental ingredient of Mermin measurements.
	\begin{definition}\label{def_StrongComplementarity}
		A pair of $\dagger$-qSFAs $(\hbox{\begin{tikzpicture} [scale=1,transform shape] 

\def\deltax{0.3} 
\def\deltay{0.5} 


\node (mult_label_inl) at (-\deltax,-\deltay) {};
\node (mult_label_inr) at (+\deltax,-\deltay) {};
\node [dot, fill=\groupStructColour] (mult) at (0,0) {};
\node (mult_label_out) at (0,+\deltay) {};

\draw[-] [out=90,in=225](mult_label_inl) to (mult);
\draw[-] [out=90,in=315](mult_label_inr) to (mult);
\draw[-] (mult) to (mult_label_out);

\end{tikzpicture}}\!, \hbox{\begin{tikzpicture} [scale=1,transform shape] 

\def\deltax{0.3} 
\def\deltay{0.5} 


\node [dot, fill=\groupStructColour] (mult) at (0,0) {};
\node (mult_label_out) at (0,+\deltay) {};
\draw[-] (mult) to (mult_label_out);

\end{tikzpicture}}\!, \hbox{\begin{tikzpicture} [scale=1,transform shape] 

\def\deltax{0.3} 
\def\deltay{0.5} 


\node (mult_label_outl) at (-\deltax,+\deltay) {};
\node (mult_label_outr) at (+\deltax,+\deltay) {};
\node [dot, fill=\groupStructColour] (mult) at (0,0) {};
\node (mult_label_in) at (0,-\deltay) {};
\draw[-] [in=270,out=135] (mult) to (mult_label_outl);
\draw[-] [in=270,out=45] (mult) to (mult_label_outr);
\draw[-] (mult_label_in) to (mult);

\end{tikzpicture}}\!, \hbox{\begin{tikzpicture} [scale=1,transform shape] 

\def\deltax{0.3} 
\def\deltay{0.5} 


\node [dot, fill=\groupStructColour] (mult) at (0,0) {};
\node (mult_label_in) at (0,-\deltay) {};
\draw[-] (mult_label_in) to (mult);

\end{tikzpicture}}\!)$ and $(\hbox{\begin{tikzpicture} [scale=1,transform shape] 

\def\deltax{0.3} 
\def\deltay{0.5} 


\node (mult_label_outl) at (-\deltax,+\deltay) {};
\node (mult_label_outr) at (+\deltax,+\deltay) {};
\node [dot, fill=\classicalStructColour] (mult) at (0,0) {};
\node (mult_label_in) at (0,-\deltay) {};
\draw[-] [in=270,out=135] (mult) to (mult_label_outl);
\draw[-] [in=270,out=45] (mult) to (mult_label_outr);
\draw[-] (mult_label_in) to (mult);

\end{tikzpicture}}\!, \hbox{\begin{tikzpicture} [scale=1,transform shape] 

\def\deltax{0.3} 
\def\deltay{0.5} 


\node [dot, fill=\classicalStructColour] (mult) at (0,0) {};
\node (mult_label_in) at (0,-\deltay) {};
\draw[-] (mult_label_in) to (mult);

\end{tikzpicture}}\!, \hbox{\begin{tikzpicture} [scale=1,transform shape] 

\def\deltax{0.3} 
\def\deltay{0.5} 


\node (mult_label_inl) at (-\deltax,-\deltay) {};
\node (mult_label_inr) at (+\deltax,-\deltay) {};
\node [dot, fill=\classicalStructColour] (mult) at (0,0) {};
\node (mult_label_out) at (0,+\deltay) {};
\draw[-] [out=90,in=225](mult_label_inl) to (mult);
\draw[-] [out=90,in=315](mult_label_inr) to (mult);
\draw[-] (mult) to (mult_label_out);

\end{tikzpicture}}\!, \hbox{\begin{tikzpicture} [scale=1,transform shape] 

\def\deltax{0.3} 
\def\deltay{0.5} 


\node [dot, fill=\classicalStructColour] (mult) at (0,0) {};
\node (mult_label_out) at (0,+\deltay) {};
\draw[-] (mult) to (mult_label_out);

\end{tikzpicture}}\!)$  is \textbf{strongly complementary} if it satisfies the following \textbf{bialgebra equation} (\ref{eqn_bialgebraEqns}) and \textbf{coherence equations} (\ref{eqn_unitCopyEqns}):
		\begin{equation}\label{eqn_bialgebraEqns}
			\hbox{\begin{tikzpicture}[node distance=7.5mm]

\node (center) {};

\node (algebraTop) [smalldot, fill = \classicalStructColour]   
	[above of = center, yshift = -5mm]{};
\node (Hout) [above of = algebraTop, xshift = -5mm] {};
\node (Tout) [above of = algebraTop, xshift = +5mm] {};

\node (algebraBot) [smalldot, fill = \groupStructColour]  
	[below of = center, yshift = +5mm]{};
\node (Hin) [below of = algebraBot, xshift = -5mm] {};
\node (Tin) [below of = algebraBot, xshift = +5mm] {};

\begin{pgfonlayer}{background}
\draw[-,out=90,in=270] (algebraBot) to (algebraTop);
\draw[-,out=135,in=270] (algebraTop) to (Hout);
\draw[-,out=45,in=270] (algebraTop) to (Tout);
\draw[-,out=90,in=225] (Hin) to (algebraBot);
\draw[-,out=90,in=315] (Tin) to (algebraBot);
\end{pgfonlayer}

\node (equals) [right of = center, xshift = 0mm]{$=$};

\node (center) [right of = equals, xshift = 0mm] {};

\node (algebraTop) [smalldot, fill = \groupStructColour]   
	[above of = center, yshift = -5mm]{};
\node (Hout) [above of = algebraTop, xshift = 0mm] {};
\node (timemult) [smalldot, fill=\groupStructColour] 
	[right of = algebraTop, xshift = 0mm] {}; 
\node (Tout) [above of = timemult, xshift = 0mm] {};

\node (algebraBot) [smalldot, fill = \classicalStructColour]  
	[below of = center, yshift = +5mm]{};
\node (Hin) [below of = algebraBot, xshift = 0mm] {};
\node (timediag) [smalldot, fill=\classicalStructColour] 
	[right of = algebraBot, xshift = 0mm] {}; 
\node (Tin) [below of = timediag, xshift = 0mm] {};

\begin{pgfonlayer}{background}
\draw[-,out=135,in=225] (algebraBot) to (algebraTop);
\draw[-,out=90,in=270] (algebraTop) to (Hout);
\draw[-,out=90,in=270] (Hin) to (algebraBot);
\draw[-,out=90,in=270] (Tin) to (timediag);
\draw[-,out=90,in=270] (timemult) to (Tout);
\draw[-,out=45,in=315] (timediag) to (timemult);
\draw[-,out=135,in=315] (timediag) to (algebraTop);
\draw[-,out=45,in=225] (algebraBot) to (timemult);
\end{pgfonlayer}

\end{tikzpicture}}
		\end{equation}
		\begin{equation}\label{eqn_unitCopyEqns}
			\hbox{\begin{tikzpicture}[node distance=8mm]

\node (center) {};

\node (algebraTop) [smalldot, fill = \classicalStructColour]   
        [above of = center, yshift = -4mm]{};
\node (Hout) [above of = algebraTop, xshift = -3mm,yshift = 0mm] {};
\node (Tout) [above of = algebraTop, xshift = +3mm,yshift = 0mm] {};

\node (algebraBot) [smalldot, fill = \groupStructColour]  
        [below of = center, yshift = +8mm]{};

\begin{pgfonlayer}{background}
\draw[-,out=90,in=270] (algebraBot) to (algebraTop);
\draw[-,out=135,in=270] (algebraTop) to (Hout);
\draw[-,out=45,in=270] (algebraTop) to (Tout);
\end{pgfonlayer}

\node (equals) [right of = center, xshift = -1mm,yshift = 5mm]{$=$};

\node (center) [right of = equals, xshift = -3mm,yshift = -5mm] {};

\node (algebraTop) [smalldot, fill = \groupStructColour]   
        [below of = center, yshift = +10mm]{};
\node (Hout) [above of = algebraTop, yshift = +2mm] {};
\node (timemult) [smalldot, fill=\groupStructColour] 
        [right of = algebraTop, xshift = -4mm] {}; 
\node (Tout) [above of = timemult, yshift = +2mm] {};

\begin{pgfonlayer}{background}
\draw[-,out=90,in=270] (algebraTop) to (Hout);
\draw[-,out=90,in=270] (timemult) to (Tout);
\end{pgfonlayer}

\node (spacer) [right of = center, xshift = 3mm,yshift = 5mm]{};

\node (center) [right of = spacer, yshift = -5mm] {};

\node (algebraTop) [smalldot, fill = \groupStructColour]   
        [below of = center, yshift = +14mm]{};
\node (Hout) [below of = algebraTop, xshift = -3mm] {};
\node (Tout) [below of = algebraTop, xshift = +3mm] {};

\node (algebraBot) [smalldot, fill = \classicalStructColour]  
        [above of = center, yshift = +2mm]{};

\begin{pgfonlayer}{background}
\draw[-,out=270,in=90] (algebraBot) to (algebraTop);
\draw[-,out=225,in=90] (algebraTop) to (Hout);
\draw[-,out=315,in=90] (algebraTop) to (Tout);
\end{pgfonlayer}

\node (equals) [right of = center, xshift = -1mm,yshift = 5mm]{$=$};

\node (center) [right of = equals, xshift = -3mm, yshift = -7mm] {};

\node (algebraTop)  
        [below of = center, yshift = +8mm]{};
\node (Hout) [smalldot, fill = \classicalStructColour]  
        [above of = algebraTop, yshift = 2mm] {};
\node (timemult) 
        [right of = algebraTop, xshift = -4mm] {}; 
\node (Tout) [smalldot, fill=\classicalStructColour] 
        [above of = timemult, yshift = 2mm] {};

\begin{pgfonlayer}{background}
\draw[-,out=90,in=270] (algebraTop) to (Hout);
\draw[-,out=90,in=270] (timemult) to (Tout);
\end{pgfonlayer}

\end{tikzpicture}}
		\end{equation}
	\end{definition}
	\noindent From now on we shall refer to the structures by their colour, i.e. by $\hbox{\begin{tikzpicture} [scale=1,transform shape] 

\def\deltax{0.3} 
\def\deltay{0.5} 


\node [dot, fill=\classicalStructColour] (mult) at (0,0) {};

\end{tikzpicture}}\!$ and $\hbox{\begin{tikzpicture} [scale=1,transform shape] 

\def\deltax{0.3} 
\def\deltay{0.5} 


\node [dot, fill=\groupStructColour] (mult) at (0,0) {};

\end{tikzpicture}}\!$. A more familiar presentation of strongly complementary pairs can be given by observing that they correspond (when both structures have enough classical points to form a basis) to pairs of non-degenerate observables obeying the finite-dimensional Weyl form of the Canonical Commutation Relations \cite{StefanoGogioso-CategoricalSemanticsSchrodingersEqn}. Also, we have the following characterisation of strong complementarity in terms of group actions on classical points. 

	\begin{theorem}
		Let $\hbox{\begin{tikzpicture} [scale=1,transform shape] 

\def\deltax{0.3} 
\def\deltay{0.5} 


\node [dot, fill=\groupStructColour] (mult) at (0,0) {};

\end{tikzpicture}}\!$ and $\hbox{\begin{tikzpicture} [scale=1,transform shape] 

\def\deltax{0.3} 
\def\deltay{0.5} 


\node [dot, fill=\classicalStructColour] (mult) at (0,0) {};

\end{tikzpicture}}\!$ be a pair of $\dagger$-qSFAs. If the pair is strongly complementary, then $(\ZmultSym,\ZunitSym)$ \vspace{-3pt} acts as a group on the classical points of $\hbox{\begin{tikzpicture} [scale=1,transform shape] 

\def\deltax{0.3} 
\def\deltay{0.5} 


\node [dot, fill=\groupStructColour] (mult) at (0,0) {};

\end{tikzpicture}}\!$. We denote this group as K$_{\hbox{\begin{tikzpicture} [scale=1,transform shape] 

\def\deltax{0.3} 
\def\deltay{0.5} 


\node [dot, fill=\groupStructColour] (mult) at (0,0) {};

\end{tikzpicture}}\!}$ Conversely, if the $\hbox{\begin{tikzpicture} [scale=1,transform shape] 

\def\deltax{0.3} 
\def\deltay{0.5} 


\node [dot, fill=\groupStructColour] (mult) at (0,0) {};

\end{tikzpicture}}\!$-classical points form a basis and $(\ZmultSym,\ZunitSym)$ acts as a group on them, then the pair is strongly complementary.
	\end{theorem}
	\begin{proof}
		See \cite{StefanoGogioso-RepTheoryCQM}.
	\end{proof}

    Phases are the other fundamental ingredient of Mermin measurements. 
	\begin{definition}\label{def_Phases} 
		A \textbf{phase state} for a $\dagger$-qSCFA $\hbox{\begin{tikzpicture} [scale=1,transform shape] 

\def\deltax{0.3} 
\def\deltay{0.5} 


\node [dot, fill=\classicalStructColour] (mult) at (0,0) {};

\end{tikzpicture}}\!$ is a pure state $\ket{\alpha}$ such that: 
		\begin{equation}\label{eqn_Zphasestate}
			\hbox{\begin{tikzpicture}[node distance=8mm]

\node (center) [smalldot, fill = \Zcolour] {};

\node (map) [right of = center, xshift = -4mm] {};
\node (out) [above of = center, yshift = -2mm] {};
\node (in) [kpoint] [below of = map, yshift = +1mm] {$\alpha$};

\node (mapconj) [left of = center, xshift = +4mm] {};
\node (inconj) [kpointconj] [below of = mapconj, yshift = +1mm] {$\alpha$};

\begin{pgfonlayer}{background}
\draw[->-=.5,out=90,in=315] (in) to (center);
\draw[->-=.5,out=90,in=270] (center) to (out);
\draw[-<-=.5,out=90,in=225] (inconj) to (center);
\end{pgfonlayer}

\node (equals) [right of = center, xshift = 5mm]{$=$};

\node (center) [right of = equals, xshift = 0mm]{};
\node (unit) [smalldot, fill = \Zcolour] [below of = center, yshift = +1mm] {};

\node (out) [above of = center, yshift = -2mm] {};

\begin{pgfonlayer}{background}
\draw[->-=.5,out=90,in=270] (unit) to (out);
\end{pgfonlayer}

\end{tikzpicture}





		\end{equation}
		A \textbf{phase} is a map in the following form, where $\ket{\alpha}$ is a phase state for $(\hbox{\begin{tikzpicture} [scale=1,transform shape] 

\def\deltax{0.3} 
\def\deltay{0.5} 


\node [dot, fill=\classicalStructColour] (mult) at (0,0) {};

\end{tikzpicture}}\!)$:
		\begin{equation}\label{eqn_Zphase}
		\hbox{\begin{tikzpicture}[node distance = 10mm,scale=1.5]
                \node [dot,  fill=\classicalStructColour, inner sep=1] (0) at (-1, -0) {$\alpha$};
                \node (1) at (-1, -1) {};
                \node (2) at (-1, 1) {};
                \node (3) at (0, -0) {$:=$};
                \node [smalldot,  fill=\classicalStructColour] (4) at (1, -0) {};
                \node (5) at (1, -1) {};
                \node (6) at (1, 1) {};
                \node [kpoint] (7) at (2, -0.5) {$\alpha$};

        \begin{pgfonlayer}{background}
                \draw [->-=.5, bend right=45, looseness=0.75] (7) to (4);
                \draw [->-=.5] (5.center) to (4);
                \draw [->-=.5] (4) to (6.center);
                \draw [->-=.5] (1.center) to (0);
                \draw [->-=.5] (0) to (2.center);
        \end{pgfonlayer}
\end{tikzpicture}}
		\end{equation}
	\end{definition}
	In particular, elements of $K_{\hbox{\begin{tikzpicture} [scale=1,transform shape] 

\def\deltax{0.3} 
\def\deltay{0.5} 


\node [dot, fill=\groupStructColour] (mult) at (0,0) {};

\end{tikzpicture}}\!}$ are phase states, as Theorem \ref{thm_PhaseGroup} explains.

\end{document}